\newcommand{\goodprivacy}{{induced subspace differentially private}\xspace}
\newcommand{\goodprivacynoun}{{induced subspace differential privacy}\xspace}
\newcommand{\fang}[1]{{\color{green!70!black}{ \bf \sf \scriptsize Fang:} \sf \scriptsize #1}}
\newcommand{\jie}[1]{{\color{blue!70!black}{ \bf \sf \scriptsize Jie:} \sf \scriptsize #1}}
\newcommand{\robin}[1]{{\color{red!70!black}{ \bf \sf \scriptsize Robin:} \sf \scriptsize #1}}
\renewcommand{\fang}[1]{}
\renewcommand{\jie}[1]{}
\renewcommand{\robin}[1]{}
\newtheorem{theorem}{Theorem}
\newtheorem{lemma}[theorem]{Lemma}
\newtheorem{proposition}[theorem]{Proposition}
\newtheorem{corollary}[theorem]{Corollary}
\theoremstyle{definition}
\newtheorem{definition}[theorem]{Definition}
\newtheorem{example}[theorem]{Example}
\newcommand{\A}{{\mathcal{A}}}
\newcommand{\x}{{\mathbf{x}}}
\DeclareMathOperator{\hist}{hist}
\DeclareMathOperator{\err}{err}
\DeclareMathOperator{\opt}{opt}
\DeclareMathOperator{\E}{{\mathbb{E}}}
\DeclareMathOperator{\Lap}{Lap}
\date{}
\begin{document}

\title{Subspace Differential Privacy\footnote{A short version of this paper is appeared in AAAI'22. The authors gratefully acknowledge research support from the National Science Foundation (OAC-1939459, CCF-2118953, CCF-1934924, DMS-1916002, and ISS-2007887).}}
\author{Jie Gao\thanks{Rutgers University, \texttt{jg1555@cs.rutgers.edu}}\and Ruobin Gong\thanks{ Rutgers University, \texttt{rg915@stat.rutgers.edu}}\and Fang-Yi Yu\thanks{Harvard University, \texttt{fangyiyu@seas.harvard.edu}}}
\maketitle
\begin{abstract}
    
Many data applications have certain invariant constraints due to practical needs. Data curators who employ differential privacy need to respect such constraints on the sanitized data product as a primary utility requirement. Invariants challenge the formulation, implementation and interpretation of privacy guarantees. We propose \emph{subspace differential privacy}, to honestly characterize the dependence of the sanitized output on confidential aspects of the data. We discuss two design frameworks that convert well-known differentially private mechanisms, such as the Gaussian and the Laplace mechanisms, to subspace differentially private ones that respect the invariants specified by the curator. For linear queries, we discuss the design of near optimal mechanisms that minimize the mean squared error. Subspace differentially private mechanisms rid the need for post-processing due to invariants, preserve transparency and statistical intelligibility of the output, and can be suitable for distributed implementation. We showcase the proposed mechanisms on the 2020 Census Disclosure Avoidance demonstration data, and a spatio-temporal dataset of mobile access point connections on a large university campus.

% \robin{optimal correlated gaussian...evaluated on Census demonstration and mobility data, and distributed privatization}
% are preserved, facilitating principled downstream statistical analysis. 
\end{abstract}

\section{Introduction}\label{sec:intro}

% {\color{red} Title brainstorm:
% \begin{itemize}
%     \item Truth-Constrained Differential Privacy
%     \item Subspace Differential Privacy
% \end{itemize}

% }

%\robin{Can we make all citations appear as numbers? Or does that violate the AAAI format?}\fang{After some search, I think we cannot change this.}

\noindent\textbf{Invariants: a  challenge for data privacy}
Data publication that satisfies differential privacy carries the formal mathematical guarantee that an adversary cannot effectively tell the difference, in the probabilistic sense, when two databases differ in only one entry. %\fang{we are using bounded differential privacy, so we may change the framing}\jie{fixed it} 
The extent of privacy protection under differential privacy is quantified by the privacy loss budget parameters, such as in $\epsilon$-differential privacy and $(\epsilon, \delta)$-differential privacy \citep{dwork2006calibrating}.
%The utility of the sanitized data products may be evaluated in specific use cases using loss metrics on the inferential target such as bias, variance, mean squared error and so on, as functions of the privacy loss budget, hence providing potential of systematic assessment of the privacy-utility trade-off.
While the differential privacy guarantee is rigorously formulated with the probability language, its construction does not naturally mingle with hard and truthful constraints,  called \emph{invariants} \cite{ashmead2019effective},  that need to be imposed onto the sanitized data product, often as a primary utility requirement.

An important use case in which the challenge for privacy arises from invariants is the new Disclosure Avoidance System (DAS) of the 2020 Decennial Census \cite{abowd2018us}. The new DAS tabulates noise-infused, differentially private counts into multi-way contingency tables at various geographic resolutions, from the aggregated state and county levels to specific Census blocks. Due to the Census Bureau's constitutional mandate and its responsibilities as the official statistics agency of the United States, all data products must be preserved in such a way that certain aspects of their values are \emph{exactly} as enumerated. These invariants include (and are not limited to) population totals at the state level, counts of total housing units, as well as other group quarter facilities at the block level. Straightforward tabulations of the noisy measurements are most likely inconsistent with the mandated invariants. 

A common method to impose invariants on a differentially private noisy query is via post-processing using distance minimization \cite{abowd2019census,ashmead2019effective}.
%\fang{do we have more reference?}\jie{I added [1], maybe we can also add one sentence to explain what is "distance minimization"?}
 The resulting query is the solution to an optimization task, one that minimizes a pre-specified distance between the unconstrained query and the invariant-compliant space.  There are two major drawbacks to this approach. First, post-processing may introduce systematic bias into the query output. Particularly troubling is that the source of such bias is poorly understood \cite{DBLP:journals/corr/abs-2010-04327}, in part due to the highly data-dependent nature of the post-processing procedure, and the lack of a transparent probabilistic description. The TopDown algorithm \cite{abowd2019census}, employed by the Census DAS to impose invariants on the noisy measurements, exhibits a notable bias that it tends to associate larger counts with positive errors, whereas smaller counts with negative errors, when the total count is held as invariant. Figure~\ref{fig:census_trend} illustrates this phenomenon using the November 2020 vintage Census demonstration data \cite{ipums2020das}. For all states and state-level territories with more than five counties   (i.e. excluding D.C., Delaware, Hawaii and Rhode Island), a simple regression is performed between the county-level DAS errors and the log true county population sizes. Of the 48 regressions, 37 result in a negative slope estimate, out of which 11 are statistically significant at $\alpha = 0.01$ level (red circles in the left panel), indicating a systematic negative association between the DAS errors and the true counts. The bias trend is clearly visible in the right panel among the DAS errors (red squares) associated with the counties of Illinois, ordered by increasing population size. As \citet{DBLP:journals/corr/abs-2010-04327} discussed, the bias exhibited in the demonstration data is attributed to the non-negativity constraints imposed on the privatized counts.
 
 %\fang{Reviewer suggest \cite{DBLP:journals/corr/abs-2010-04327} as a related work}
 %{\color{blue} In Figure~\ref{fig:census_trend}, the proposed projected Laplace mechanism generates unbiased privacy errors (the boxplots) while respecting the state total invariant. } \robin{I added this sentence but don't know if I like it. What do you think?}\jie{I like it but for the benefit of space, maybe we skip this and only keep in the caption.}

The second, and more fundamental, drawback is that imposing invariants that are true aspects of the confidential data may incur additional privacy leakage, even if the invariants are also publicly released \cite{gong2020congenial}. When conveyed to data users and contributors, the narrative that the invariants are imposed by ``post-processing'' may lend to the erroneous interpretation that no additional leakage would occur. Care needs to be taken to explain that whenever nontrivial invariants are observed, the usual  $(\epsilon,\delta)$-differential privacy guarantee cannot be understood in its fullest sense.

%\robin{other examples where truth constraints are a problem.}

The need to impose invariants arises in application areas involving the monitoring of stochastic events in structured spatio-temporal databases.
In some cases, there are requirements to report accurate counts (service requests \cite{nyc2021} or traffic volumes \cite{sui2016characterizing,yang2019revisiting,yang2020lbsn2vec,wang2020distributed}). In other cases, there are invariants that can be derived from external common sense knowledge --e.g., the number of vehicles entering and leaving a tunnel should be the same (when there are no other exits or parking spaces). An adversary may potentially use such information to reverse engineer the privacy protection perturbation mechanisms~\cite{rezaei19privacy,rezaei21influencers}. 
%For example, municipal agencies need to document periodical total service requests within city districts \cite{nyc2021}, while keeping confidential the exact location, time and identity of the requester. Mobility data, extracted from  phone applications or GPS devices of commute vehicles, need to deliver accurate daily or monthly traffic volumes without revealing personal trajectories \cite{sui2016characterizing,yang2019revisiting,yang2020lbsn2vec,wang2020distributed}. 
Invariants pose a new challenge to both the curators and the users of private data products, prompting its recognition as a new source of compromise to privacy that stems from external mandates.

% the output data product from the Bureau's pipeline tends to associate
%In order for the data products to satisfy invariants, the DAS employs an additional procedure,  to impose these truth constraints on the differentially private noisy measurements. 
%Unfortunately, the post-processing within the DAS introduces systematic bias into simple statistical summaries on the output data.

%A notable bias is that the output data product from the Bureau's pipeline tends to associate positive errors with larger counts and negative errors with smaller counts, when total count is held as invariant. Figure~\ref{fig:census_trend} illustrates this phenomenon using the Census Bureau's own demonstration data products (\cite{ipums2020das}, vintage 20201116), which consists of both the original values from the 2010 Census Summary File as well as their differentially private counterparts produced by the new DAS pipeline. 

%{\bf  [An important use case: Census DAS and its failures]} As a major innovation to its traditional Disclosure Avoidance System (DAS), the 2020 Decennial Census endorses differential privacy as the new standard to protect privileged information of individual respondents.

\begin{figure*}[t]
\centering
\includegraphics[width = .32\textwidth]{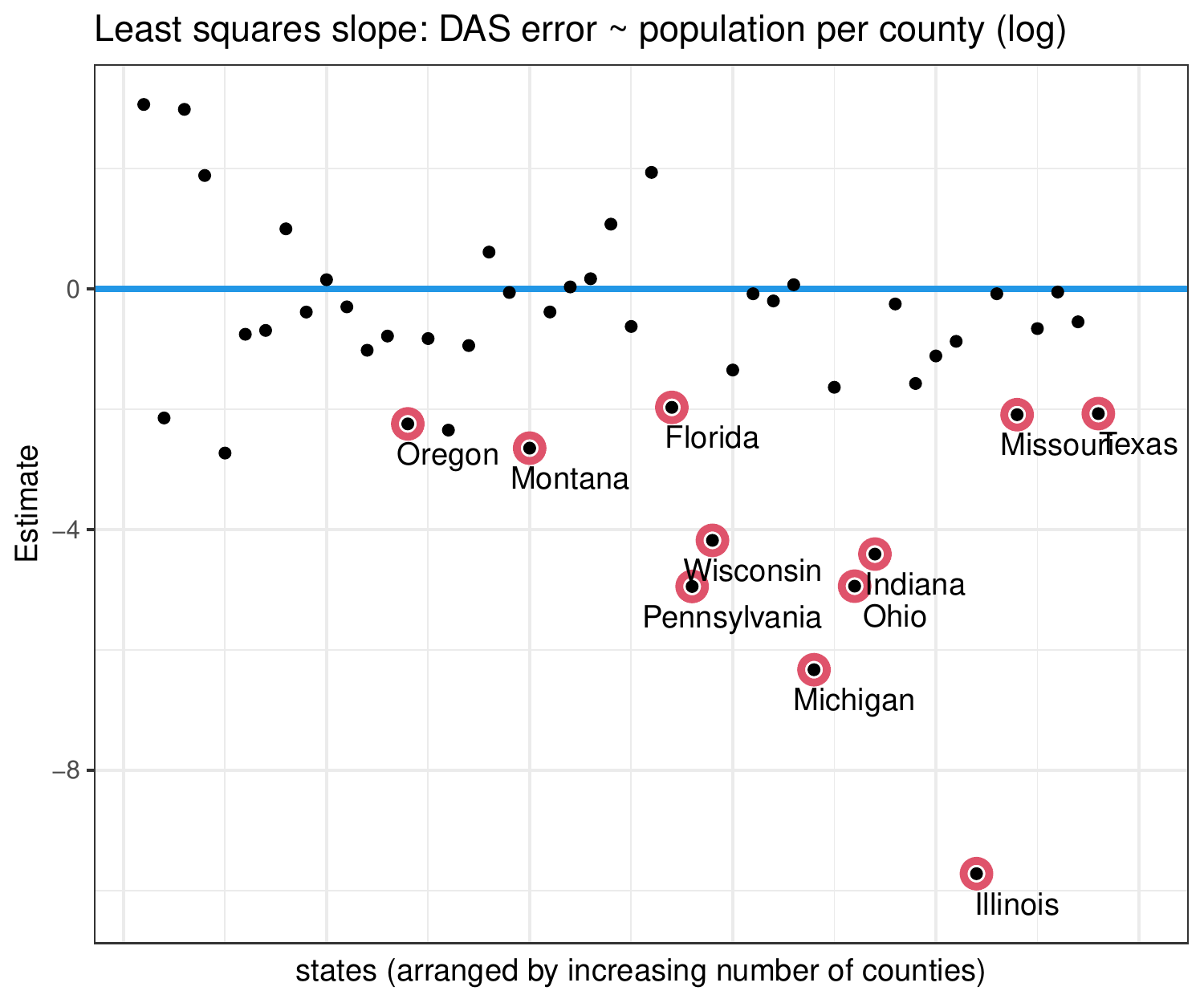}
\includegraphics[width = .55\textwidth]{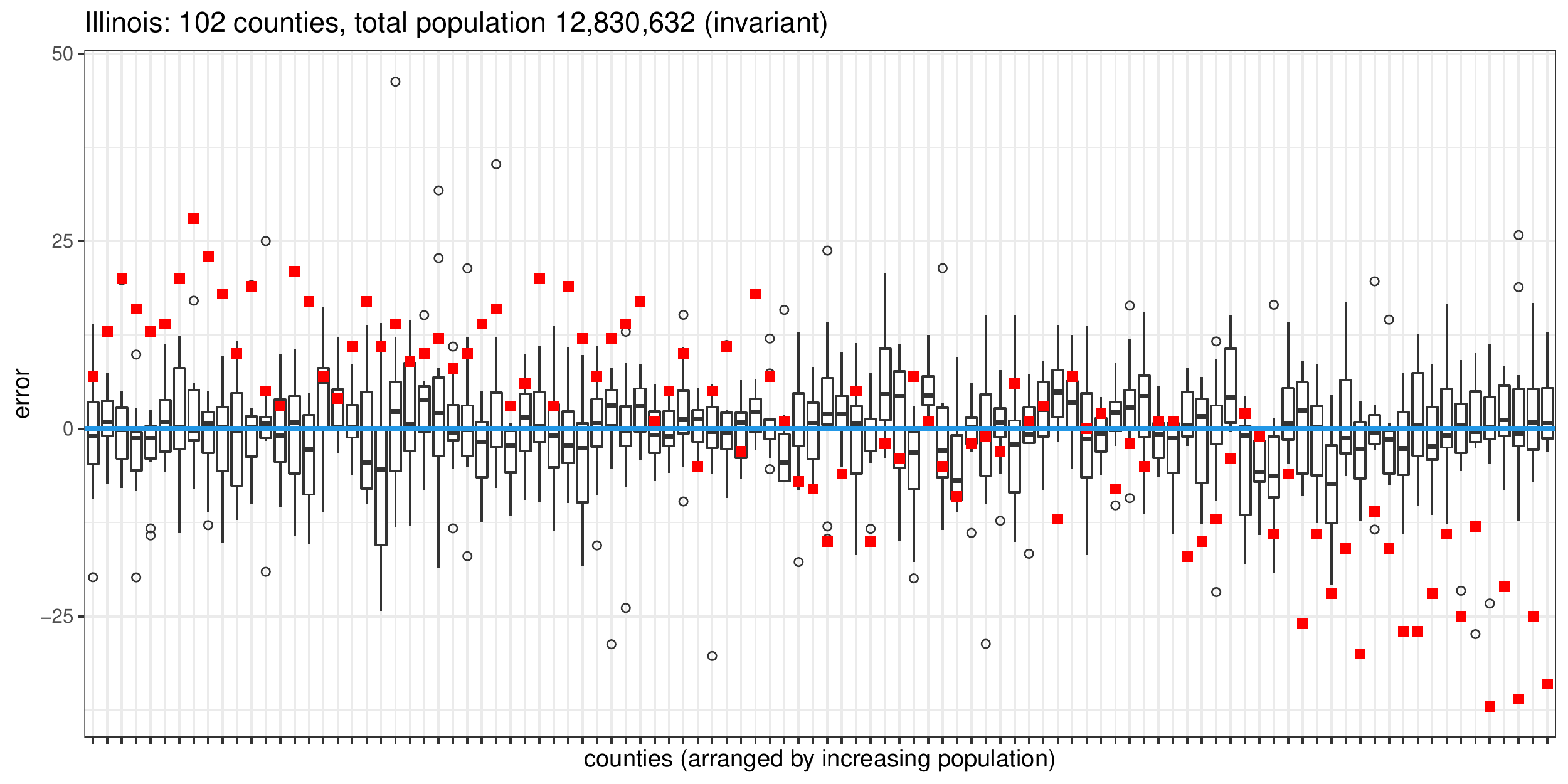}
\caption{\label{fig:census_trend} Left: the Census DAS associates positive errors with larger counties and negative errors with smaller counties, when the state population is held as invariant \cite[Nov 2020 vintage demonstration files;][]{ipums2020das}. 
%For all states and  territories with more than five counties, a simple regression is performed using the log true county population as the explanatory variable and the DAS error as the response. Depicted are the least square slope estimates, out of which 37 are negative, and a further 11 are statistically significant at $\alpha = 0.01$ level (circled in red). 
Eleven out of 48 simple regressions of the county-level DAS errors against log true county populations have statistically significant negative slopes ($\alpha = 0.01$), circled in red. Right: for the counties of Illinois in increasing true population sizes, DAS errors (red squares) show a clear negative trend bias. The boxplots show errors from ten runs of our proposed method (the $(\epsilon,0)$-induced subspace differentially private projected Laplace mechanism; \Cref{cor:projected_laplace}). As  Corollary~\ref{cor:unbiasedness} shows, these errors are unbiased.}

\end{figure*}

\noindent\textbf{Our contribution}
To meet the challenge posed by invariants, we argue that the definition of differential privacy must be recapitulated to respect the given constraints. To this end, we propose the definition of \emph{subspace differential privacy}, which makes explicit the invariants imposed on the data product. Subspace differential privacy intends to honestly characterize the dependence of the privatized data product on truthful aspects of the confidential data, to ensure that any privacy guarantee attached to the data product is both mathematically rigorous and intuitively sensible. It enables the assessment of what kind of queries do, and do not, enjoy the protection of differential privacy. 

The literature has seen attempts to generalize beyond the classic notion of differential privacy. The framework of {Pufferfish} privacy~\cite{kifer2012rigorous,kifer2014pufferfish,song2017pufferfish} specifies the potential secrets, discriminative pairs, as well as the data generation model and knowledge possessed by the potential attacker. Special cases of the Pufferfish framework include {Blowfish} privacy~\cite{He2014-id} and Bayesian differential privacy~\cite{Yang2015-bc}. Related notions of correlated differential privacy \citep{zhu2014correlated} and dependent differential privacy \citep{liu2016dependence} specifically address secrets in the form of query correlations or structural dependence of the database. 

The current work differentiates itself from the existing literature in two senses. First, the theoretical focus is to provide a principled reconciliation between the hard truth constraints which the data curator must impose on the sanitized data product, and the privacy guarantee the product can enjoy. In particular, just like the classic notion of differential privacy, \emph{subspace differential privacy} does not require the specification of a data generation model nor any knowledge that the attacker might possess. Second, the practical emphasis is on the design of probabilistic mechanisms that impose deterministic truth constraints as they instill subspace differential privacy in the data product. This forgoes the need for additional post-processing, and preserves good statistical qualities of the output.

%\fang{consistency and induced privacy}
A related, but different, line of work in the literature concerns the \emph{internal} consistency of the privacy mechanism output~\cite{barak2007privacy,hay2009boosting}. For example, when we query the number of students in a school and the numbers of students in each class of the school, we may expect the outputs to be non-negative, and the sum of the (privatized) numbers of students in all classes to be equal to the (privatized) number of students in the school. These internal consistency requirements, such as non-negativity and relational constraints, are \emph{independent} of the private dataset. Therefore, they may be compatible with the classic notion of differential privacy, in which case they may be instantiated with differentially private mechanisms. However, for invariants that are nontrivial functions of the confidential data, we show in Section~\ref{sect:pre_priv} that it is impossible to have differentially private mechanisms that satisfy them. It is this kind of invariants that motivate our work in this paper.
%\jie{this two sentences are a bit confusing} \fang{rewrite it a little.  I will take another look tomorrow.}
%\robin{How about this?}
%\jie{need to add connection/comparison with our definitions/objectives}
%% \fang{Pufferfish, induced neighborhood, and linear subspace constraints}
%\robin{add more to what we want to do}
%\paragraph{Our contribution.} 

The remainder of this paper is organized as follows. Section~\ref{sec:sdp} defines \emph{subspace differential privacy} and \emph{induced subspace differential privacy}, motivated by the pair of necessary criteria that the mechanism be simultaneously provably private and invariant-respecting. Section~\ref{sec:design} outlines two general approaches, \emph{projection} and \emph{extension}, to design induced subspace differentially private mechanisms.  We apply both frameworks to produce Gaussian and Laplace mechanisms for general queries, present a correlated Gaussian mechanism that is  \emph{near-optimal} (i.e. in terms of mean squared error, with a small multiplicative factor) for linear queries, and sketch the design for a $k$-norm mechanism that would enjoy near optimality. Section~\ref{sec:discussion} discusses the statistical and implementation considerations behind the proposed mechanisms, as they enjoy transparency and statistical intelligibility that facilitate principled downstream statistical analysis. In the special case of additive spherical Gaussian mechanism, a distributional equivalence is established between the projection framework and statistical conditioning. All mechanisms can also be adapted for distributed privatization.  Section~\ref{sec:example} provides two demonstrations of the proposed induced subspace differentially private mechanisms, on the 2020 Census DAS demonstration data and spatio-temporal mobility dataset on a university campus subject to various marginal total invariants.  Section~\ref{sec:conclusion} concludes. 

%\robin{Intuitive explanation to what are we trying to do.}

%{\color{red} Question. Are we only restricted to protecting induced neighbors or can we do something greater than that?}

%\fang{Brief outline: two frameworks,  case study: Gaussian, statistical consideration, case study: Laplace and exponential}

\section{Recapitulating Privacy under Invariants}\label{sec:sdp}

%\robin{One of our revieweres did not like the word ``recapitulate''. Do we want to use something else?}\fang{I think it's fine}

In this work, we model private data as a database $\x=\left(x_{1},\dots,x_{N}\right)^{\top} \in  \mathcal{X}^N$ of $N$ rows, where each row $x_i\in \mathcal{X}$ contains data about an individual $i$, and $\mathcal{X}$ is finite with size $d$, and we set the space of all possible non-empty databases as $\mathcal{X}^*:=\cup_{N\ge 1}\mathcal{X}^N$.  
A trusted curator holds the database $\x\in \mathcal{X}^*$, and provides an interface to the database through a randomized mechanism $M:\mathcal{X}^*\to \mathcal{Y}$ where  $\mathcal{Y}\subseteq \mathbb{R}^n$ is the \emph{output space} of $M$.  We want to design good mechanisms to answer a query $A:\mathcal{X}^*\to \mathcal{Y}$ that satisfies not only certain privacy notions, but also invariant constraints as motivated in Section~\ref{sec:intro}.

We begin the discussion about mechanisms for a general query, defined by a function $A:\mathcal{X}^*\to \mathcal{Y}$, throughout the end of Section~\ref{sec:general}. In Section~\ref{sec:linear}, we consider optimal mechanisms for a \emph{linear} query $A$, with  $a:\mathcal{X}\to \mathbb{R}^n$ so that $A(\x) := \sum_{i} a(x_i)$.  Indeed, a linear query $A$ can be represented as a linear function of the \emph{histogram} of database $\x$, $\hist(\x):\mathcal{X}^*\to \mathbb{N}^{d}$ where $\hist(\x)_z := \sum_{i}\mathbf{1}[x_i = z]$ is the number of rows equal to $z\in \mathcal{X}$ in $\x\in \mathcal{X}^*$.  With this notation, given a linear query $A$, we denote $\A$ as a matrix where the $k,z$ entry is $A(z)_k$ for $k\in [n]$ and $z\in \mathcal{X}$, and the linear query on a database $\x$ can be written as matrix multiplication, 
$A(\x) = \A \cdot \hist(\x)$.
% We first define our query model and our notion of accuracy in Sect.~\ref{sect:pre_query}.  Then we introduce additional privacy or invariant constraints in Sect.~\ref{sect:pre_priv}.

\subsection{Privacy Guarantees and Invariants}\label{sect:pre_priv}
% Besides providing accurate output answering query $A$, our mechanism $M$ needs to protect individuals' privacy.  Thus, instead of all functions from $\mathcal{X}^*$ to $\mathbb{R}^n$, our mechanism $M$ can be only in a restricted set $\mathscr{M}$.

The notion of differential privacy ensures that no individual's data has much effect on the output of the mechanism $M$.  That is, if we consider any two neighboring databases $\x$ and $\x'$ of size $N$ that differ on one row (there exists $i$ such that $x_i \neq x_i'$ and  $x_j = x_j'$ for all $j\neq i$.) the output distribution of mechanism $M$ on $\x$ should be similar to that of $M$ on $\x'$.  Formally:
\begin{definition}[Bounded differential privacy~\cite{dwork2006calibrating}]\label{def:dp}
%Let $A:\mathcal{X}^* \to\mathbb{R}^{n}$ be a deterministic query function. 
Let $(\mathcal{Y}, \mathcal{F})$ be a measurable space and $\epsilon, \delta\ge0$. We say that a random mechanism $M: \mathcal{X}^*\to\mathcal{Y}$ is \emph{$\left(\epsilon,\delta\right)$-differentially private} if for all neighboring databases $\x\sim \x'$ and all measurable set $S\in\mathcal{F}$, 
$\Pr\left[ M\left(\x\right)\in S\right]\le e^\epsilon\cdot \Pr\left[M\left(\x'\right)\in S\right]+\delta$.
\end{definition}

From the data curator's perpective, in addition to privacy concerns, there often exists external constraints that the privatized output $M$ must meet. 
%For instance, we want to publish the histogram but also ensure the the number of agents is invariant (Example~\ref{ex:hist}). 
These constraints can often be represented as a function of $M(\x)$ that agrees with what's calculated based on the confidential $A(\x)$. In this work, we focus on the class of invariants that can be written in the form of a system of linear equations. 
\begin{definition}[Invariants - linear equality]\label{def:invariant}
Given a query $A:\mathcal{X}^*\to \mathbb{R}^n$, and $C\in \mathbb{R}^{n_c\times n}$ be a $n_c\times n$ matrix with rank $n_c<n$.\footnote{We can always find a $C'$ consists of a subset of independent rows of $C$ which has same row rank as $C$'s,  and we can translate between these two through a linear transformation.} A (random) mechanism $M:\mathcal{X}^*\to \mathcal{Y}\subseteq \mathbb{R}^n$ satisfies the \emph{linear equality invariant $C$ with query $A$}, if for all $\x$,
$CM\left({\bf x}\right) =CA(\x)$ with probability one over the randomness of $M$. 
\end{definition}
Given a linear equality invariant $C$, let $\mathcal{N} :=\{v\in \mathbb{R}^n:Cv = 0\}$ be the \emph{null space} of $C$, and $\mathcal{R} := \mathcal{N}^\perp\subseteq\mathbb{R}^n$ be the \emph{row space} of $C$.  Additionally, we set $\Pi_\mathcal{N}$ be the orthogonal projection matrix for null space $\mathcal{N}$, $Q_\mathcal{N}$ be a collection of orthonormal basis of $\mathcal{N}$, and $A_\mathcal{N}:= Q_\mathcal{N}^\top A$ be the query function $A$ projected into $\mathcal{N}$ with basis $Q_\mathcal{N}$.  We use subscript $\mathcal{R}$ in the similar manner.  

Linear equality invariants are a natural family of invariants. Here are two examples. 
\begin{example}\label{ex:hist}
Let $\x\in \mathcal{X}^N$ be a database with $|\mathcal{X}| = d$, $A = \hist$ be the histogram query, and $C = \mathbf{1}^\top = (1,\dots,1)\in \mathbb{R}^{1\times d}$ all one vector with length $d$. 
This linear equality invariant requires the curator to accurately report the total number of individuals without error, because $CM(\x) = \mathbf{1}^\top \hist(\x) = N$.
\end{example}

\begin{example}\label{ex:even}
Consider $\mathcal{X} = \{1,2,3,4\}$, $A = \hist: \mathcal{X}^*\to \mathbb{R}^4$ and $C = (1, 0, 1, 0)$.  The linear equality invariant ensures the number of individual with odd feature is exact.
\end{example}

The invariants discussed in the Census DAS application, such as state-level population and block-level housing units and group quarter facilities, can be formulated as linear equality invariants.
%{\color{blue} For example, the state population total, etc can be written this way...} \jie{I want to connect back to the applications in the introduction}

%\robin{Beginning from here we lay out the core of subspace differential privacy, Do we want to start a new subsection? }

\subsection{Subspace and Induced Subspace Differential Privacy}

% \fang{discussed in?}
% 	\item maintain the practicality of implementation as well as the statistical intelligibility of the output.
% \subsection{Subspace Differential Privacy}
%\robin{this can be shortened}

The mechanism we seek must meet two necessary criteria: \emph{provably private} and \emph{invariant-respecting}. That is, the mechanism should privatize the confidential query with mathematically provable guarantees, while at the same time the query output should conform to the invariants that a data curator chooses to impose. The two criteria culminate in the \emph{\goodprivacynoun} (\Cref{def:isubdp}) which enjoys additional desirable properties, such as the practicalities of design and statistical intelligibility, which will be discussed from \cref{sec:design} and on.

To motivate the construction,  note that the classic definition of differential privacy and invariant constraints are not compatible by design.  For instance, in Example~\ref{ex:even} if a mechanism $M$ respects the invariant constraint~\Cref{def:invariant}, the probability ratio of event $S = \{(y_1, y_2, y_3, y_4): y_1+y_3 = 1\}\subset \mathbb{R}^4$ on  neighboring databases $\x = (1,2,4)$ and  $\x' = (4,2,4)$ is unbounded, $\frac{\Pr[M(\x)\in S]}{\Pr[M(\x')\in S]}=\infty$,
because $\Pr[M(\x)\in S] = \Pr[CM(\x) = C\hist(\x)] = 1$ but $\Pr[M(\x')\in S] = \Pr[CM(\x') \neq C\hist(\x')] = 0$.  Thus $M$ violates $(\epsilon, \delta)$-differential privacy for any $\epsilon>0$ and $\delta<1$.  

Therefore, we need a new notion of differential privacy to discuss the privacy protection in the presence of mandated invariants. Below we attempt to do so by recapitulating the definition of  $\left(\epsilon,\delta\right)$-differential privacy, to acknowledge the fact that if a hard linear constraint is imposed on the privacy mechanism, we can no longer offer differential privacy guarantee in the full $n$-dimensional space that is the image of $A$, but rather only within certain linear subspaces.

\begin{definition}[Subspace differential privacy]\label{def:subdp}
%Let $A:\mathcal{X}^*\to\mathbb{R}^{n}$ be a query function operating on database ${\bf x}\in\mathcal{X}^*$. 
Let $\mathcal{V}$ be a linear subspace of $\mathbb{R}^{n}$, and $\Pi_{\mathcal{V}}$ the projection matrix onto $\mathcal{V}$. Given $\epsilon, \delta\ge 0$, a random mechanism $M:\mathcal{X}^*\to\mathbb{R}^{n}$ is \emph{$\mathcal{V}$-subspace $\left(\epsilon,\delta\right)$-differentially private} if for all neighboring databases ${\bf x}\sim {\bf x}'$ and { every Borel subset $S\subseteq \mathcal{V}$}, %\fang{why Borel measurable instead of Lebesgue?}\robin{This is because $S$ is in the range of a random variable. It needs to be a Borel subset of $\mathcal{V}$ in order for $\Pr$ to spit out a number}
\begin{equation}
\Pr\left[ \Pi_{\mathcal{V}} M\left({\bf x}\right)\in S\right]\le e^\epsilon\Pr\left[\Pi_{\mathcal{V}}M\left({\bf x}'\right)\in S\right]+\delta.
\end{equation}
\end{definition}
%\robin{REFER TO APPENDIX - NESTED PROPERTY, and motivate this proposition} 

%It establishes that subspace differential privacy is more general, hence logically weaker, than differential privacy. 

%\fang{need to justify $\mathcal{N}$-subspace differentially private}.
%With invariants and subspace differential privacy defined, 

We are ready to formalize the notion of a provably private and invariant-respecting private mechanism, one that meets both the criteria laid out at the beginning of this subsection.  
\begin{definition}[Induced subspace differential privacy]\label{def:isubdp}
Given $\epsilon, \delta\ge 0$, a query $A:\mathcal{X}^*\to \mathbb{R}^n$ and a linear equality invariant $C:\mathbb{R}^n\to \mathbb{R}^{n_c}$ with null space $\mathcal{N}$, a mechanism $M:\mathcal{X}^*\to \mathbb{R}^n$ is \emph{$(\epsilon, \delta)$-\goodprivacy for query $A$ and an invariant $C$} if 
%\begin{enumerate}\denselist
%\item 
1) $M$  is $\mathcal{N}$-subspace $(\epsilon, \delta)$-differentially private (\Cref{def:subdp}), and 
%\item 
2) $M$ satisfies the linear equality invariant $C$ (\Cref{def:invariant}).
%\end{enumerate}
$\mathcal{M}$ may be referred to simply as \emph{\goodprivacy}, whenever the context is clear about (or does not require the specification of) $\epsilon, \delta$ and $C$.
\end{definition}
An \goodprivacy mechanism delivers query outputs that meet the curator's invariant specification ($C$) with probability one. It is provably differentially private for all queries and their components that are orthogonal to the invariants, and is silent on privacy properties for those that are linearly dependent on the invariants. 

\subsection{Properties of Subspace Differential Privacy} 

%\jie{I collected properties of subspace DP here. please check}
Now we discuss several properties of subspace differential privacy. First, we show a ``nestedness'' property:  a $\mathcal{V}_1$-subspace differentially private mechanism is also $\mathcal{V}_2$-subspace differentially private for all $\mathcal{V}_1\supseteq \mathcal{V}_2$.

\begin{proposition}\label{prop:nested}
Let ${\mathcal{V}}_2 \subseteq {\mathcal{V}_1}$ be nested linear subspaces of respective dimensions $d_2 \le d_1$. If a mechanism $M$ is ${\mathcal{V}_1}$-subspace $\left(\epsilon,\delta\right)$-differentially private, it is also ${\mathcal{V}}_2$-subspace $\left(\epsilon,\delta\right)$-differentially private.
\end{proposition}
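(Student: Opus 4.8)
The plan is to reduce the desired $\mathcal{V}_2$-subspace inequality to the $\mathcal{V}_1$-subspace inequality that is assumed, by pulling each Borel subset of $\mathcal{V}_2$ back to a Borel subset of $\mathcal{V}_1$. The structural fact that drives the argument is the composition identity for orthogonal projections onto nested subspaces: since $\mathcal{V}_2 \subseteq \mathcal{V}_1$, we have $\Pi_{\mathcal{V}_2} = \Pi_{\mathcal{V}_2}\Pi_{\mathcal{V}_1}$. First I would verify this by writing any $v \in \mathbb{R}^n$ as $v = \Pi_{\mathcal{V}_1} v + (v - \Pi_{\mathcal{V}_1} v)$ and observing that the residual $v - \Pi_{\mathcal{V}_1} v$ lies in $\mathcal{V}_1^\perp \subseteq \mathcal{V}_2^\perp$ (the inclusion of orthogonal complements reverses), so that $\Pi_{\mathcal{V}_2}$ annihilates it.

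Next, I would fix neighboring databases $\x \sim \x'$ and an arbitrary Borel set $S \subseteq \mathcal{V}_2$, and define the pullback
\[
T := \{\, v \in \mathcal{V}_1 : \Pi_{\mathcal{V}_2} v \in S \,\},
\]
the set of points of $\mathcal{V}_1$ whose further projection onto $\mathcal{V}_2$ lands in $S$. By the composition identity, the event $\{\Pi_{\mathcal{V}_2} M(\x) \in S\}$ is identical to $\{\Pi_{\mathcal{V}_1} M(\x) \in T\}$, and similarly for $\x'$. Invoking the hypothesized $\mathcal{V}_1$-subspace $(\epsilon,\delta)$-differential privacy on the set $T$ gives $\Pr[\Pi_{\mathcal{V}_1} M(\x) \in T] \le e^\epsilon \Pr[\Pi_{\mathcal{V}_1} M(\x') \in T] + \delta$, and rewriting both sides through the event identity turns this into exactly the $\mathcal{V}_2$-subspace guarantee $\Pr[\Pi_{\mathcal{V}_2} M(\x) \in S] \le e^\epsilon \Pr[\Pi_{\mathcal{V}_2} M(\x') \in S] + \delta$.

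The one step that needs care — the main, if modest, obstacle — is measurability: the definition of $\mathcal{V}_1$-subspace privacy quantifies only over Borel subsets of $\mathcal{V}_1$, so I must check that $T$ is Borel in $\mathcal{V}_1$. This holds because $\Pi_{\mathcal{V}_2}$ restricted to $\mathcal{V}_1$ is continuous (being linear), making $T$ the preimage of the Borel set $S$ under a continuous map. With measurability secured, the remainder is a direct substitution and requires no further estimates.
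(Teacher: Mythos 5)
Your proof is correct and follows essentially the same route as the paper's: both arguments pull each Borel set $S \subseteq \mathcal{V}_2$ back to a Borel subset of $\mathcal{V}_1$ on which the $\mathcal{V}_1$-guarantee is invoked, and your pullback $T$ coincides with the paper's set $S + W$ where $W = \mathcal{V}_2^\perp \cap \mathcal{V}_1$ (you describe it as a preimage under the restricted projection, the paper as a Minkowski sum, but it is the same set). Your explicit treatment of measurability via continuity of the restricted projection is a welcome extra detail; no gap.
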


The main idea is that because $\mathcal{V}_2\subseteq \mathcal{V}_1$ are both linear spaces, for any measurable $S$, we can always find another measurable set $S'$ such that $\Pi_{\mathcal{V}_2}^{-1}(S) = \Pi_{\mathcal{V}_1}^{-1} (S')$.  Thus $\mathcal{V}_1$-subspace differential privacy implies $\mathcal{V}_2$-subspace differential privacy.  We include the proof to the appendix.

\Cref{prop:nested} implies that a differentially private mechanism is subspace differential private, as shown in \Cref{prop:unconstrained}.  Thus, we can call an  $\left(\epsilon,\delta\right)$-differentially private mechanism the $\mathbb{R}^n$-subspace $\left(\epsilon,\delta\right)$-differentially private mechanism.   
\begin{corollary}[label = prop:unconstrained]
If $M:\mathcal{X}^*\to \mathbb{R}^n$ is a $\left(\epsilon,\delta\right)$-differentially private mechanism, it is $\mathcal{V}$-subspace $\left(\epsilon,\delta\right)$-differentially private for any linear subspace $\mathcal{V} \subseteq \mathbb{R}^n$.
\end{corollary}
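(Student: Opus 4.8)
The plan is to deduce this directly from \Cref{prop:nested} by recognizing ordinary $(\epsilon,\delta)$-differential privacy as the boundary case of subspace differential privacy in which the subspace is all of $\mathbb{R}^n$. First I would instantiate \Cref{def:subdp} at $\mathcal{V} = \mathbb{R}^n$. There the projection matrix $\Pi_{\mathbb{R}^n}$ is the identity on $\mathbb{R}^n$, so $\Pi_{\mathbb{R}^n} M(\x) = M(\x)$ for every database $\x$, and the defining inequality of $\mathbb{R}^n$-subspace differential privacy becomes $\Pr[M(\x)\in S]\le e^\epsilon\Pr[M(\x')\in S]+\delta$ for every Borel $S\subseteq\mathbb{R}^n$. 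This is exactly the inequality in \Cref{def:dp} once we identify the output measurable space $(\mathcal{Y},\mathcal{F})$ with $(\mathbb{R}^n,\mathcal{B}(\mathbb{R}^n))$. Hence, for a mechanism valued in $\mathbb{R}^n$, being $(\epsilon,\delta)$-differentially private and being $\mathbb{R}^n$-subspace $(\epsilon,\delta)$-differentially private are one and the same statement.

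The second and final step is a single application of the nestedness proposition. For an arbitrary linear subspace $\mathcal{V}\subseteq\mathbb{R}^n$ we have the nested inclusion $\mathcal{V}\subseteq\mathbb{R}^n$, so setting $\mathcal{V}_1 = \mathbb{R}^n$ and $\mathcal{V}_2 = \mathcal{V}$ and invoking \Cref{prop:nested} immediately upgrades the $\mathbb{R}^n$-subspace $(\epsilon,\delta)$-differential privacy established in the first step to $\mathcal{V}$-subspace $(\epsilon,\delta)$-differential privacy. Since $\mathcal{V}$ was arbitrary, the corollary follows.

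I expect the only point requiring care — and the nearest thing to an obstacle — to be the measure-theoretic bookkeeping in the first step: one must confirm that the class of measurable sets $S\in\mathcal{F}$ used in \Cref{def:dp} coincides with the Borel subsets of $\mathbb{R}^n$ appearing in \Cref{def:subdp}. Under the standing convention that the output space $\mathcal{Y}\subseteq\mathbb{R}^n$ carries the Borel $\sigma$-algebra, this identification is immediate and no genuine difficulty arises. All the substantive content is already packaged inside \Cref{prop:nested}, so the corollary reduces to the two observations above.
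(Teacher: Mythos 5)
Your proposal is correct and matches the paper's own route exactly: the paper derives this corollary by observing that $(\epsilon,\delta)$-differential privacy is precisely $\mathbb{R}^n$-subspace $(\epsilon,\delta)$-differential privacy (since $\Pi_{\mathbb{R}^n}$ is the identity) and then applying \Cref{prop:nested} with $\mathcal{V}_1=\mathbb{R}^n$ and $\mathcal{V}_2=\mathcal{V}$. Your remark on the measure-theoretic identification is a minor point the paper leaves implicit, and it is handled correctly.
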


%\fang{new}
Induced subspace differential privacy inherits the composition property from differential privacy in the following sense (with proof in Appendix~\ref{app:subspace}).
\begin{proposition}[name=Composition, label = prop:composition, restate=Composition]
Given $\epsilon_1\ge 0$ and $\epsilon_2\ge 0$, 
% query functions $A_1:\mathcal{X}\to \mathcal{Y}_1$ and $A_2:\mathcal{X}\to \mathcal{Y}_2$, and linear invariant $C_1: \mathcal{Y}_1\to \mathcal{Z}_1$ and $C_2: \mathcal{Y}_2\to \mathcal{Z}_2$, 
if $M_1$ is induced subspace $(\epsilon_1,0)$-differentially private for query $A_1$ and linear invariant $C_1$ and $M_2$ is induced subspace $(\epsilon_2,0)$-differentially private for query $A_2$ and linear invariant $C_2$, the composed mechanism $(M_1, M_2)$ is induced subspace $(\epsilon_1+\epsilon_2,0)$-differentially private for query $A_{1,2} := (A_1, A_2)$ and linear invariant $C_{1,2} := (C_1, C_2)$. 
\end{proposition}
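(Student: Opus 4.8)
The plan is to verify the two defining conditions of induced subspace differential privacy (\Cref{def:isubdp}) for the composed mechanism $(M_1, M_2)$ against the combined invariant $C_{1,2}$, namely that it is $\mathcal{N}_{1,2}$-subspace $(\epsilon_1+\epsilon_2,0)$-differentially private and that it respects $C_{1,2}$. The first step is to pin down the geometry of the combined invariant. Reading $C_{1,2} = (C_1, C_2)$ as the block-diagonal map on $\mathbb{R}^{n_1}\times\mathbb{R}^{n_2}$, a vector $(v_1, v_2)$ lies in its null space $\mathcal{N}_{1,2}$ iff $C_1 v_1 = 0$ and $C_2 v_2 = 0$, so $\mathcal{N}_{1,2} = \mathcal{N}_1\times\mathcal{N}_2$. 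Since the orthogonal complement then splits as $\mathcal{N}_1^\perp\times\mathcal{N}_2^\perp$, the projection is block diagonal, $\Pi_{\mathcal{N}_{1,2}}(y_1,y_2) = (\Pi_{\mathcal{N}_1}y_1, \Pi_{\mathcal{N}_2}y_2)$. This factorization is exactly what lets the two pieces of privacy combine cleanly.

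The invariant-respecting condition (item 2 of \Cref{def:isubdp}) is immediate. Since $M_1$ satisfies $C_1$ with query $A_1$ and $M_2$ satisfies $C_2$ with query $A_2$, we have $C_1 M_1(\x) = C_1 A_1(\x)$ and $C_2 M_2(\x) = C_2 A_2(\x)$, each with probability one. Stacking gives $C_{1,2}(M_1,M_2)(\x) = C_{1,2} A_{1,2}(\x)$ with probability one, which is the invariant $C_{1,2}$ with query $A_{1,2}$.

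The substance is the subspace privacy condition (item 1): that the projected pair $\Pi_{\mathcal{N}_{1,2}}(M_1,M_2) = (\Pi_{\mathcal{N}_1}M_1, \Pi_{\mathcal{N}_2}M_2)$ is $(\epsilon_1+\epsilon_2,0)$-differentially private on $\mathcal{N}_{1,2}$. I would first record that, because $\delta=0$, the hypothesis that $M_i$ is $\mathcal{N}_i$-subspace $(\epsilon_i,0)$-differentially private is equivalent to a pointwise Radon--Nikodym bound: writing $\mu_{i,\x}$ for the law of $\Pi_{\mathcal{N}_i}M_i(\x)$, neighboring databases force $\mu_{i,\x}$ and $\mu_{i,\x'}$ to be mutually absolutely continuous with $d\mu_{i,\x}/d\mu_{i,\x'}\le e^{\epsilon_i}$ almost everywhere. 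Assuming $M_1$ and $M_2$ draw independent randomness (the usual convention for composition), the law of the projected pair is the product measure $\mu_{1,\x}\otimes\mu_{2,\x}$, and the standard product-measure identity gives
\begin{equation}
\frac{d(\mu_{1,\x}\otimes\mu_{2,\x})}{d(\mu_{1,\x'}\otimes\mu_{2,\x'})}(y_1,y_2) = \frac{d\mu_{1,\x}}{d\mu_{1,\x'}}(y_1)\cdot\frac{d\mu_{2,\x}}{d\mu_{2,\x'}}(y_2) \le e^{\epsilon_1}e^{\epsilon_2}.
\end{equation}
Integrating this bound over an arbitrary Borel set $S\subseteq\mathcal{N}_{1,2}$ yields $\Pr[\Pi_{\mathcal{N}_{1,2}}(M_1,M_2)(\x)\in S]\le e^{\epsilon_1+\epsilon_2}\Pr[\Pi_{\mathcal{N}_{1,2}}(M_1,M_2)(\x')\in S]$, which is item 1.

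The main obstacle is handling arbitrary measurable $S\subseteq\mathcal{N}_{1,2}$ rather than product (rectangle) sets $S_1\times S_2$: for the latter the bound is a one-line consequence of independence, but for general $S$ one genuinely needs the density formulation above. This is precisely where $\delta=0$ enters, as pure differential privacy forces the mutual absolute continuity that makes the derivatives exist and multiply; the clean factorization is not available for $\delta>0$, which explains the restriction to the $(\epsilon,0)$ regime.
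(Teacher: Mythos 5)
Your proof is correct and follows essentially the same route as the paper's: identify $\mathcal{N}_{1,2}=\mathcal{N}_1\times\mathcal{N}_2$ so the projection factors, use independence to multiply the two pointwise privacy bounds, and verify the invariant by stacking. The only difference is that you carry out the privacy step rigorously for arbitrary Borel sets via Radon--Nikodym derivatives, whereas the paper argues pointwise on individual outcomes as if the output distributions were discrete; your version is the more careful rendering of the same idea.
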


The above is a preliminary answer to the composition property of subspace differential privacy. %{\color{blue} 
However, when $C_1$ and $C_2$ are different, the composition of invariant constraints may reveal additional information about the underlying confidential data. In general, the composition of logically independent  and informative invariants is not unlike a database linkage attack. 
For instance, $A_1 = A_2$ is a two-way contingent table that reports the counts of individuals with ages and zip code, the invariant $C_1$ ensures the accurate marginal counts of individuals within each age bracket, and $C_2$ ensures the accurate count of individuals within each zip code.  The composition of these two invariant constraints may allow an adversary to infer each individual's information.

%\robin{Can we make this a bit more formal, e.g. as a proposition?}\fang{I am good with the current discussion.  Linear post processing is too restrictive and formally factoring the function is too artificial.} 
The subspace differential privacy is naturally immune to any post-processing mapping which only acts on the subspace $\mathcal{N}$. However, it is not readily clear how to define post-processing under mandated invariant constraints.  Specifically, if a mechanism satisfies a nontrivial invariant constraint $C$ with row space $\mathcal{R}$ and null space $\mathcal{N}$, we can apply a post-processing mapping that outputs the invariant component in $\mathcal{R}$ -- here the output is revealed precisely. Unfortunately, such will be true for any invariant-respecting privatized product, regardless of what notion of privacy is attached to it.

%We present the proof and discussion of post-processing in Appendix~\ref{app:subspace}.

%\jie{Convexity}\fang{added here}
Finally, we note that if we consider a mixture of two $\mathcal{V}$-subspace differentially privacy mechanisms the resulting mechanism is also $\mathcal{V}$-subspace differentially private.  This property is known as the privacy axiom of choice~\cite{kifer2010towards}.

\section{Mechanism Design}\label{sec:design}

This section introduces two frameworks for designing \goodprivacy mechanisms with linear equality invariant $C$. 
The data curator would invoke the \emph{projection} framework if seeking to impose invariants onto an \emph{existing} differentially private mechanism, and the \emph{extension} framework if augmenting a smaller private query in a manner compatible with the invariants. Both frameworks are applied to revise existing differentially private mechanisms in \Cref{sec:general}, notably the Gaussian and the Laplace mechanisms, for general queries. For linear queries, \Cref{sec:linear} presents a near optimal correlated Gaussian mechanism, and sketches the design of a near optimal $k$-norm mechanism.

\subsection{The Projection Framework}\label{sec:projection}

Suppose the data curator already employs a differentially private mechanism $M$ to answer a general query $A$, and would like to impose a linear equality invariant $C$ on the query output. The projection framework, outlined in~\Cref{thm:projection}, can transform the existing mechanism $M$ to \goodprivacy for $A$ and $C$, with little overhead on the curator's part. 

\begin{theorem}[Projection framework]\label{thm:projection}
Given $\epsilon, \delta\ge 0$, a general query $A:\mathcal{X}^*\to \mathbb{R}^n$, and a linear equality invariant $C$ with null space $\mathcal{N}$, if $M:\mathcal{X}^*\to \mathbb{R}^n$ is $(\epsilon, \delta)$-differentially private, then $\mathcal{M}(\x):=A(\x)+\Pi_\mathcal{N}(M(\x)-A(\x))$, for all $\x\in \mathcal{X}^*$ is $(\epsilon, \delta)$-\goodprivacy for query $A$ and invariant $C$.
\end{theorem}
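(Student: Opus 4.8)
The plan is to verify directly the two defining conditions of $(\epsilon,\delta)$-\goodprivacy in \Cref{def:isubdp}: that $\mathcal{M}$ respects the invariant $C$ with query $A$, and that $\mathcal{M}$ is $\mathcal{N}$-subspace $(\epsilon,\delta)$-differentially private. Both reduce to simple algebraic identities for the projection matrix $\Pi_\mathcal{N}$, combined with the fact, established in \Cref{prop:unconstrained}, that an ordinary $(\epsilon,\delta)$-differentially private mechanism is automatically $\mathcal{V}$-subspace private for every subspace $\mathcal{V}$.

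First I would check the invariant condition (\Cref{def:invariant}). Applying $C$ to the definition of $\mathcal{M}$ gives $C\mathcal{M}(\x) = CA(\x) + C\Pi_\mathcal{N}\left(M(\x)-A(\x)\right)$. Since $\Pi_\mathcal{N}$ projects onto the null space $\mathcal{N}$ of $C$, every vector in its range is annihilated by $C$, so $C\Pi_\mathcal{N}=0$ and hence $C\mathcal{M}(\x)=CA(\x)$ for every realization of $M$, i.e. with probability one. This holds for all $\x$, so $\mathcal{M}$ satisfies the linear equality invariant.

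The substantive step is the subspace privacy condition, and the key observation---which I expect to carry the argument---is that the data-dependent additive shift by $A(\x)$ leaves no trace once we project onto $\mathcal{N}$. Using idempotency $\Pi_\mathcal{N}^2=\Pi_\mathcal{N}$, I would compute
\begin{equation*}
\Pi_\mathcal{N}\mathcal{M}(\x) = \Pi_\mathcal{N}A(\x) + \Pi_\mathcal{N}\left(M(\x)-A(\x)\right) = \Pi_\mathcal{N}M(\x),
\end{equation*}
so the $\mathcal{N}$-component of $\mathcal{M}$ coincides exactly with that of $M$, for every $\x$ and every realization. Consequently, for any Borel $S\subseteq\mathcal{N}$ the events $\{\Pi_\mathcal{N}\mathcal{M}(\x)\in S\}$ and $\{\Pi_\mathcal{N}M(\x)\in S\}$ coincide, so the $\mathcal{N}$-subspace privacy inequality of \Cref{def:subdp} written for $\mathcal{M}$ is literally the one written for $M$.

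It then remains to note that $M$ does satisfy that inequality: by hypothesis $M$ is $(\epsilon,\delta)$-differentially private, so \Cref{prop:unconstrained} gives that $M$ is $\mathcal{N}$-subspace $(\epsilon,\delta)$-differentially private. Combined with the identity above, this shows $\mathcal{M}$ is $\mathcal{N}$-subspace $(\epsilon,\delta)$-differentially private, and together with the invariant condition this establishes that $\mathcal{M}$ is $(\epsilon,\delta)$-\goodprivacy for $A$ and $C$. The only subtlety to watch is that the cancellation relies on projecting with the same $\Pi_\mathcal{N}$ that appears in the construction: the shift $A(\x)$ need not lie in $\mathcal{N}$, but its $\mathcal{N}$-component is precisely what gets subtracted inside $\Pi_\mathcal{N}\left(M(\x)-A(\x)\right)$, which is exactly why the privatized output in the $\mathcal{N}$ direction is unaffected by the (differing) values of $A(\x)$ and $A(\x')$ on neighboring databases.
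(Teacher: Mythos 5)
Your proposal is correct and follows essentially the same argument as the paper's proof: verify the invariant via $C\Pi_\mathcal{N}=0$, and verify $\mathcal{N}$-subspace privacy by observing that $\Pi_\mathcal{N}\mathcal{M}(\x)=\Pi_\mathcal{N}M(\x)$ and invoking \Cref{prop:unconstrained}. Your version merely spells out the idempotency computation and notes the identity holds pointwise (the paper states it only in distribution), which is a harmless strengthening.
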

Informally, we conduct projection on a differentially private mechanism in order to 1) remove the noise in the row space of $C$ to respect the invariant constraint $C$; and 2) preserve noise in the null space $\mathcal{N}$ and satisfy $\mathcal{N}$-subspace differential privacy. 
%\robin{incomplete sentence?}

\begin{proof}[Proof of Theorem~\ref{thm:projection}]
Because for all $\x\in \mathcal{X}^*$, $C\mathcal{M}(\x)=C A(\x)+C\Pi_{\mathcal{N}}(M(\x)-A(\x)) = CA(\x)$, 
$\mathcal{M}$ satisfies the invariant $C$.
Since $M$ is $\mathcal{N}$-subspace $(\epsilon, \delta)$-differentially private by \Cref{prop:unconstrained}, and  $\Pi_{\mathcal{N}}\mathcal{M}(\x)$ equals $\Pi_{\mathcal{N}}M\left(\x\right)$ in distribution, $\mathcal{M}$ is also ${\mathcal{N}}$-subspace differentially private.
\end{proof}

%\paragraph{Additive mechanisms} 
%\robin{I removed the paragraph title, since the word "additive" is already italicized so this may read okay.} 

The projection framework in \Cref{thm:projection} is particularly useful for revising \emph{additive} mechanisms, having the form
\begin{equation}\label{eq:additive}
M(\x) = A(\x)+\mathbf{e},	
\end{equation}
where ${\bf e}$ is a noise component independent of $A(\x)$. Examples of additive mechanisms include the classic Laplace and Gaussian mechanisms~\citep{dwork2006calibrating}, $t$-\citep{nissim2007smooth}, double Geometric \citep{fioretto2019differential}, and $k$-norm mechanisms~\citep{hardt2010geometry,bhaskara2012unconditional}. In contrast, the Exponential mechanism~\citep{mcsherry2007mechanism} is in general not additive because the sampling process depends on the utility function non-additively.  

% he $k$-norm mechanism and correlated Gaussian mechanism~\cite{nikolov2013geometry} induce correlations among dimensions of the privatization noise.  By Theorem~\ref{thm:projection}, we have the following corollary for additive privatization mechanisms.

When the existing differentially private mechanism $M$ is additive, the projection construction of an \goodprivacy mechanism based on $M$ can be simplified, by first sampling the noise $\mathbf{e}$, and outputting the query value $A(\x)$ with the projected noise added to it.
\begin{corollary}\label{cor:projection}
If the mechanism $M$ in Theorem~\ref{thm:projection} is furthermore additive, i.e. of the form~\eqref{eq:additive}, the modified mechanism,
$\mathcal{M}(\x):=A(\x)+\Pi_{\mathcal{N}}\mathbf{e}$,
% \begin{equation*}
% \mathcal{M}(\x):=A(\x)+\Pi_{\mathcal{N}}\mathbf{e} %\label{eq:additive-projection}
% \end{equation*}
is $(\epsilon, \delta)$-\goodprivacy for query $A$ and invariant $C$. 
\end{corollary}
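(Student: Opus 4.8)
The plan is to observe that this corollary is an immediate specialization of \Cref{thm:projection} and requires essentially no new argument beyond an algebraic substitution. Under the hypothesis that $M$ is additive of the form~\eqref{eq:additive}, we have $M(\x) = A(\x) + \mathbf{e}$ for every $\x \in \mathcal{X}^*$, where $\mathbf{e}$ is the noise component. The first step is simply to compute the difference $M(\x) - A(\x)$ appearing inside the projection in the statement of \Cref{thm:projection}: since $M(\x) - A(\x) = \mathbf{e}$, the modified mechanism defined in the theorem, namely $A(\x) + \Pi_{\mathcal{N}}(M(\x) - A(\x))$, reduces verbatim to $A(\x) + \Pi_{\mathcal{N}}\mathbf{e}$, which is exactly the mechanism $\mathcal{M}$ in the corollary.

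Having identified the two mechanisms as one and the same, I would then invoke \Cref{thm:projection} directly. The hypotheses of that theorem are met: $M$ is assumed $(\epsilon,\delta)$-differentially private (it is the same $M$ as in the theorem, now additionally assumed additive), $A$ is a general query, and $C$ is a linear equality invariant with null space $\mathcal{N}$. The theorem then certifies that $A(\x) + \Pi_{\mathcal{N}}(M(\x) - A(\x))$ is $(\epsilon,\delta)$-\goodprivacy for query $A$ and invariant $C$, and by the identity established in the first step this is precisely the claim about $A(\x) + \Pi_{\mathcal{N}}\mathbf{e}$.

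There is no genuine obstacle here; the content of the corollary is notational and computational convenience rather than a new privacy guarantee. The only point that warrants a sentence of care is to make explicit that the additivity assumption is what collapses the generic projected residual $\Pi_{\mathcal{N}}(M(\x) - A(\x))$ into the pre-sampled projected noise $\Pi_{\mathcal{N}}\mathbf{e}$, so that the curator may equivalently sample $\mathbf{e}$ first, project it onto $\mathcal{N}$, and add the result to the exact query value $A(\x)$. Since the two expressions agree pointwise (indeed, as random variables for each fixed $\x$), both the invariant-respecting property and the $\mathcal{N}$-subspace differential privacy transfer without modification from \Cref{thm:projection}.
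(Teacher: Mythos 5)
Your proposal is correct and matches the paper's (implicit) treatment: the corollary follows by substituting $M(\x)-A(\x)=\mathbf{e}$ into the mechanism of \Cref{thm:projection}, identifying the two mechanisms as the same random variable for each fixed $\x$, and invoking that theorem verbatim. No further argument is needed.
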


% As a special case, if $M$ is the widely employed spherical Gaussian mechanism (to be defined in~\Cref{lem:gaussian}), the constructed $\mathcal{M}$ further enjoys an equivalence between the projection to, and the probabilistic conditioning on the invariant margin. This greatly facilitates principled downstream statistical analysis. We defer the discussion on this matter to \Cref{sec:discussion}. \jie{This paragraph is a bit early? maybe we should move it to section 3.3?}
%\robin{Say that 3.2 and 3.3 will use this. }

\Cref{cor:projection} will be useful for the derivation of the projection mechanisms, as well as the various statistical properties of additive mechanisms (including the crucial \emph{unbiasedness} property), to be discussed in the ensuing sections.  

\subsection{The Extension Framework}\label{sec:extension}

The projection framework in Section~\ref{sec:projection} transforms an existing differentially private mechanism to one that is subspace differentially private and respects the invariant $C$. The extension framework, on the other hand, enables the design of an \goodprivacy mechanism without a full mechanism in place yet. \Cref{thm:connect} discusses how to extend a differential private mechanism with image contained in $\mathcal{N}$ to an \goodprivacy mechanism.  Moreover, the converse also holds--- any \goodprivacy mechanism can be written as a differential private mechanism with a translation. Thus, the extension framework provides the optimal trade-off between privacy and accuracy, as \Cref{cor:err_constrain2dp} will show.  We defer the proof to supplementary material.

\begin{theorem}[name = Extension framework,label = thm:connect,restate = connect]
Given $\epsilon, \delta\ge 0$, a general query $A:\mathcal{X}^*\to \mathbb{R}^n$, and a  linear equality invariant $C$ with null space $\mathcal{N}$ and row space $\mathcal{R}$, $\mathcal{M}$ is $(\epsilon, \delta)$-\goodprivacy for query $A$ and invariant $C$, if and only if 
$\mathcal{M}(\mathbf{x}):=\hat{M}(\mathbf{x})+\Pi_{\mathcal{R}} A(\x)$
where $\hat{M}$ is $(\epsilon, \delta)$-differentially private and its image is contained in $\mathcal{N}\subseteq \mathbb{R}^n$.
\end{theorem}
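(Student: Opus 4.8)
The plan is to prove both directions of the equivalence by exploiting the orthogonal decomposition $\mathbb{R}^n = \mathcal{N} \oplus \mathcal{R}$, which supplies the three algebraic identities I will use repeatedly: $\Pi_{\mathcal{N}} + \Pi_{\mathcal{R}} = I$, $\Pi_{\mathcal{N}}\Pi_{\mathcal{R}} = 0$, and $C\Pi_{\mathcal{R}} = C$. The last holds because $C$ annihilates $\mathcal{N}$, so $C\Pi_{\mathcal{N}} = 0$ and hence $C = C(\Pi_{\mathcal{N}} + \Pi_{\mathcal{R}}) = C\Pi_{\mathcal{R}}$; in particular $C\Pi_{\mathcal{R}}A(\x) = CA(\x)$ for every $\x$, a fact needed in both directions.

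For the ``if'' direction, suppose $\hat{M}$ is $(\epsilon,\delta)$-differentially private with image contained in $\mathcal{N}$, and set $\mathcal{M}(\x) = \hat{M}(\x) + \Pi_{\mathcal{R}}A(\x)$. First I would verify the invariant: $C\mathcal{M}(\x) = C\hat{M}(\x) + C\Pi_{\mathcal{R}}A(\x) = 0 + CA(\x)$, where $C\hat{M}(\x) = 0$ because $\hat{M}(\x) \in \mathcal{N}$. Next I would establish $\mathcal{N}$-subspace differential privacy by computing $\Pi_{\mathcal{N}}\mathcal{M}(\x) = \Pi_{\mathcal{N}}\hat{M}(\x) + \Pi_{\mathcal{N}}\Pi_{\mathcal{R}}A(\x) = \hat{M}(\x)$, using $\Pi_{\mathcal{N}}\hat{M} = \hat{M}$ (image in $\mathcal{N}$) together with $\Pi_{\mathcal{N}}\Pi_{\mathcal{R}} = 0$. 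Since $\Pi_{\mathcal{N}}\mathcal{M}$ is distributed exactly as the $(\epsilon,\delta)$-DP mechanism $\hat{M}$, the subspace inequality holds for every Borel $S \subseteq \mathcal{N}$, so $\mathcal{M}$ is induced subspace differentially private for $A$ and $C$.

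For the ``only if'' direction, suppose $\mathcal{M}$ is $(\epsilon,\delta)$-induced subspace differentially private and define $\hat{M}(\x) := \mathcal{M}(\x) - \Pi_{\mathcal{R}}A(\x)$. I would first show its image lies in $\mathcal{N}$: $C\hat{M}(\x) = C\mathcal{M}(\x) - C\Pi_{\mathcal{R}}A(\x) = CA(\x) - CA(\x) = 0$, where $C\mathcal{M}(\x) = CA(\x)$ is the invariant satisfied by $\mathcal{M}$. Because $\hat{M}(\x) \in \mathcal{N}$ almost surely, the identity $\Pi_{\mathcal{N}}\Pi_{\mathcal{R}} = 0$ gives $\hat{M}(\x) = \Pi_{\mathcal{N}}\hat{M}(\x) = \Pi_{\mathcal{N}}\mathcal{M}(\x)$, so $\hat{M}$ is distributed exactly as $\Pi_{\mathcal{N}}\mathcal{M}$. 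The remaining task is to upgrade the $\mathcal{N}$-subspace guarantee, an inequality quantified only over Borel $S \subseteq \mathcal{N}$, to full $(\epsilon,\delta)$-DP, quantified over all Borel $T \subseteq \mathbb{R}^n$. For this I would use that $\hat{M}(\x) \in \mathcal{N}$ almost surely, so $\Pr[\hat{M}(\x) \in T] = \Pr[\hat{M}(\x) \in T \cap \mathcal{N}]$; applying the subspace inequality to $S = T \cap \mathcal{N}$ and invoking the same containment for $\x'$ then yields the full DP inequality for $T$.

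The main obstacle is precisely this last measure-theoretic upgrade in the ``only if'' direction: the two notions are deliberately asymmetric, since subspace DP tests only sets inside $\mathcal{N}$ whereas ordinary DP tests all of $\mathbb{R}^n$, and the equivalence hinges on the observation that concentrating the noise on $\mathcal{N}$ makes the two quantifications coincide through intersection with $\mathcal{N}$. I would take care that $T \cap \mathcal{N}$ is genuinely a Borel subset of $\mathcal{N}$ (true since $\mathcal{N}$ is a closed linear subspace) and that the almost-sure containment applies identically to both $\x$ and $\x'$, so that no additional $\delta$ is incurred in the transfer.
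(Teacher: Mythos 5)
Your proposal is correct and follows essentially the same route as the paper's proof: both directions hinge on the identity $\Pi_{\mathcal{N}}\mathcal{M}(\x)=\hat{M}(\x)$ and on $C\Pi_{\mathcal{R}}A(\x)=CA(\x)$. If anything, your handling of the measure-theoretic transfer in the ``only if'' direction (intersecting $T$ with $\mathcal{N}$ rather than pushing $S$ forward through $\Pi_{\mathcal{N}}$, as the paper does) is the cleaner formulation of the same step.
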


\subsection{Induced Subspace Differentially Private Mechanisms for General Queries}\label{sec:general}

We now describe the use of the above two frameworks to construct \goodprivacy mechanisms.  We first introduce \goodprivacy mechanisms with Gaussian noises, then the pure (i.e. $\delta = 0$) versions with Laplace noises.
All mechanisms discussed in this section are additive mechanisms, having a general functional form as \eqref{eq:additive}.

Let the \emph{$\ell_{p}$ sensitivity} of the query function $A:\mathcal{X}^*\to \mathbb{R}^n$ be
$\Delta_{p}(A) = \sup_{{\bf x}\sim {\bf x}'}\left\Vert A({\bf x}) - A({\bf x}') \right\Vert _{p}$, which  measures how much a single individual's data can change the output of the query $A$. We measure the performance of a mechanism $M$ in terms of the expected squared error.
Given any query function $A: \mathcal{X}^* \to \mathbb{R}^n$, the worst-case expected squared error of a mechanism $M$ for $A$ is defined as $\err_M(A):=\sup_{\x\in \mathcal{X}^*}\E\left[\|M(\x)-A(\x)\|^2_2\right]$.
%, and we will call a mechanism $M$ is additive for a query $A$ with noise $\mathbf{e}$ if $M(\x) = A(\x)+\mathbf{e}$ for all $\x$. 

\noindent\textbf{Gaussian mechanisms}
%\paragraph{Gaussian mechanisms}
Recall the standard Gaussian mechanism, which adds a spherical noise to the output that depends on the $\ell_2$ sensitivity of $A$.   By an abuse of notation, the superscript $n$ over a probability distribution denotes the $n$-dimensional product distribution with the said marginal. 

\begin{lemma}[name=Gaussian mechanism~\cite{dwork2006our}, label=lem:gaussian]
For all $\epsilon, \delta>0$, and general query $A:\mathcal{X}^*\to \mathbb{R}^n$, let $c_{\epsilon, \delta}:= \epsilon^{-1}(1+\sqrt{1+\ln(1/\delta)})$.  
Then an additive mechanism for $A$ with noise $\mathbf{e}_{G}(A, \epsilon, \delta) \overset{d}{=} N(0;  \Delta_2(A)c_{\epsilon, \delta})^n$ where $N(0; \sigma)$ is the unbiased Gaussian distribution with variance $\sigma^2$ $(\epsilon, \delta)$-differentially private.
\end{lemma}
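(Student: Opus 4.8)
The plan is to establish the standard Gaussian mechanism privacy guarantee by reducing the $n$-dimensional problem to a one-dimensional analysis along the direction of maximal sensitivity, then invoking a tail bound on the privacy loss random variable. First I would fix two neighboring databases $\x\sim \x'$ and write $v := A(\x)-A(\x')$, so that $\|v\|_2\le \Delta_2(A)$ by definition of the $\ell_2$ sensitivity. Writing $\sigma := \Delta_2(A)\,c_{\epsilon,\delta}$, the two output distributions are $N(A(\x);\sigma)^n$ and $N(A(\x');\sigma)^n$, which are spherical Gaussians differing only by the mean shift $v$.

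The key observation is that the privacy loss random variable depends on the means only through the scalar displacement along $v$. Concretely, for an output $y$ drawn from $M(\x)$, the log-likelihood ratio is
\begin{equation}
\ln\frac{\Pr[M(\x)=y]}{\Pr[M(\x')=y]}=\frac{\|y-A(\x')\|_2^2-\|y-A(\x)\|_2^2}{2\sigma^2}=\frac{\langle y-A(\x),v\rangle}{\sigma^2}+\frac{\|v\|_2^2}{2\sigma^2}.
\end{equation}
Since $\langle y-A(\x),v\rangle$ is a one-dimensional Gaussian with mean $0$ and variance $\sigma^2\|v\|_2^2$ when $y\sim M(\x)$, the privacy loss is itself Gaussian, with mean $\|v\|_2^2/(2\sigma^2)$ and standard deviation $\|v\|_2/\sigma$. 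Thus the whole problem collapses to controlling a single univariate Gaussian, and the worst case is attained at $\|v\|_2=\Delta_2(A)$.

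The remaining step is to verify that the chosen constant $c_{\epsilon,\delta}=\epsilon^{-1}(1+\sqrt{1+\ln(1/\delta)})$ makes the privacy loss exceed $\epsilon$ with probability at most $\delta$, which yields the $(\epsilon,\delta)$ guarantee by the standard equivalence between a tail bound on the privacy loss and the definition of differential privacy. This amounts to bounding $\Pr[Z > \epsilon]\le \delta$ where $Z$ is the Gaussian privacy loss above with $\|v\|_2=\Delta_2(A)$; substituting $\sigma=\Delta_2(A)c_{\epsilon,\delta}$ gives mean $1/(2c_{\epsilon,\delta}^2)$ and standard deviation $1/c_{\epsilon,\delta}$, so the tail condition reduces to a clean inequality in $c_{\epsilon,\delta}$ alone via the Gaussian tail bound $\Pr[N(0,1)>t]\le e^{-t^2/2}$. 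The main obstacle is purely this calibration computation: checking that the algebra on $c_{\epsilon,\delta}$ closes out the sub-Gaussian tail estimate, and handling the off-by-constant bookkeeping in the tail bound so that the mean-plus-threshold condition is satisfied. Everything else is a direct consequence of the spherical symmetry reducing the analysis to one dimension, and I would simply cite the original analysis of \citet{dwork2006our} for the final numerical calibration rather than reprove it from scratch.
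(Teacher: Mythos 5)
The paper gives no proof of this lemma: it is imported verbatim from \citet{dwork2006our} (see the lemma's own citation), so there is no in-paper argument to compare yours against. Your outline is the standard and correct route --- fix neighbors, write the log-likelihood ratio of the two spherical Gaussians, observe that the privacy loss is a univariate Gaussian with mean $\|v\|_2^2/(2\sigma^2)$ and standard deviation $\|v\|_2/\sigma$, check monotonicity in $\|v\|_2$ so the worst case is $\|v\|_2=\Delta_2(A)$, and convert a tail bound $\Pr[\mathcal{L}>\epsilon]\le\delta$ into $(\epsilon,\delta)$-DP via the one-line union-bound decomposition $\Pr[M(\x)\in S]\le \Pr[M(\x)\in S,\ \mathcal{L}\le\epsilon]+\Pr[\mathcal{L}>\epsilon]\le e^{\epsilon}\Pr[M(\x')\in S]+\delta$. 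That last conversion step deserves to be written out rather than waved at, but it is standard.

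One substantive caution about the step you defer: the calibration is not mere bookkeeping, and it does not close for \emph{all} $\epsilon>0$ as the lemma's preamble suggests. With $\sigma=\Delta_2(A)c_{\epsilon,\delta}$ the tail condition is $\Pr[N(0,1)>\epsilon c_{\epsilon,\delta}-1/(2c_{\epsilon,\delta})]\le\delta$, and the threshold $\epsilon c_{\epsilon,\delta}-1/(2c_{\epsilon,\delta}) = \bigl(1+\sqrt{1+\ln(1/\delta)}\bigr)-\epsilon/\bigl(2(1+\sqrt{1+\ln(1/\delta)})\bigr)$ becomes negative once $\epsilon$ is large, so the sub-Gaussian tail estimate cannot deliver $\delta$ there. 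This is the well-known restriction of the classical Gaussian mechanism to small $\epsilon$ (the paper itself quietly imposes $\epsilon\in(0,1)$ in Algorithm~\ref{alg:gaussian_extended} and ``small enough $\epsilon$'' in Theorem~\ref{thm:sdp_corr_gaussian}). If you cite \citet{dwork2006our} for the calibration, you should also import its hypothesis on $\epsilon$; as literally stated for all $\epsilon,\delta>0$, the inequality you plan to verify is false.
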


Given $\epsilon, \delta>0$, a general query $A:\mathcal{X}^*\to \mathbb{R}^n$, and linear equality invariant $C\in \mathbb{R}^{n_c\times n}$, by projection (\Cref{thm:projection}) and extension (\Cref{thm:connect}), we can derive two \goodprivacy Gaussian mechanisms.  
The proofs of Corollaries~\ref{cor:projected_gaussian} and~\ref{cor:extended_gaussian} are both given in Appendix~\ref{app:general}.  

% Instead of adding the spherical noise, by \Cref{thm:projection} we can add a projected the noise in the null space $\mathcal{N}$ for the linear equality invariant, and get a Gaussian \goodprivacy mechanism.  Because our noise term is spherically symmetric, the error bound is controlled by the $\ell_2$ sensitivity of $A$ and the dimension of null space, $n-n_c$.

\begin{corollary}[name=Projected Gaussian mechanism, label=cor:projected_gaussian]
An additive mechanism $\mathcal{M}_{PG}$ for $A$ with noise $\Pi_\mathcal{N}\mathbf{e}_{G}(A, \epsilon, \delta)$ where $\mathbf{e}_{G}$ is defined in \Cref{lem:gaussian} is $(\epsilon, \delta)$-\goodprivacy for query $A$ and invariant $C$. 
Moreover, 
$\err_{\mathcal{M}_{PG}}(A) = (n-n_c)c_{\epsilon, \delta}^2\Delta_2(A)^2$.
\end{corollary}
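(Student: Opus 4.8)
The plan is to verify the two claims of Corollary~\ref{cor:projected_gaussian} separately: first that $\mathcal{M}_{PG}$ is $(\epsilon,\delta)$-\goodprivacy for $A$ and $C$, and second that its worst-case squared error equals $(n-n_c)c_{\epsilon,\delta}^2\Delta_2(A)^2$. The privacy claim should follow almost immediately by chaining two results already in hand. By \Cref{lem:gaussian}, the unmodified additive mechanism $M(\x) = A(\x) + \mathbf{e}_G(A,\epsilon,\delta)$ is $(\epsilon,\delta)$-differentially private. Since this is an additive mechanism in the sense of~\eqref{eq:additive}, I would simply invoke \Cref{cor:projection}: projecting the noise onto $\mathcal{N}$ yields $\mathcal{M}_{PG}(\x) = A(\x) + \Pi_\mathcal{N}\mathbf{e}_G(A,\epsilon,\delta)$, which is exactly the mechanism in the statement, and \Cref{cor:projection} certifies it is $(\epsilon,\delta)$-\goodprivacy for $A$ and $C$. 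So the privacy half is essentially a one-line citation.

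For the error computation, the key observation is that $\mathbf{e}_G \overset{d}{=} N(0; \sigma)^n$ is \emph{isotropic} Gaussian noise with per-coordinate variance $\sigma^2 = (\Delta_2(A)c_{\epsilon,\delta})^2$. I would compute $\err_{\mathcal{M}_{PG}}(A) = \sup_\x \E\|\mathcal{M}_{PG}(\x) - A(\x)\|_2^2 = \E\|\Pi_\mathcal{N}\mathbf{e}_G\|_2^2$, where the supremum is trivial because the error no longer depends on $\x$ (the noise is independent of $A(\x)$). The main step is then evaluating $\E\|\Pi_\mathcal{N}\mathbf{e}_G\|_2^2$. The clean way is to use that an orthogonal projection onto a subspace $\mathcal{N}$ of dimension $n - n_c$ (recall $\dim\mathcal{N} = n - \operatorname{rank}(C) = n - n_c$, since $C$ has rank $n_c$ by \Cref{def:invariant}) leaves an isotropic Gaussian isotropic within $\mathcal{N}$: writing $\Pi_\mathcal{N} = Q_\mathcal{N}Q_\mathcal{N}^\top$ with $Q_\mathcal{N}$ an orthonormal basis, the coordinates $Q_\mathcal{N}^\top\mathbf{e}_G$ are $n-n_c$ i.i.d.\ $N(0,\sigma^2)$ variables, and $\|\Pi_\mathcal{N}\mathbf{e}_G\|_2^2 = \|Q_\mathcal{N}^\top\mathbf{e}_G\|_2^2$ by orthonormality. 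Hence $\E\|\Pi_\mathcal{N}\mathbf{e}_G\|_2^2 = (n-n_c)\sigma^2 = (n-n_c)c_{\epsilon,\delta}^2\Delta_2(A)^2$, as claimed.

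Alternatively, I could bypass the basis argument with a trace identity: $\E\|\Pi_\mathcal{N}\mathbf{e}_G\|_2^2 = \E[\mathbf{e}_G^\top \Pi_\mathcal{N}^\top \Pi_\mathcal{N}\mathbf{e}_G] = \E[\mathbf{e}_G^\top \Pi_\mathcal{N}\mathbf{e}_G]$, using $\Pi_\mathcal{N}^\top\Pi_\mathcal{N} = \Pi_\mathcal{N}$ for an orthogonal projection, and then $\E[\mathbf{e}_G^\top \Pi_\mathcal{N}\mathbf{e}_G] = \sigma^2\operatorname{tr}(\Pi_\mathcal{N}) = \sigma^2(n-n_c)$ since the covariance of $\mathbf{e}_G$ is $\sigma^2 I$ and the trace of a rank-$(n-n_c)$ orthogonal projection equals its rank. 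Both routes give the same value.

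The computations here are genuinely routine, so there is no serious obstacle; the only point requiring a moment's care is the dimension bookkeeping, namely justifying $\dim\mathcal{N} = n - n_c$ from the rank-$n_c$ assumption on $C$, and being explicit that the supremum over $\x$ is attained trivially because the additive noise is independent of the data. I would state these two facts explicitly rather than leave them implicit, since they are what make the error expression clean and data-independent.
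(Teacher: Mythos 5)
Your proposal is correct and matches the paper's own argument: the privacy claim is obtained by citing \Cref{cor:projection} applied to the additive Gaussian mechanism of \Cref{lem:gaussian}, and the error is computed via the trace identity $\E\|\Pi_\mathcal{N}\mathbf{e}_G\|_2^2 = \sigma^2\operatorname{tr}(\Pi_\mathcal{N}^\top\mathbf{I}_n\Pi_\mathcal{N}) = (n-n_c)c_{\epsilon,\delta}^2\Delta_2(A)^2$, exactly as in Appendix~\ref{app:general}. Your explicit remarks on $\dim\mathcal{N} = n-n_c$ and the data-independence of the noise are fine additions but do not change the route.
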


%\fang{proof of the error in the supplementary material}
%\robin{Say a few words here about how \Cref{cor:projected_gaussian} is equivalent to \Cref{cor:projection}?}

To apply the extension framework of~\Cref{thm:connect},  one complication is how to design a differentially private mechanism with image in $\mathcal{N}$.  To handle this, we first project the query $A$ to the null space $\mathcal{N}$ and have a new query $A_\mathcal{N} = Q_\mathcal{N}^\top A: \mathcal{X}^*\to \mathbb{R}^{n-n_c}$. Then, compute the sensitivity of $A_\mathcal{N}$, $\Delta_2(A_\mathcal{N})$, and sample $\mathbf{e}_\mathcal{N}$ which consists of $n-n_c$ iid Gaussian noise with variance $(c_{\epsilon, \delta}\Delta_2(A_\mathcal{N}))^2$.  Finally, convert the noise to the original space $\mathbb{R}^n$ and add the true query $A(\x)$.  We define the mechanism formally below.
\begin{algorithm}[htb]
\caption{Gaussian \goodprivacy mechanism through extension}
\label{alg:gaussian_extended}
\textbf{Input}: a database $\x$, a query $A:\mathcal{X}^*\to \mathbb{R}^n$, linear equality invariant $C\in \mathbb{R}^{n_c\times n}$ with rank $n_c$, $\epsilon\in (0,1)$, and $\delta\in (0,1)$.
\begin{algorithmic}[1] %[1] enables line numbers
\STATE Compute $Q_\mathcal{N}\in \mathbb{R}^{n\times n-n_c}$ an collection of an orthonormal basis of $\mathcal{N}$, and $A_\mathcal{N}:= Q^\top_\mathcal{N}A$.
\STATE Let $c_{\epsilon, \delta} = \epsilon^{-1}(1+\sqrt{1+\ln(1/\delta)})$, and sample $\mathbf{e}_\mathcal{N}\overset{d}{=} N\left(0;c_{\epsilon, \delta}\Delta_2(A_\mathcal{N})\right)^{n-n_c}.$
\STATE \textbf{return} $A(\x)+Q_\mathcal{N} \mathbf{e}_\mathcal{N}$.
\end{algorithmic}
\end{algorithm}

\begin{corollary}[name=Extended Gaussian mechanism, label=cor:extended_gaussian]
An additive mechanism $\mathcal{M}_{EG}$ for $A$ with noise $\mathbf{e}_{EG}(A, \epsilon, \delta) = Q_\mathcal{N}e_\mathcal{N}$ where $e_\mathcal{N}\overset{d}{=} N\left(0;c_{\epsilon, \delta}\Delta_2(Q_\mathcal{N}^\top A)\right)^{n-n_c}$  is $(\epsilon, \delta)$-\goodprivacy for query $A$ and invariant $C$. 
Moreover, 
$\err_{\mathcal{M}_{EG}}(A) = (n-n_c)c_{\epsilon, \delta}^2\Delta_2(Q_\mathcal{N}^\top A)^2$.
\end{corollary}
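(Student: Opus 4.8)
The plan is to verify the two defining conditions of induced subspace differential privacy (\Cref{def:isubdp}) for the mechanism $\mathcal{M}_{EG}(\x) = A(\x) + Q_\mathcal{N}\mathbf{e}_\mathcal{N}$, and then compute its worst-case squared error. I would frame the entire argument as an application of the extension framework (\Cref{thm:connect}): it suffices to exhibit $\mathcal{M}_{EG}$ in the form $\hat{M}(\x) + \Pi_\mathcal{R}A(\x)$, where $\hat{M}$ is $(\epsilon,\delta)$-differentially private with image contained in $\mathcal{N}$. The natural candidate is $\hat{M}(\x) := \Pi_\mathcal{N}A(\x) + Q_\mathcal{N}\mathbf{e}_\mathcal{N}$. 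Since $A(\x) = \Pi_\mathcal{N}A(\x) + \Pi_\mathcal{R}A(\x)$, we immediately get $\mathcal{M}_{EG}(\x) = \hat{M}(\x) + \Pi_\mathcal{R}A(\x)$, matching the form required by \Cref{thm:connect}. The image of $\hat{M}$ lies in $\mathcal{N}$ because $\Pi_\mathcal{N}A(\x)\in\mathcal{N}$ and $Q_\mathcal{N}\mathbf{e}_\mathcal{N}\in\mathcal{N}$ (columns of $Q_\mathcal{N}$ span $\mathcal{N}$), so their sum stays in $\mathcal{N}$.

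The crux is therefore showing that $\hat{M}$ is $(\epsilon,\delta)$-differentially private. First I would observe that $\hat{M}$, viewed through the coordinate chart $Q_\mathcal{N}$, is exactly the standard Gaussian mechanism applied to the projected query $A_\mathcal{N} = Q_\mathcal{N}^\top A$. Concretely, $Q_\mathcal{N}^\top\hat{M}(\x) = Q_\mathcal{N}^\top\Pi_\mathcal{N}A(\x) + \mathbf{e}_\mathcal{N} = A_\mathcal{N}(\x) + \mathbf{e}_\mathcal{N}$, using $Q_\mathcal{N}^\top\Pi_\mathcal{N} = Q_\mathcal{N}^\top$ (since $Q_\mathcal{N}$ has orthonormal columns spanning $\mathcal{N}$) and $Q_\mathcal{N}^\top Q_\mathcal{N} = I_{n-n_c}$. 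The noise $\mathbf{e}_\mathcal{N}$ is an $(n-n_c)$-dimensional spherical Gaussian with standard deviation $c_{\epsilon,\delta}\Delta_2(A_\mathcal{N})$, so by \Cref{lem:gaussian} this coordinate-space mechanism is $(\epsilon,\delta)$-differentially private. Because the isometry $Q_\mathcal{N}$ is a fixed injective linear map between $\mathbb{R}^{n-n_c}$ and $\mathcal{N}$ applied independently of the data, it is a post-processing (and its inverse on $\mathcal{N}$ is also post-processing), so differential privacy transfers back to $\hat{M}$ on its image $\mathcal{N}$ without loss.

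For the error computation, I would use that $A(\x) = \mathcal{M}_{EG}(\x) - Q_\mathcal{N}\mathbf{e}_\mathcal{N}$ gives $\mathcal{M}_{EG}(\x) - A(\x) = Q_\mathcal{N}\mathbf{e}_\mathcal{N}$, hence $\|\mathcal{M}_{EG}(\x)-A(\x)\|_2^2 = \|Q_\mathcal{N}\mathbf{e}_\mathcal{N}\|_2^2 = \|\mathbf{e}_\mathcal{N}\|_2^2$, again because $Q_\mathcal{N}$ is an isometry on its column space. Taking expectations, $\E\|\mathbf{e}_\mathcal{N}\|_2^2$ is a sum of $n-n_c$ independent Gaussian variances, each equal to $(c_{\epsilon,\delta}\Delta_2(A_\mathcal{N}))^2$, yielding $(n-n_c)c_{\epsilon,\delta}^2\Delta_2(Q_\mathcal{N}^\top A)^2$. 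This quantity is independent of $\x$, so the supremum over $\x\in\mathcal{X}^*$ defining $\err_{\mathcal{M}_{EG}}(A)$ attains the same value, completing the claim.

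The main obstacle, and the one step deserving careful justification, is the privacy transfer in the second paragraph: one must argue cleanly that applying the fixed isometry $Q_\mathcal{N}$ preserves the $(\epsilon,\delta)$ guarantee. The subtlety is that differential privacy is stated for mechanisms into a measurable output space, and here we are composing the coordinate-space Gaussian mechanism with an injective linear map onto a lower-dimensional subspace of $\mathbb{R}^n$; I would handle this by invoking the closure of differential privacy under (data-independent) post-processing, noting that $Q_\mathcal{N}$ is Borel-measurable and that for any measurable $S\subseteq\mathbb{R}^n$, the preimage under $Q_\mathcal{N}$ is a measurable subset of $\mathbb{R}^{n-n_c}$, so the probability comparison for $\hat{M}$ reduces directly to that for the coordinate-space mechanism. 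Everything else is routine linear algebra with orthogonal projections and the definition of $\ell_2$ sensitivity.
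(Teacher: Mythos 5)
Your proposal is correct and follows essentially the same route as the paper: the paper likewise establishes privacy by exhibiting the mechanism in the form required by the extension framework (Theorem~\ref{thm:connect}), with the noisy part being the standard Gaussian mechanism applied to the projected query $A_\mathcal{N}=Q_\mathcal{N}^\top A$ and then mapped back via $Q_\mathcal{N}$, and it computes the error from the covariance $(c_{\epsilon,\delta}\Delta_2(Q_\mathcal{N}^\top A))^2\,Q_\mathcal{N}Q_\mathcal{N}^\top$, whose trace is $n-n_c$ by the orthonormality of $Q_\mathcal{N}$. Your write-up is in fact more explicit than the paper's about the post-processing step that transfers the $(\epsilon,\delta)$ guarantee through the fixed isometry $Q_\mathcal{N}$, which the paper leaves implicit.
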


Since $Q_\mathcal{N}$ consists of orthonormal columns, the $\ell_2$ sensitivity of $Q_\mathcal{N}^\top A$ is less than or equal to the sensitivity of the original query $A$, and the error in \Cref{cor:extended_gaussian} is no more than the error in \Cref{cor:projected_gaussian}.

\noindent\textbf{Laplace mechanisms}
Similarly, the standard Laplace mechanism adds independent product Laplace noise to the output that depends on the $\ell_1$ sensitivity of $A$. In what follows, $\Lap(b)$ denotes the univariate Laplace distribution with scale $b>0$, with density function $\frac{1}{2b}\exp\left(-\frac{|x|}{b}\right)$.
\begin{lemma}[name=Laplace mechanism~\citep{dwork2006calibrating}, label=lem:laplace]
Given $\epsilon>0$, and a query $A:\mathcal{X}^*\to \mathbb{R}^n$, an additive mechanism for $A$ with noise $\mathbf{e}_L(A, \epsilon)\overset{d}{=} \Lap(  \Delta_1(A)/\epsilon)^n$ is $(\epsilon, 0)$-differentially private.
\end{lemma}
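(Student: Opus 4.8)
The plan is to verify the differential privacy guarantee directly from the pointwise ratio of output densities, which is the standard route for additive mechanisms with a product noise distribution admitting an explicit density. Since the mechanism is additive, $M(\x) = A(\x) + \mathbf{e}_L$ with $\mathbf{e}_L$ a vector of $n$ iid $\Lap(b)$ coordinates where $b = \Delta_1(A)/\epsilon$, the output $M(\x)$ has a density that is simply the product Laplace density recentered at $A(\x)$. The first step is to write this density explicitly as
\begin{equation}
p_\x(y) = \prod_{i=1}^n \frac{1}{2b}\exp\left(-\frac{|y_i - A(\x)_i|}{b}\right),
\end{equation}
and to observe that because $\delta = 0$, it suffices to show the pointwise bound $p_\x(y) \le e^\epsilon\, p_{\x'}(y)$ for every $y \in \mathbb{R}^n$ and every pair of neighbors $\x \sim \x'$; integrating this inequality over any measurable set $S$ then yields $\Pr[M(\x)\in S] \le e^\epsilon \Pr[M(\x')\in S]$ as required by \Cref{def:dp}.

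The second step is to compute the log-ratio of the two densities at an arbitrary point $y$. The normalizing constants cancel, leaving
\begin{equation}
\frac{p_\x(y)}{p_{\x'}(y)} = \exp\left(\frac{1}{b}\sum_{i=1}^n\left(|y_i - A(\x')_i| - |y_i - A(\x)_i|\right)\right).
\end{equation}
The third step applies the reverse triangle inequality coordinatewise, $|y_i - A(\x')_i| - |y_i - A(\x)_i| \le |A(\x)_i - A(\x')_i|$, so that the summation is bounded by $\sum_i |A(\x)_i - A(\x')_i| = \|A(\x) - A(\x')\|_1$. By the definition of $\ell_1$ sensitivity this is at most $\Delta_1(A)$, and substituting $b = \Delta_1(A)/\epsilon$ gives the ratio bound $\exp(\Delta_1(A)/b) = e^\epsilon$. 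This holds uniformly in $y$, completing the pointwise bound and hence the proof.

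Honestly, there is no substantive obstacle here: this is the textbook Laplace mechanism argument, and the only care required is to handle the coordinatewise decomposition of the $\ell_1$ norm correctly and to note that the triangle-inequality step is tight precisely when the $\ell_1$ sensitivity is attained, which confirms that the scale $b = \Delta_1(A)/\epsilon$ cannot be reduced without violating the guarantee. The lone point worth stating cleanly is that the product structure of the noise is what lets the joint $\ell_1$ sensitivity bound factor through the sum of absolute differences; this is the mild conceptual content, and it is what makes the $\ell_1$ (rather than $\ell_2$) sensitivity the relevant quantity for Laplace noise.
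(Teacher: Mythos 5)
Your proof is correct and is exactly the standard density-ratio argument for the Laplace mechanism; the paper does not prove this lemma itself but cites it from \citet{dwork2006calibrating}, whose proof proceeds the same way (pointwise bound on the ratio of product Laplace densities via the triangle inequality and the $\ell_1$ sensitivity, then integration over $S$). No gaps.
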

 
% By our frameworks (\Cref{thm:projection,thm:connect}), we have two Laplace \goodprivacy mechanisms
% \begin{enumerate}
%     \item For the projection framework on general query $A$, we can modify Algorithm~\ref{alg:gaussian_projected}: instead of adding Gaussian noise, We add a Laplace noise with scale $\Delta_1(A)/\epsilon$ where $\Delta_1(A):=\sup_{\x,\x'\in \mathcal{X}^*:\x\sim\x'}\|A(\x)-A(\x')\|_1$.  
%     \item Similarly, for the extension framework, we can modify Algorithm~\ref{alg:gaussian_extended}, and add a Laplace noise with scale $\Delta_1(A_\mathcal{N})/\epsilon$.
% \end{enumerate}

Given $\epsilon>0$, a query $A:\mathcal{X}^*\to \mathbb{R}^n$, and a linear equality invariant $C\in \mathbb{R}^{n_c\times n}$, we have the following two \goodprivacy Laplace mechanisms.
\begin{corollary}[name=Projected Laplace mechanism, label=cor:projected_laplace]
The additive mechanism for $A$ with noise $\Pi_\mathcal{N}\mathbf{e}_{L}$ where $\mathbf{e}_{L}$ is defined in \Cref{lem:laplace} is $(\epsilon, 0)$-\goodprivacy for query $A$ and invariant $C$. 
\end{corollary}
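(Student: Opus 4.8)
The plan is to recognize \Cref{cor:projected_laplace} as a direct instance of the general projection result for additive mechanisms, namely \Cref{cor:projection}, applied to the standard Laplace mechanism of \Cref{lem:laplace}. By \Cref{lem:laplace}, the additive mechanism with noise $\mathbf{e}_{L}(A,\epsilon)\overset{d}{=}\Lap(\Delta_1(A)/\epsilon)^n$ is $(\epsilon,0)$-differentially private for query $A$. Since this mechanism is additive of the form~\eqref{eq:additive} with $\mathbf{e}=\mathbf{e}_{L}$, \Cref{cor:projection} immediately yields that the modified mechanism $\mathcal{M}(\x):=A(\x)+\Pi_{\mathcal{N}}\mathbf{e}_{L}$ is $(\epsilon,0)$-\goodprivacy for query $A$ and invariant $C$.

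The proof is therefore essentially a one-line invocation, but I would spell out the two defining conditions of \Cref{def:isubdp} to make the argument self-contained. For the invariant-respecting condition, I would check that $C\mathcal{M}(\x)=CA(\x)+C\Pi_{\mathcal{N}}\mathbf{e}_{L}=CA(\x)$, using that $\Pi_{\mathcal{N}}$ projects onto the null space $\mathcal{N}$ of $C$, so $C\Pi_{\mathcal{N}}=0$ and the noise is annihilated with probability one. For the $\mathcal{N}$-subspace differential privacy condition, I would note that the unprojected Laplace mechanism is $(\epsilon,0)$-differentially private, hence $\mathcal{N}$-subspace $(\epsilon,0)$-differentially private by \Cref{prop:unconstrained} (the corollary labeled \texttt{prop:unconstrained}), and that $\Pi_{\mathcal{N}}\mathcal{M}(\x)=\Pi_{\mathcal{N}}A(\x)+\Pi_{\mathcal{N}}\mathbf{e}_{L}$ equals $\Pi_{\mathcal{N}}M(\x)$ in distribution, where $M(\x)=A(\x)+\mathbf{e}_{L}$ is the original Laplace mechanism; since subspace differential privacy depends only on the distribution of the projected output, the subspace guarantee transfers verbatim.

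There is no real obstacle here: the content is entirely inherited from \Cref{lem:laplace} and \Cref{cor:projection}, and the only thing to verify is that the Laplace mechanism is additive, which is immediate from its definition as $A(\x)$ plus independent product Laplace noise. Unlike the Gaussian case, no sensitivity recomputation is needed because we are stating only the projected (not the extended) version, so I would simply observe that the error of this mechanism is the expected squared $\ell_2$-norm of $\Pi_{\mathcal{N}}\mathbf{e}_{L}$ and leave any explicit error expression to the accompanying discussion. The mild subtlety worth flagging is that, in contrast to the isotropic Gaussian case, the projection $\Pi_{\mathcal{N}}\mathbf{e}_{L}$ of product Laplace noise is no longer a product Laplace vector and its marginals become correlated; however, this does not affect either the invariant condition or the subspace privacy condition, since the former only uses $C\Pi_{\mathcal{N}}=0$ and the latter only uses equality in distribution of the projected output, so the argument goes through unchanged.
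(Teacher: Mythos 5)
Your proposal is correct and matches the paper's own argument: the paper proves \Cref{cor:projected_laplace} precisely by invoking \Cref{cor:projection} applied to the Laplace mechanism of \Cref{lem:laplace}, with no error bound to verify. Your expanded verification of the two conditions of \Cref{def:isubdp}, and the observation that the projected noise need not remain product Laplace, are accurate but simply unpack what \Cref{cor:projection} already delivers.
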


\begin{corollary}[name=Extended Laplace mechanism, label=cor:extended_laplace]
An additive mechanism $\mathcal{M}_{LE}$ for $A$ with noise $\mathbf{e}_{EL}(A, \epsilon) = A(\x)+Q_\mathcal{N}w_\mathcal{N}$ where $w_\mathcal{N}\overset{d}{=} \Lap\left(\Delta_1(Q_\mathcal{N}^\top A))/\epsilon\right)^{n-n_c}$ is $(\epsilon, 0)$-\goodprivacy for query $A$ and invariant $C$. 
\end{corollary}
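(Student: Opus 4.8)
The plan is to invoke the extension framework (\Cref{thm:connect}) directly, mirroring the proof of the Extended Gaussian mechanism (\Cref{cor:extended_gaussian}). First I would rewrite the mechanism's output into the canonical form demanded by that theorem. Splitting the true query as $A(\x) = \Pi_\mathcal{R} A(\x) + \Pi_\mathcal{N} A(\x)$ gives
\[
\mathcal{M}_{LE}(\x) = A(\x) + Q_\mathcal{N} w_\mathcal{N} = \Pi_\mathcal{R} A(\x) + \big(\Pi_\mathcal{N} A(\x) + Q_\mathcal{N} w_\mathcal{N}\big).
\]
Setting $\hat{M}(\x) := \Pi_\mathcal{N} A(\x) + Q_\mathcal{N} w_\mathcal{N}$, this matches the form $\mathcal{M} = \hat{M} + \Pi_\mathcal{R} A$, so by the ``if'' direction of \Cref{thm:connect} it suffices to verify that $\hat{M}$ has image contained in $\mathcal{N}$ and is $(\epsilon,0)$-differentially private.

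The image condition is immediate: $\Pi_\mathcal{N} A(\x) \in \mathcal{N}$ by definition of the projection, and $Q_\mathcal{N} w_\mathcal{N} \in \mathcal{N}$ because the columns of $Q_\mathcal{N}$ form a basis of $\mathcal{N}$, so $\hat{M}(\x) \in \mathcal{N}$ for every $\x$ and every realization of the noise. For privacy I would use the identities $\Pi_\mathcal{N} = Q_\mathcal{N} Q_\mathcal{N}^\top$ and $A_\mathcal{N} = Q_\mathcal{N}^\top A$ to rewrite
\[
\hat{M}(\x) = Q_\mathcal{N} Q_\mathcal{N}^\top A(\x) + Q_\mathcal{N} w_\mathcal{N} = Q_\mathcal{N}\big(A_\mathcal{N}(\x) + w_\mathcal{N}\big).
\]
Since $w_\mathcal{N}$ is a product of $n-n_c$ independent $\Lap(\Delta_1(A_\mathcal{N})/\epsilon)$ coordinates, the inner map $\x \mapsto A_\mathcal{N}(\x) + w_\mathcal{N}$ is exactly the standard Laplace mechanism applied to the projected query $A_\mathcal{N} = Q_\mathcal{N}^\top A$, hence $(\epsilon,0)$-differentially private by \Cref{lem:laplace}. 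Left-multiplication by the fixed matrix $Q_\mathcal{N}$ is a data-independent post-processing step, which preserves $(\epsilon,0)$-differential privacy; therefore $\hat{M}$ is $(\epsilon,0)$-differentially private, and \Cref{thm:connect} then yields that $\mathcal{M}_{LE}$ is $(\epsilon,0)$-\goodprivacy for query $A$ and invariant $C$.

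I do not expect a genuine obstacle here, as the argument is structurally identical to the Gaussian case. The one point warranting care is the sensitivity calibration: the Laplace scale is governed by the $\ell_1$ sensitivity of the \emph{projected} query $A_\mathcal{N} = Q_\mathcal{N}^\top A$, not of $A$ itself. This is precisely what makes the inner map a legitimate Laplace mechanism for $A_\mathcal{N}$ and lets \Cref{lem:laplace} apply verbatim; everything else reduces to the standard post-processing invariance of differential privacy.
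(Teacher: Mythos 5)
Your proof is correct and follows essentially the same route as the paper, which likewise justifies \Cref{cor:extended_laplace} by combining \Cref{thm:connect} with the standard Laplace mechanism applied to the projected query $A_\mathcal{N}=Q_\mathcal{N}^\top A$ and post-processing by $Q_\mathcal{N}$. In fact you spell out the decomposition $\mathcal{M}_{LE}=\Pi_\mathcal{R}A+Q_\mathcal{N}\bigl(A_\mathcal{N}+w_\mathcal{N}\bigr)$ more explicitly than the paper does, which leaves this verification as a one-line appeal to \Cref{thm:connect}.
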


We have thus far discussed four mechanisms, respectively derived using the projection and extension frameworks, and employing Gaussian and Laplace errors. In practice, a data curator would choose either projected mechanisms if seeking to impose invariants on an existing differentially private mechanism, and either extension mechanisms if augmenting a smaller private query while staying compatible with the invariants. The curator would prefer the Laplace mechanisms over the Gaussian ones if a pure (i.e. $\delta = 0$) subspace differential privacy guarantee is sought, although at the expense of heavier-tailed noises which may be undesirable for utility purposes. In what follows, we discuss mechanism options for the curator, if utility considerations are the most salient.

% \robin{A comment on practical recommendations: which one to pick for the practitioner? Also, discuss the error size in the projection mechanisms - even though they look like they're bigger but it's because they both started with fully DP mechanism can be conservative.}
% \fang{practical consideration: pure or approximate \goodprivacy, statistical meaning(conditional, projection etc), and error minimization vs overhead on existing mechanisms}

\subsection{Optimal Induced Subspace Differentially Private Mechanisms for Linear Queries}\label{sec:linear}

%\fang{I rewrite most of this section}\robin{READ this section and turn off the comments}
%\robin{For paragraph below and Corollary 17, the word "optimal" does not have "near" in front of it. Is this fine?}\fang{In \Cref{thm:sdp_corr_gaussian} I only use the first part, but not the furthermore part.  Maybe we can remove it?}

% Given a family of mechanisms $\mathscr{M}\subseteq \{M:\mathcal{X}^*\to \mathbb{R}^n\}$, the optimal error achievable by any $M\in \mathscr{M}$ for query $A$ is $
    % \opt_{\mathscr{M}}(A):=\inf_{M\in \mathscr{M}}\err_M(A)$.
    
As a consequence of \Cref{thm:connect}, \Cref{cor:err_constrain2dp} translates optimal accuracy enjoyed by a differentially private mechanism to optimal accuracy by an \goodprivacy mechanism.  Let $\opt_{\epsilon, \delta}(A)$ be the optimal error achievable by any $(\epsilon, \delta)$-differentially private mechanism, and $\opt_{\epsilon, \delta}^C(A)$ be the optimal error by any $(\epsilon, \delta)$-\goodprivacy mechanism 
for query $A$ and invariant $C$.

\begin{corollary}\label{cor:err_constrain2dp}  For all $\epsilon, \delta\ge 0$, general query $A:\mathcal{X}^*\to \mathbb{R}^n$, and linear equality invariant $C$, 
$\opt_{\epsilon, \delta}^C(A) = \opt_{\epsilon, \delta}(\Pi_{\mathcal{N}}A)$.  
% Furthermore, if $\hat{M}$ is an optimal $(\epsilon, \delta)$-differentially private mechanism for query $\Pi_\mathcal{N}A$, $\mathcal{M}:=\hat{M}+\Pi_\mathcal{R}A$ is an optimal $(\epsilon, \delta)$-\goodprivacy mechanism for $A$ and $C$.
\end{corollary}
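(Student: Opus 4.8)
The plan is to leverage the extension framework (\Cref{thm:connect}) to convert the optimization over \goodprivacy mechanisms into an optimization over ordinary differentially private mechanisms whose image lies in $\mathcal{N}$, and then to remove the image restriction via a post-processing argument.

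First I would establish an error identity. By \Cref{thm:connect}, every $(\epsilon,\delta)$-\goodprivacy mechanism $\mathcal{M}$ for $A$ and $C$ can be written uniquely as $\mathcal{M}(\x) = \hat{M}(\x) + \Pi_{\mathcal{R}} A(\x)$, where $\hat{M}$ is $(\epsilon,\delta)$-differentially private with image contained in $\mathcal{N}$, and conversely. Using the orthogonal decomposition $I = \Pi_{\mathcal{N}} + \Pi_{\mathcal{R}}$, one checks $\mathcal{M}(\x) - A(\x) = \hat{M}(\x) - \Pi_{\mathcal{N}} A(\x)$, so $\err_{\mathcal{M}}(A) = \err_{\hat{M}}(\Pi_{\mathcal{N}} A)$. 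As \Cref{thm:connect} provides a bijection between the two mechanism families that preserves this error, I would conclude $\opt^C_{\epsilon,\delta}(A) = \inf_{\hat{M}} \err_{\hat{M}}(\Pi_{\mathcal{N}} A)$, where the infimum ranges over all $(\epsilon,\delta)$-differentially private $\hat{M}$ whose image is contained in $\mathcal{N}$.

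It then remains to show this restricted infimum equals the unrestricted optimum $\opt_{\epsilon,\delta}(\Pi_{\mathcal{N}} A)$. One inequality is immediate: shrinking the feasible set can only raise the infimum, giving $\inf_{\hat{M}} \err_{\hat{M}}(\Pi_{\mathcal{N}} A) \ge \opt_{\epsilon,\delta}(\Pi_{\mathcal{N}} A)$. For the reverse, I would take any $(\epsilon,\delta)$-differentially private mechanism $M$ answering $\Pi_{\mathcal{N}} A$ and post-process it as $\hat{M}(\x) := \Pi_{\mathcal{N}} M(\x)$; post-processing preserves $(\epsilon,\delta)$-differential privacy, and this forces the image into $\mathcal{N}$. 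Because the target $\Pi_{\mathcal{N}} A(\x)$ lies in $\mathcal{N}$, the Pythagorean theorem gives $\|M(\x) - \Pi_{\mathcal{N}} A(\x)\|_2^2 = \|\hat{M}(\x) - \Pi_{\mathcal{N}} A(\x)\|_2^2 + \|\Pi_{\mathcal{R}} M(\x)\|_2^2 \ge \|\hat{M}(\x) - \Pi_{\mathcal{N}} A(\x)\|_2^2$, so the projection never increases the error. Taking expectations and suprema yields $\err_{\hat{M}}(\Pi_{\mathcal{N}} A) \le \err_M(\Pi_{\mathcal{N}} A)$, and minimizing over $M$ gives the matching inequality. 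Combining the two halves completes the proof.

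The main obstacle is this reverse inequality: a priori, insisting that the mechanism's image sit inside $\mathcal{N}$ might cost accuracy. Three observations defuse it---the target query $\Pi_{\mathcal{N}} A$ already takes values in $\mathcal{N}$, differential privacy is closed under the data-independent post-processing map $\Pi_{\mathcal{N}}$, and orthogonal projection can only shrink the squared distance to a point already in $\mathcal{N}$. I would state the post-processing invariance and the Pythagorean decomposition explicitly, as they are precisely what makes the image restriction cost-free.
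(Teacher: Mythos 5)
Your proposal is correct and follows essentially the same route as the paper's proof: both directions rest on the extension framework of \Cref{thm:connect}, the orthogonal decomposition of the squared error across $\mathcal{N}$ and $\mathcal{R}$, and the observation that post-processing by $\Pi_{\mathcal{N}}$ preserves differential privacy while never increasing the distance to a target already in $\mathcal{N}$. The only difference is organizational---you first package the correspondence as an error-preserving bijection and then remove the image restriction, whereas the paper proves the two inequalities directly---but the mathematical content is identical.
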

We defer the proof to Appendix~\ref{app:optimal}.  Informally, for any differentially private mechanism we use the extension framework in \Cref{thm:connect} to construct an \goodprivacy mechanism.  Because our proof is constructive, we can translate existing near optimal differentially private mechanisms to \goodprivacy ones.  

We demonstrate this translation with a near-optimal (i.e. mean squared error with a small multiplicative factor) correlated Gaussian mechanism for linear queries from  \citet{nikolov2013geometry}.
Specifically, first design a differential private mechanism $\hat{M}$ for $A_\mathcal{N}:=Q_\mathcal{N}^\top A$, and extend it to a subspace differentially private mechanism by Theorem~\ref{thm:connect}. Because the mean squared error is invariant under rotation, $\err_{\hat{M}}(\Pi_\mathcal{N}A) = \err_{\hat{M}}(A_\mathcal{N})$.  Therefore, if $\hat{M}$ is the near optimal correlated Gaussian noise mechanism for $A_\mathcal{N}$, the resulting \goodprivacy mechanism  is also near optimal by \Cref{cor:err_constrain2dp}.  Details of this mechanism is spelled out in \Cref{alg:subspace_gaussian} which we, together with the proof for \Cref{thm:sdp_corr_gaussian} below, defer to Appendix~\ref{app:optimal}.

\begin{theorem}\label{thm:sdp_corr_gaussian}
Given $\epsilon, \delta>0$, a linear query $A: \mathbb{R}^d\to \mathbb{R}^n$, and a linear equality invariant $C\in \mathbb{R}^{n_c\times n}$, \Cref{alg:subspace_gaussian} is an efficient $(\epsilon, \delta)$-\goodprivacy mechanism such that for all small enough $\epsilon$ and all $\delta$ small enough with respect to $\epsilon$ satisfies 
$$err_{\mathcal{M}}(A) = O(\log^2(n-n_c)\log 1/\delta)\opt_{\epsilon, \delta}^{C}(A).$$
\end{theorem}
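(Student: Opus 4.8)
The plan is to run the near-optimal correlated Gaussian mechanism of \citet{nikolov2013geometry} not on the original query $A$, but on its projection into the null space, and then lift the result back via the extension framework of \Cref{thm:connect}. Concretely, \Cref{alg:subspace_gaussian} first computes an orthonormal basis $Q_\mathcal{N}\in\mathbb{R}^{n\times(n-n_c)}$ of $\mathcal{N}$ and forms the reduced linear query $A_\mathcal{N}=Q_\mathcal{N}^\top A:\mathcal{X}^*\to\mathbb{R}^{n-n_c}$; it then runs the Nikolov et al. mechanism $\hat{M}$ on $A_\mathcal{N}$ to obtain an output in $\mathbb{R}^{n-n_c}$, and finally returns $\mathcal{M}(\x)=\Pi_\mathcal{R}A(\x)+Q_\mathcal{N}\hat{M}(\x)$. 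Since $\hat{M}$ approximates $A_\mathcal{N}$, the term $Q_\mathcal{N}\hat{M}(\x)$ approximates $Q_\mathcal{N}A_\mathcal{N}(\x)=\Pi_\mathcal{N}A(\x)$, so $\mathcal{M}$ approximates $A$ while respecting $C$.

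For privacy, I would first note that $\hat{M}$ is $(\epsilon,\delta)$-differentially private for $A_\mathcal{N}$, and that applying the fixed isometric map $Q_\mathcal{N}$ is post-processing, so $Q_\mathcal{N}\hat{M}$ is $(\epsilon,\delta)$-differentially private with image contained in $\mathcal{N}$. By the extension framework \Cref{thm:connect}, adding the deterministic invariant component $\Pi_\mathcal{R}A(\x)$ yields a mechanism that is exactly $(\epsilon,\delta)$-\goodprivacy for query $A$ and invariant $C$. Efficiency follows because computing $Q_\mathcal{N}$ and the projections is routine linear algebra and $\hat{M}$ is itself efficient.

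For the error guarantee, the key fact is that the $\ell_2$ error is invariant under the isometry $Q_\mathcal{N}$. Writing $A=\Pi_\mathcal{N}A+\Pi_\mathcal{R}A$, we get $\mathcal{M}(\x)-A(\x)=Q_\mathcal{N}\hat{M}(\x)-\Pi_\mathcal{N}A(\x)=Q_\mathcal{N}(\hat{M}(\x)-A_\mathcal{N}(\x))$, so $\err_\mathcal{M}(A)=\err_{\hat{M}}(A_\mathcal{N})$ because $\|Q_\mathcal{N}v\|_2=\|v\|_2$. Now I invoke the near-optimality of $\hat{M}$: since $A_\mathcal{N}$ is a linear query in dimension $n-n_c$, \citet{nikolov2013geometry} gives $\err_{\hat{M}}(A_\mathcal{N})=O(\log^2(n-n_c)\log(1/\delta))\,\opt_{\epsilon,\delta}(A_\mathcal{N})$ in the stated regime of small $\epsilon$ and small $\delta$. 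Finally, the same isometry, together with the fact that any differentially private mechanism for $\Pi_\mathcal{N}A$ may be post-composed with $\Pi_\mathcal{N}$ without increasing its error, gives $\opt_{\epsilon,\delta}(A_\mathcal{N})=\opt_{\epsilon,\delta}(\Pi_\mathcal{N}A)$, and \Cref{cor:err_constrain2dp} identifies this with $\opt_{\epsilon,\delta}^{C}(A)$. Chaining these relations yields the claimed bound.

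The main obstacle is less the structure of the argument, which is a clean reduction, than correctly importing and instantiating the Nikolov et al. guarantee. In particular, I must verify that their theorem applies to $A_\mathcal{N}$ so that the polylogarithmic factor is in the reduced dimension $n-n_c$ rather than $n$ (this is precisely the gain from working inside $\mathcal{N}$), and that the regime ``small enough $\epsilon$ and $\delta$ small enough relative to $\epsilon$'' under which their SDP-based characterization of the optimal error is tight is inherited verbatim. A secondary point requiring care is the identity $\opt_{\epsilon,\delta}(\Pi_\mathcal{N}A)=\opt_{\epsilon,\delta}(A_\mathcal{N})$: I would justify it by reducing to mechanisms supported on $\mathcal{N}$ via the post-processing $\Pi_\mathcal{N}$, which are in isometric bijection (through $Q_\mathcal{N}^\top$ and $Q_\mathcal{N}$) with mechanisms for $A_\mathcal{N}$ of identical privacy and error.
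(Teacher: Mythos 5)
Your proposal is correct and follows essentially the same route as the paper's own proof: run the \citet{nikolov2013geometry} mechanism on the reduced query $Q_\mathcal{N}^\top A$, lift via the extension framework of \Cref{thm:connect}, use the isometry $Q_\mathcal{N}$ to equate errors, and close the loop with \Cref{cor:err_constrain2dp}. The only cosmetic difference is that you assert the equality $\opt_{\epsilon,\delta}(Q_\mathcal{N}^\top A)=\opt_{\epsilon,\delta}(\Pi_\mathcal{N}A)$ while the paper only proves (and only needs) the inequality $\opt_{\epsilon,\delta}(Q_\mathcal{N}^\top A)\le\opt_{\epsilon,\delta}(\Pi_\mathcal{N}A)$; both are justified by the same post-processing argument.
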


We may use the same idea to convert an $k$-norm mechanism~\citep{hardt2010geometry,bhaskara2012unconditional} to an $(\epsilon, 0)$-\goodprivacy one.  \citet{bhaskara2012unconditional} proposed an $(\epsilon, 0)$-differentially private $k$-norm mechanism whose approximation ratio of mean squared error is $O((\log n)^2)$ for any linear query with image in $\mathbb{R}^n$.  We can run the $k$-norm mechanism on query $A_\mathcal{N}$ whose mean squared error is $O((\log (n-n_c))^2)\opt_{\epsilon,0}(A_\mathcal{N})$.  Then, by \Cref{cor:err_constrain2dp}, the output can be converted to an $(\epsilon, 0)$-\goodprivacy mechanism, with an approximation ratio $O((\log n-n_c)^2)$.

\section{Statistical and Practical Considerations}\label{sec:discussion}

\noindent\textbf{Unbiasedness of projection algorithms}
The projection algorithms proposed in this paper, be they Laplace or Gaussian, are provably \emph{unbiased} due to their additive construction. In fact, we have the following result.
\begin{corollary}\label{cor:unbiasedness}
Any mechanism of the form $\mathcal{M}(\x):=A(\x)+\Pi_{\mathcal{N}}\mathbf{e}$ as defined in Corollary~\ref{cor:projection},  where ${\bf e}$ is random noise with $\E\left({\bf e}\right) = 0$, is unbiased in the sense that
\begin{equation*}
\E\left[\mathcal{M}({\bf x}) \mid  A({\bf x})\right] = A\left({\bf x}\right).    
\end{equation*}
\end{corollary}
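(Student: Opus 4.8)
The plan is to read the claim directly off the additive structure of $\mathcal{M}$, using the two properties built into the additive-mechanism form~\eqref{eq:additive}: the noise $\mathbf{e}$ is mean-zero by hypothesis, and $\mathbf{e}$ is independent of the query value $A(\x)$. Because the query $A$ is a deterministic function, conditioning on $A(\x)$ contributes no information beyond fixing its value, so all randomness in $\mathcal{M}(\x)$ enters through $\mathbf{e}$ alone.

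First I would expand $\mathcal{M}(\x) = A(\x) + \Pi_{\mathcal{N}}\mathbf{e}$ and apply linearity of conditional expectation,
$$\E[\mathcal{M}(\x)\mid A(\x)] = \E[A(\x)\mid A(\x)] + \Pi_{\mathcal{N}}\,\E[\mathbf{e}\mid A(\x)].$$
The first summand equals $A(\x)$, since $A(\x)$ is measurable with respect to the $\sigma$-algebra it generates; and the fixed (non-random) projection matrix $\Pi_{\mathcal{N}}$ pulls outside the expectation in the second summand.

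It then remains only to evaluate $\E[\mathbf{e}\mid A(\x)]$. Here I would invoke the independence of $\mathbf{e}$ and $A(\x)$ supplied by~\eqref{eq:additive}, which collapses the conditional expectation to the unconditional one, $\E[\mathbf{e}\mid A(\x)] = \E[\mathbf{e}] = 0$. Substituting back gives $\E[\mathcal{M}(\x)\mid A(\x)] = A(\x)$, as claimed.

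The argument is essentially immediate and presents no real obstacle; the only point deserving explicit mention is the independence step, since the conclusion hinges on $\mathbf{e}$ carrying no information about $A(\x)$. This is exactly what the additive form~\eqref{eq:additive} guarantees, and it is also what makes the result a genuine \emph{conditional} unbiasedness statement---the projected noise averages to zero even once the true query value is revealed---rather than a weaker marginal one.
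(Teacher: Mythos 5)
Your proof is correct and follows essentially the same route as the paper's: the conditional expectation of the noise term $\E[\Pi_{\mathcal{N}}\mathbf{e}\mid A(\x)]$ vanishes because $\mathbf{e}$ is independent of $A(\x)$ and mean-zero, and $\Pi_{\mathcal{N}}$ is a fixed linear map. Your write-up simply makes the linearity and measurability steps more explicit than the paper's one-line justification.
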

Corollary~\ref{cor:unbiasedness} stands because the conditional expectation of its noise component, $\E\left[\Pi_{\mathcal{N}}{\bf e} \mid A({\bf x})\right]$, is zero, due to the independence of ${\bf e}$ from $A({\bf x})$, and the nature of the projection operation. All projection mechanisms proposed in this paper are of this type, hence are unbiased. As for the proposed extension algorithms, their purpose is to augment existing DP mechanisms in a way that satisfy mandated invariants, thus their unbiasedness hinge on the unbiasedness of the initial mechanism which they extend. For applications in which the data curator has the freedom to design the privacy mechanism from scratch, projection mechanisms are the recommended way to proceed. Indeed, both numerical demonstrations in Section~\ref{sec:example} applied to the county-level 2020 Census demonstration data and the spatio-temporal university campus data utilize the projection mechanisms, guaranteeing the unbiasedness of the sanitized data products under their respective invariant constraints.

\noindent\textbf{Transparency and statistical intelligibility}
%Equivalence with Invariant Conditionalization 
%\fang{better place? 4.3? Gaussian? Other mechanisms closure property. pure or not pure}
%Unbiased and transparency%
%The construction of the invariant-respecting Gaussian mechanism has its special statistical significance.
%The invariant-respecting Gaussian mechanism in Definition~\ref{def:gaussian} 
% Mechanisms constructed via \Cref{cor:projection} carries a special advantage in terms of their statistical intelligibility. Gaussian mechanisms, in particular, further enjoy the  equivalence between projection and probabilistic conditioning. We defer the discussion on this matter to Section~\ref{sec:discussion}. 
Subspace differentially private additive mechanisms carry a special advantage when examined through the lens of downstream statistical analysis of the output query. All Gaussian and Laplace mechanisms examined in this paper (\cref{cor:projected_gaussian,cor:extended_gaussian,cor:projected_laplace,cor:extended_laplace,thm:sdp_corr_gaussian}), be they obtained via projection or extension, linearly combine the confidential query with a noise term that is publicly specified. Just like standard differential privacy, mechanisms of subspace differential privacy described in this paper are \emph{transparent} \citep{abowd2016economic}, a prized property that brought revolutionary change to the literature of statistical disclosure limitation by ridding obscure procedures. Moreover, the employed noise terms have probability distributions that are fully characterized and independent of the confidential query. This grants the mechanisms \emph{statistical intelligibility} \citep{gong2020congenial}, making the output query eligible for both analytical and simulation-based (such as bootstrap) statistical analysis and uncertainty quantification.

% As a special case, if $M$ is the widely employed spherical Gaussian mechanism (to be defined in~\Cref{lem:gaussian}), the constructed $\mathcal{M}$ further enjoys an equivalence between the projection to, and the probabilistic conditioning on the invariant margin. This greatly facilitates principled downstream statistical analysis. We defer the discussion on this matter to \Cref{sec:discussion}. \jie{This paragraph is a bit early? maybe we should move it to section 3.3?}
%As alluded to after~\Cref{cor:projection}, 

In the special case that the original unconstrained differentially private mechanism is spherical Gaussian, defined in~\Cref{lem:gaussian}, the \goodprivacy mechanism resulting from projection produces a random query that is distributionally equivalent to that obtained via the standard probabilistic conditioning of the unconstrained mechanism, where the conditioning event is precisely the invariants that the curator seeks to impose. 

\begin{theorem}\label{thm:conditioning}
If the additive mechanism $M$ in~\Cref{cor:projection} is spherical Gaussian as defined in~\Cref{lem:gaussian}, the corresponding modified mechanism $\mathcal{M}$ has a probability distribution equivalent to the distribution of $M$ conditional on the invariant being true. That is,
$$
\mathcal{M}(\x)  \overset{d}{=} {M}(\x) \mid C{M}(\x)=CA(\x).$$
\end{theorem}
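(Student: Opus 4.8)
The plan is to reduce the claimed distributional equivalence to the rotational invariance of the spherical Gaussian noise. First I would rewrite the conditioning event purely in terms of the noise. Since $M(\x) = A(\x) + \mathbf{e}$, the event $CM(\x) = CA(\x)$ is equivalent to $C\mathbf{e} = 0$, and because $\mathcal{N} = \{v : Cv = 0\}$ with $\mathcal{R} = \mathcal{N}^\perp$, this is in turn equivalent to $\mathbf{e}\in\mathcal{N}$, i.e. $\Pi_{\mathcal{R}}\mathbf{e} = 0$. Thus the right-hand side of the claim is the law of $A(\x) + \mathbf{e}$ conditioned on $\Pi_{\mathcal{R}}\mathbf{e} = 0$, whereas the left-hand side $\mathcal{M}(\x)$ is $A(\x) + \Pi_{\mathcal{N}}\mathbf{e}$ by \Cref{cor:projection}.

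Next I would decompose the noise along the orthogonal direct sum $\mathbb{R}^n = \mathcal{N}\oplus\mathcal{R}$. Choosing orthonormal bases $Q_{\mathcal{N}}$ and $Q_{\mathcal{R}}$, I write $\mathbf{e} = Q_{\mathcal{N}}u + Q_{\mathcal{R}}v$ with $u = Q_{\mathcal{N}}^\top\mathbf{e}\in\mathbb{R}^{n-n_c}$ and $v = Q_{\mathcal{R}}^\top\mathbf{e}\in\mathbb{R}^{n_c}$. The key step is that when $\mathbf{e}$ is spherical Gaussian $N(0,\sigma^2 I_n)$ and $[Q_{\mathcal{N}}\ Q_{\mathcal{R}}]$ is an orthogonal matrix, the transformed vector $(u,v)$ is again spherical Gaussian; in particular $u$ and $v$ are \emph{independent}, with $u\sim N(0,\sigma^2 I_{n-n_c})$ and $v\sim N(0,\sigma^2 I_{n_c})$. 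Equivalently, the two projections $\Pi_{\mathcal{N}}\mathbf{e} = Q_{\mathcal{N}}u$ and $\Pi_{\mathcal{R}}\mathbf{e} = Q_{\mathcal{R}}v$ are independent.

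I would then finish by identifying the regular conditional distribution. Conditioning on $\Pi_{\mathcal{R}}\mathbf{e} = 0$ is conditioning on $v = 0$; by independence the conditional law of $u$ given $v = 0$ coincides with its marginal law $N(0,\sigma^2 I_{n-n_c})$. Hence $M(\x)\mid\{\Pi_{\mathcal{R}}\mathbf{e}=0\}$ equals $A(\x) + Q_{\mathcal{N}}u + Q_{\mathcal{R}}\cdot 0 = A(\x) + \Pi_{\mathcal{N}}\mathbf{e}$ in distribution, which is precisely $\mathcal{M}(\x)$.

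The main obstacle is that the conditioning event $\{\Pi_{\mathcal{R}}\mathbf{e} = 0\}$ has probability zero, so ``conditioning on the invariant'' must be interpreted through a regular conditional distribution (disintegration) rather than elementary conditional probability. I would address this using the product structure $u\perp v$: the map $v\mapsto N(0,\sigma^2 I_{n-n_c})$ is a version of the conditional law of $u$ given $v$ that is constant, hence continuous, in $v$, so its evaluation at $v = 0$ is unambiguous and agrees with the standard disintegration of the Gaussian along the affine fibers $\{\Pi_{\mathcal{R}}\mathbf{e} = v\}$. Alternatively one may condition on $\|\Pi_{\mathcal{R}}\mathbf{e}\|\le\eta$ and let $\eta\to 0$, where independence again shows the limiting law of $\Pi_{\mathcal{N}}\mathbf{e}$ is unchanged. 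Once this null-set technicality is settled, the conclusion is immediate from the independence of orthogonal projections of a spherical Gaussian.
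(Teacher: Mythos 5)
Your proof is correct, but it takes a somewhat different route from the paper's. The paper argues by moment matching: it notes that both $\mathcal{M}(\x)$ and the conditional law of $M(\x)$ given $CM(\x)=CA(\x)$ are Gaussian (closure of the Gaussian family under linear marginalization and conditionalization), computes that the former has covariance $(\Delta_2(A)c_{\epsilon,\delta})^2\Pi_{\mathcal{N}}$ while the latter has covariance $(\Delta_2(A)c_{\epsilon,\delta})^2(\mathbf{I}-C^{\top}(CC^{\top})^{-1}C)$ via the standard Gaussian conditioning formula, and observes these matrices coincide because both are the orthogonal projector onto $\mathcal{N}$. You instead derive the conditional law from first principles: rotate into the orthogonal decomposition $\mathbb{R}^n=\mathcal{N}\oplus\mathcal{R}$, use sphericity to get independence of $\Pi_{\mathcal{N}}\mathbf{e}$ and $\Pi_{\mathcal{R}}\mathbf{e}$, and conclude that conditioning on $\Pi_{\mathcal{R}}\mathbf{e}=0$ leaves the $\mathcal{N}$-component's law untouched. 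The two arguments rest on the same underlying fact (the conditional covariance is $\sigma^2\Pi_{\mathcal{N}}$), but yours is more self-contained in that it does not cite the conditional-Gaussian formula as a black box, and it is more careful on one point the paper glosses over: the conditioning event has probability zero, and you explicitly justify the interpretation via a regular conditional distribution (or the $\eta\to 0$ limit), which the independence structure makes unambiguous. The paper's version is shorter and makes transparent exactly where sphericity is needed (so that $\mathbf{I}-C^{\top}(CC^{\top})^{-1}C$ equals the orthogonal projector); your version makes transparent why the result fails for non-spherical noise (the two projections are then no longer independent). Both are valid.
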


The proof of \Cref{thm:conditioning} is given in Appendix~\ref{app:spherical-proof}.  The equivalence with conditionalization is particularly valuable for Bayesian inference based on the privatized query, as the analyst may coherently utilize all available information.  We note here however, that \Cref{thm:conditioning} results from unique properties of the spherical Gaussian distribution. In general, the projection operation aligns closer with marginalization, and cannot produce the equivalent distribution as conditionalization. Nevertheless, the happy statistical consequence of \Cref{thm:conditioning} may still be widely impactful, thanks to the ubiquity of the spherical Gaussian mechanism.

\noindent\textbf{Implementation: distributed privatization}
In local differential privacy, we consider the identity mapping as the query function $A$, and the private mechanism directly infuses entry-wise noise into the entire confidential dataset before releasing it to the public. The confidential dataset, $\x$, is often gathered by a number of local agents -- nodes, sensors, survey workers -- each responsible for one (or more) entries of $\x$. Distributed privatization can be valuable in local differential privacy, as it ensures individual data contributors' information is protected the moment it leaves the local agent. 

For all additive subspace differentially private mechanisms proposed in this work, distributed privatization may be achieved,  if the local agents are capable of simulating the same noise component. The synchronized simulation can be implemented --  hardware permitting -- by sharing a common seed across the different local sensors or workers. An instance of distributed privatization is spelled out in Algorithm~\ref{alg:distributed} in Appendix~\ref{app:distributed}, which works for arbitrary linear equality invariant $C$.

\section{ Numerical Examples}\label{sec:example}

%\jie{one thought: should we put discussion earlier than numerical examples? also maybe we can shorten the numerical example to one sentence or so (for saving space) as most details are in the appendix anyway.}

% \noindent\textbf{2020 Census demonstration data}
% We consider the publication of induced subspace differentially private county-level Census population counts, subject to the invariant of state population size, using the November 2020 vintage privacy-protected demonstration files curated by IPUMS NHGIS \cite{ipums2020das}. Right panel of Figure~\ref{fig:census_trend} showcases the county-level errors from ten runs of the projected Laplace $(\epsilon,0)$-induced subspace differentially private mechanism of \Cref{cor:projected_laplace}, applied to the counties of Illinois arranged in increasing true population sizes.  Compared with the DAS errors (red squares) which show a clear negative bias trend, the proposed mechanism is provably unbiased, due to its additive errors being projected from unbiased and unconstrained random variables. On the other hand, these errors span a similar scale compared to the DAS errors. Details of the implementation are given in Appendix~\ref{app:example}, and results for additional states in Figure~\ref{fig:census_sdp_county2}.  

%\noindent\textbf{2020 Census demonstration data}

\subsection{2020 Census Demonstration Data}

We consider the publication of induced subspace differentially private county-level Census population counts, subject to the invariant of state population size, using the November 2020 vintage privacy-protected demonstration files curated by IPUMS NHGIS \cite{ipums2020das}. These data files link together the original tables from the 2010 Census Summary Files (CSF), here treated as the confidential values, and the trial runs of the Census Bureau's 2020 Disclosure Avoidance System (DAS) applied to the CSF. All these datasets are publicly available at the cited source. For our demonstration, the privacy loss budget is set to accord exactly to the Census Bureau's specification, with $\epsilon = 0.192 = 4 \text{ (total)} \times 0.16 \text{ (county level)} \times 0.3 \text{ (population query)}$. 

Right panel of Figure~\ref{fig:census_trend} showcases the county-level errors from ten runs of the projected Laplace $(\epsilon,0)$-induced subspace differentially private mechanism of \Cref{cor:projected_laplace}, applied to the counties of Illinois arranged in increasing true population sizes.  Compared with the DAS errors (red squares) which show a clear negative bias trend, the proposed mechanism is provably unbiased, due to its additive errors being projected from unbiased and unconstrained random variables. On the other hand, these errors span a similar scale compared to the DAS errors. Figure~\ref{fig:census_sdp_county2} in Appendix~\ref{app:example} shows the application of the projected Laplace $(\epsilon,0)$-induced subspace differentially private mechanism to an additional ten states, for which the TopDown algorithm incurred decidedly negatively biased errors. Details of how these states were identified are given in Section~\ref{sec:intro}. We can make similar observations from Figure~\ref{fig:census_sdp_county2} about the errors from the proposed mechanism as we did from Figure~\ref{fig:census_trend}, including their unbiasedness yet a similar error scale compared to the DAS errors.

\subsection{Spatio-temporal Dataset}

We consider the publication of time series derived from WiFi log data on connections of mobile devices with nearby access points from a large university campus (Tsinghua University)~\cite{sui2016characterizing} consisting of $3243$ fully anonymized individuals and the top $20$ most popularly visited buildings in one day. \footnote{Data was collected under the standard consent for Wifi access on university campus. Interested reader may contact the authors of \cite{sui2016characterizing} to inquire access to the dataset.} The raw data recorded whether an individual appears in each of the building in each of the hours on one day. The data were tabulated into hourly time series for $14$ clusters of individuals obtained through simple $K$-means, to represent hypothetical group memberships with distinct travel patterns. 

The invariants we consider are of two types, motivated by needs for building management, energy-control, and group activity scheduling: 1) the total number of person-hours spent at each building every hour from all groups, and 2) the total number of person-hours spent at each building by every group over 24 hours. The query under consideration is $14\text{ (groups)}\times 24\text{ (hours)}\times 20\text{ (location)} = 6720$ dimensional, subject to a $(24 + 14 - 1)\times 20 = 740$-dimensional linear constraint. 
	
We apply the projection Gaussian mechanism in \Cref{cor:projected_gaussian}.
%, where for simplicity of comparison, the scale of the elementwise Gaussian noise is set to $1$. 
The comparison of confidential data and one run of the induced subspace differentially private mechanism is displayed in Figure~\ref{fig:tsinghua}. 
%The mechanism is provably unbiased, and exhibits some loss of scale due to the numerous linear constraints imposed. Over 50 repetitions of the mechanism, the median standard deviation of the elementwise additive errors is $0.88$, with $5\%$ and $95\%$ quantiles at $(0.86, 0.91)$ respectively.
The mechanism is again provably unbiased, although the errors exhibit a slight loss of scale due to the numerous linear constraints imposed. Over 50 repetitions of the mechanism, the median standard deviation of the elementwise additive errors is $0.88$ (relative to one unit), with $5\%$ and $95\%$ quantiles at $(0.86, 0.91)$ respectively. Results of the simulation are displayed in Figure~\ref{fig:tsinghua} of Appendix~\ref{app:example}.

\section{Conclusion and Future Work} \label{sec:conclusion}

In this paper, we proposed the concept of subspace differential privacy to make explicit the mandated invariants imposed on private data products, and discussed the projection and extension designs of induced differentially private mechanisms. The invariants we consider are in the form of linear equalities, including sums and contingency table margins as often encountered in applications including the U.S. Decennial Census and spatio-temporal datasets.

An important type of invariants not addressed in this paper are inequalities, such as nonnegativity and relational constraints (e.g. the population size must be larger or equal to the number of households in a geographic area). However, we note that an important premise to the unbiasedness achieved by subspace differentially private mechanisms, as discussed in Section~\ref{sec:discussion}, is that the mechanism admits equality invariants only. If inequality invariants must be imposed, unbiased privacy mechanisms can be inherently difficult to design. As \citet{DBLP:journals/corr/abs-2010-04327} discussed, the bias induced by projection-type post-processing of noisy measurements is attributable to the non-negativity constraints imposed on them. This raises the question of the appropriateness of inequality invariants on the sanitized output, if unbiasedness is simultaneously required.

Also not considered are invariants for binary and categorical attributes, taking values in a discrete space. These invariants differ from real-valued linear equality invariants, because in general they cannot be realized by an additive mechanism with a noise term independent of the confidential data value. While the notion of subspace differential privacy can be extended to these cases, the design of accompanying privacy mechanisms that also enjoy good statistical and implementation properties remains a subject of future research.

%\clearpage

% and integer requirements

% ``soft'' invariants, such as inequalities (e.g., the electricity usage should be a positive number) and integral requirements (e.g., binary attributes).  

\bibliographystyle{plainnat}
\bibliography{main,jiepub}
\newpage

\appendix

% \subsection{Standard Gaussian Mechanism}

% \fang{move them the Gaussian section}

% Gaussian mechanisms (Algorithm~\ref{alg:simple_gaussian}) add i.i.d. Gaussian noise random variables to all coordinate of the query $A(\x)$.  Each noise follows $N\left(0,\sigma^{2}\right)$, where the scale parameter $\sigma=c_{\epsilon,\delta}\Delta_2( A)$ is suitably chosen to accord to the privacy loss budget $\epsilon$ and $\delta$, as well as $\Delta_2( A)$ the $\ell_2$ sensitivity of the query function. 
% \begin{algorithm}[htb]
% \caption{Gaussian Noise mechanism}
% \label{alg:simple_gaussian}
% \textbf{Input}: A $n$-dimensional query $A:\mathcal{X}^*\to \mathbb{R}^n$, $\epsilon, \delta \in (0, 1)$.
% \begin{algorithmic}[1] %[1] enables line numbers
% \STATE Let $c_{\epsilon, \delta} = \epsilon^{-1}(1+\sqrt{1+\ln(1/\delta)})$, and sample 
% \begin{equation}\label{eq:gaussian_noise}
%     \mathbf{e}\sim N\left(0,(\Delta_2(A)c_{\epsilon, \delta})^2{\bf I}_{n}\right).
% \end{equation}
% \STATE \textbf{return} $A(\x)+{\bf e}$.
% \end{algorithmic}
% \end{algorithm}

% \begin{theorem}[Gaussian mechanism~\cite{dwork2006our}]\label{thm:dp_gaussian}
% For all $n$-dimensional query $A$\fang{function}, and $\epsilon, \delta>0$, the mechanism in Algorithm~\ref{alg:simple_gaussian} is $(\epsilon, \delta)$-differentially private.
% \end{theorem}

\section{Basic properties of subspace and induced subspace differential privacy}\label{app:subspace}
% \begin{proposition}\label{prop:trivial_invariants}
% Let $\epsilon,\delta>0$, $N, n, n_c\ge 1$, a query $A:\mathcal{X}^N\to \mathbb{R}^n$ and linear equality invariant $C\in \mathbb{R}^{n_c\times n}$.  If $M:\mathcal{X}^N\to \mathbb{R}^n$ is $(\epsilon,\delta)$-deferentially private and satisfies the linear equality invariant $C$ with $A$, then for all $\x, \x'\in \mathcal{X}^N$, $CA(\x) = CA(\x')$.\fang{not sure this is helpful.  We may put this formal statement in the appendix}
% \end{proposition}

%\fang{to main body}

% Corollary~\ref{prop:nested} also relies on the post-processing theorem, noting the decomposition of the projection matrix into two orthogonal ones  $\Pi_{{\mathcal{V}}} =  \Pi_{\mathcal{V}_1} + \Pi_{\mathcal{V}_1^{\perp}}$, where for all values of ${\bf m}$, $ \Pi_{\mathcal{V}_1}\Pi_{\mathcal{V}_1^{\perp}}{\bf m} = {\bf 0}$, and $\Pi_{\mathcal{V}_{1}}\Pi_{\mathcal{V}}=\Pi_{\mathcal{V}_{1}}$ hence the result.
\begin{proof}[Proof of \Cref{prop:nested}]
Because $\mathcal{V}_2\subseteq \mathcal{V}_1$, for any $v_1\in \mathcal{V}_1$, there exist unique $v_2\in \mathcal{V}_2$ and $w\in \mathcal{V}_2^\perp\cap \mathcal{V}_1$ such that $v_1 = v_2+w$.  We set $W = \mathcal{V}_2^\perp\cap \mathcal{V}_1$ be orthogonal to $\mathcal{V}_2$ and in $\mathcal{V}_1$.  For any set $S\subseteq \mathbb{R}^n$, we define $S+W:=\{v = s+w: s\in S, w\in W\}$ which is also measurable when $S$ is measurable, and $\Pi_{\mathcal{V}_1}^{-1}(S):=\{v:\Pi_{\mathcal{V}_1} v\in S\}$.  By direct computation, for any $S$,
\begin{equation}\label{eq:nested1}
    \Pi_{\mathcal{V}_2}^{-1}(S) 
    % = \{v:\Pi_{\mathcal{V}_2}v \in S\} = \{v: \Pi_{\mathcal{V}_1}v \in S+W\} =
    = \Pi_{\mathcal{V}_1}^{-1}(S+W).
\end{equation}  Therefore, for any measurable set $S$, neighboring database $\x\sim \x'$, and $\mathcal{V}_1$-subspace differentially private mechanism $M$, we have
\begin{align*}
    &\Pr[\Pi_{\mathcal{V}_2}M(\x)\in S]\\
    =& \Pr[\Pi_{\mathcal{V}_1}M(\x)\in S+W]\tag{by \Cref{eq:nested1}}\\
    \le& e^\epsilon \Pr[\Pi_{\mathcal{V}_1}M(\x')\in S+W]+\delta \tag{$\mathcal{V}_1$-subspace dp}\\
    =& e^\epsilon \Pr[\Pi_{\mathcal{V}_2}M(\x')\in S]+\delta \tag{by \Cref{eq:nested1}}
\end{align*}
This completes the proof.
\end{proof}

\subsection{Composition and post processing}

%On the other hand, the subspace differential privacy is naturally immune to any post-processing mapping which only acts on the subspace $\mathcal{N}$.

% \Composition*
\begin{proof}[Proof of \cref{prop:composition}]
Let the null space of $C_{1,2}$ be $N_{1,2}:=\{(v_1, v_2)\in Y_1\times Y_2: v_1\in N_1, v_2\in N_2\}$ where $N_1$ and $N_2$ are the null space of $C_1$ and $C_2$ respectively.  Then for all neighboring databases $x$ and $x'$ and an outcome $(y_1, y_2)\in Y_1\times Y_2$ we have
\begin{align*}
    &\Pr[\Pi_{N_{1,2}}(M_1,M_2)(x) (y_1, y_2)]\\
    =& \Pr[\Pi_{N_1}(M_1(x)) = y_1]\Pr[ 
\Pi_{N_2}(M_2(x)) = y_2]\\
\le& e^{\epsilon_1} \Pr[\Pi_{N_1}(M_1(x')) = y_1]\cdot e^{\epsilon_2} \Pr[\Pi_{N_2}(M_2(x')) = y_2]\\
=& \exp(\epsilon_1+\epsilon_2)\Pr[\Pi_{N_{1,2}}(M_1,M_2)(x') = (y_1, y_2)].
\end{align*}
Finally, the invariant also holds, because $C_{1,2}(M_1(x),M_2(x))$ equals $(C_1A_1(x), C_2A_2(x))$.
\end{proof}
\section{Correlated Gaussian Mechanism by {\citet{nikolov2013geometry}}{Nikolov et al.}}\label{sec:pre_corr}
For completeness, in this section, we state the correlated Gaussian noise mechanism (\Cref{alg:dense_gaussian}) and one main theorem (\Cref{thm:dp_correlatedgaussian}) in \citet{nikolov2013geometry}.  We are going to modify the mechanism (\Cref{alg:dense_gaussian}) to an \goodprivacy one (\Cref{alg:subspace_gaussian}) in \Cref{sec:linear}.  % When query function $A$ is linear, $A$ can be represented as a function of the histogram of database Eqn.~\eqref{eq:hist}, and we use $A$ to denote the matrix so that the query value on database $\x$ is $A\cdot \hist(\x)$.

\begin{algorithm}[htb]
\caption{Correlated Gaussian Noise mechanism $\mathcal{M}_{CG}$~\cite{nikolov2013geometry}}
\label{alg:dense_gaussian}
\textbf{Input}: $(A,\mathbf{h}, \epsilon, \delta)$ where linear query $A = (a_i)^d_{i = 1}\in \mathbb{R}^{n\times d}$ has full rank $n$, the histogram of a database $\mathbf{h}\in \mathbb{R}^d$, privacy constrains $\epsilon, \delta$
\begin{algorithmic}[1] %[1] enables line numbers
\STATE Let $c_{\epsilon, \delta}:=\epsilon^{-1}(1+\sqrt{1+\ln (1/\delta)})$.
\STATE Compute $E = FB^n_2$, the minimum volume enclosing ellipsoid of $K = AB_1$ where $B_2^n = \{\mathbf{x}\in \mathbb{R}^n:\|\mathbf{x}\|_2 = 1\}$ and $B_1 = \{\mathbf{x}\in \mathbb{R}^d:\|\mathbf{x}\|_1 = 1\}$;
\STATE Let $u_i$ $i = 1, \dots, n$ be the left singular vectors of $F$ corresponding to singular values $\sigma_1\ge\dots\ge \sigma_n$;
\IF{$d = 1$}
\STATE \textbf{return} $U_1 = u_1$.
\ELSE
\STATE Let $U_1 = (u_i)_{i> n/2}$ and $V = (u_i)_{i\le n/2}$;
\STATE Recursively compute a base decomposition $V_2, \dots, V_k$ of $V^\top A$;
\STATE Let $U_i = VV_i$ for each $1<i\le k$ where $k = \lceil 1+\log n\rceil$.
\ENDIF
\FOR{$i = 1, \ldots, k$}
\STATE let $r_i = \max_{j = 1}^d\|U_i^\top a_j\|_2$. 
\STATE Sample $\mathbf{w}_i\sim N(0; c_{\epsilon, \delta})^{n_i}$
\ENDFOR
\STATE \textbf{return} $A\mathbf{h}+\sqrt{k}\sum^k_{i = 1}r_i U_i \mathbf{w}_i$
\end{algorithmic}
\end{algorithm}

\begin{theorem}[Theorem 13 in \citet{nikolov2013geometry}]\label{thm:dp_correlatedgaussian}
Algorithm~\ref{alg:dense_gaussian} $\mathcal{M}_{CG}$, is $(\epsilon, \delta)$-differentially private and for all small enough $\epsilon$ and satisfies 
$$err_{\mathcal{M}_{CG}}(A) = O(\log^2n\log 1/\delta)\opt_{\epsilon, \delta}(A)$$
for all $\delta$ small enough with respect to $\epsilon$.
\end{theorem}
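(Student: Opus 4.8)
The plan is to establish the two assertions separately: first that $\mathcal{M}_{CG}$ is $(\epsilon,\delta)$-differentially private, and then that its worst-case expected squared error is within a $O(\log^2 n\,\log(1/\delta))$ factor of the optimum $\opt_{\epsilon,\delta}(A)$. The single structural fact driving both halves is that the output noise $\sqrt{k}\sum_{i=1}^k r_i U_i \mathbf{w}_i$ is a mean-zero Gaussian with covariance $\Sigma = k\,c_{\epsilon,\delta}^2 \sum_{i=1}^k r_i^2\, U_i U_i^\top$, where the blocks $U_1,\dots,U_k$ together form an orthonormal basis of $\mathbb{R}^n$, so that $\Sigma$ acts as $k\,c_{\epsilon,\delta}^2 r_i^2$ on the subspace spanned by $U_i$. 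Because $A\mathbf{h}$ is merely shifted by this one correlated Gaussian, every statement reduces to a spectral computation in the whitened coordinates $\Sigma^{-1/2}$.

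For privacy, I would first reduce to the Mahalanobis sensitivity: for a mechanism $A\mathbf{h}+N(0,\Sigma)$, the privacy loss between neighboring outputs is governed by $\sup_{\mathbf{x}\sim\mathbf{x}'}\|\Sigma^{-1/2}(A(\mathbf{x})-A(\mathbf{x}'))\|_2$, and the mechanism is $(\epsilon,\delta)$-private once this quantity is at most $1/c_{\epsilon,\delta}$; this is exactly the one-dimensional spherical Gaussian guarantee of \Cref{lem:gaussian} applied along the worst direction after whitening. A neighboring change shifts the output by a single column $v=a_j$ of $A$, so using $\Sigma^{-1}=\tfrac{1}{k\,c_{\epsilon,\delta}^2}\sum_i r_i^{-2}U_iU_i^\top$ gives
\[
\|\Sigma^{-1/2}v\|_2^2 = \frac{1}{k\,c_{\epsilon,\delta}^2}\sum_{i=1}^k r_i^{-2}\,\|U_i^\top v\|_2^2 \le \frac{1}{k\,c_{\epsilon,\delta}^2}\cdot k = \frac{1}{c_{\epsilon,\delta}^2},
\]
where each summand is at most $1$ by the definition $r_i=\max_j\|U_i^\top a_j\|_2$. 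The key point is that the $\sqrt{k}$ scaling is precisely what cancels the factor $k$ accumulated by charging one unit of leakage per block, so that no composition loss is incurred across the $k$ orthogonal levels.

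For utility I would start from the identity $\err_{\mathcal{M}_{CG}}(A)=\operatorname{tr}(\Sigma)=k\,c_{\epsilon,\delta}^2\sum_{i=1}^k r_i^2\,n_i$, with $n_i=\dim U_i$. The upper bound rests on the recursive base decomposition: since each level peels off the bottom half of the singular spectrum of the current enclosing ellipsoid, the radii $r_i$ are controlled by the singular values $\sigma_j$ of $F$ at the top of block $i$, and summing $r_i^2 n_i$ over the $k=\lceil 1+\log n\rceil$ levels bounds the error (up to the $k$ factor) by a volume/spectral quantity of the minimum-volume enclosing ellipsoid $E=FB_2^n$ of $K=AB_1$. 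For the lower bound I would invoke the volume (determinant) lower bound for $(\epsilon,\delta)$-privacy: any $(\epsilon,\delta)$-private mechanism must incur error at least $\Omega\!\big(\tfrac{1}{\log(1/\delta)}\big)$ times the same ellipsoid quantity, obtained by exhibiting a hard family of databases whose privatized answers must spread over a region of volume comparable to $\det E$. Matching the two yields the ratio $O(\log^2 n\,\log(1/\delta))$, with one $\log n$ from the block count $k$ and a second from the recursion depth in the volume accounting.

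The main obstacle is the lower-bound half of the utility claim. The upper-bound bookkeeping and the privacy computation are essentially spectral identities once the covariance is written out, but proving $\opt_{\epsilon,\delta}(A)=\Omega\!\big(\tfrac{1}{\log(1/\delta)}\cdot\mathrm{vol}\text{-term}\big)$ for approximate differential privacy is the technically deep step: the clean discrepancy/packing argument of Hardt--Talwar applies directly only to pure ($\delta=0$) privacy, and extending it to $\delta>0$ requires a more delicate spectral lower bound controlling how much the $\delta$ slack can shrink the volume the answers must cover. Reconciling the per-block guarantees of the base decomposition with this global volume lower bound, so that the two are provably within a polylogarithmic factor at every scale of the singular spectrum, is where the bulk of the work lies—and is precisely the content that \Cref{thm:dp_correlatedgaussian} imports from \citet{nikolov2013geometry}.
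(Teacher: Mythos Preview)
The paper does not prove this theorem at all: it is quoted verbatim as Theorem~13 of \citet{nikolov2013geometry} and stated ``for completeness'' in Appendix~\ref{sec:pre_corr}, with no accompanying argument. So there is no paper-side proof to compare your sketch against; the result is simply imported and then used as a black box inside the proof of \Cref{thm:sdp_corr_gaussian}.

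That said, your outline is a faithful high-level summary of how the cited result is actually established, and you correctly identify that the substantive difficulty lies in the volume lower bound for $\opt_{\epsilon,\delta}(A)$ under approximate privacy. One small slip in your privacy calculation: under the bounded neighboring relation used in this paper (\Cref{def:dp}), a neighboring change alters the histogram by $e_j-e_{j'}$, so the relevant shift is $v=a_j-a_{j'}$, not a single column $a_j$. This only costs a constant factor (since $\|U_i^\top v\|_2\le 2r_i$), so the conclusion is unaffected, but the stated bound $r_i^{-2}\|U_i^\top v\|_2^2\le 1$ is off by that constant. Your closing sentence already acknowledges that the real content is imported from \citet{nikolov2013geometry}, which is exactly what the paper does.
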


\section{Details and Proofs in \Cref{sec:design}}\label{app:general}
\subsection{Proofs in the Extension Framework}
\connect*
\begin{proof}[Proof of \Cref{thm:connect}]
First, by the definition of $\hat{M}$, we have
\begin{equation}\label{eq:connect1}
    \Pi_{\mathcal{N}}\hat{M}(\mathbf{x})= \Pi_{\mathcal{N}}\left(\mathcal{M}(\mathbf{x})-\Pi_{\mathcal{R}} A(\mathbf{x})\right) = \Pi_{\mathcal{N}}\mathcal{M}(\mathbf{x}).
\end{equation}
If $\mathcal{M}$ is \goodprivacy, $C\hat{M}(\mathbf{x}) = C\mathcal{M}(\mathbf{x})-C\Pi_{\mathcal{R}} A(\mathbf{x}) = 0$, for all $\x$, since $C$ has full rank, the image of $\hat{M}$ is in $\mathcal{N}$.
Because the image of $\hat{M}$ is in $\mathcal{N}$, by \Cref{eq:connect1}, for any measurable set $S$, and neighboring databases $\mathbf{x}, \mathbf{x}'$, $\Pr[\hat{M}(\mathbf{x})\in S] = \Pr[\Pi_\mathcal{N}\hat{M}(\mathbf{x})\in \Pi_\mathcal{N}S] = \Pr[\Pi_{\mathcal{N}}\left(\mathcal{M}(\mathbf{x})\right)\in \Pi_\mathcal{N}S]$ and $\Pr[\hat{M}(\mathbf{x}')\in S] = \Pr[\Pi_{\mathcal{N}}\left(\mathcal{M}(\mathbf{x}')\right)\in \Pi_\mathcal{N}S]$.  Additionally, $\Pr[\Pi_{\mathcal{N}}\left(\mathcal{M}(\mathbf{x})\right)\in \Pi_\mathcal{N}S]\le e^\epsilon\Pr[\Pi_{\mathcal{N}}\left(\mathcal{M}(\mathbf{x}')\right)\in \Pi_\mathcal{N}S]+\delta$ because $\mathcal{M}$ is \goodprivacy and thus $\mathcal{N}$-subspace differentially private.  With these two,  $\Pr[\hat{M}(\mathbf{x})\in S] = e^\epsilon\Pr[\hat{M}(\mathbf{x}')\in S]+\delta$ so $\hat{M}$ is $(\epsilon, \delta)$-differentially private.

Conversely, suppose $\hat{M}$ is $(\epsilon, \delta)$-differentially private with image contained in $\mathcal{N}$.  By \Cref{eq:connect1}, 
$\Pi_{\mathcal{N}}\left(\mathcal{M}(\mathbf{x})\right) = \hat{M}(\mathbf{x})$ which has identical probability distribution as $\Pi_{\mathcal{N}}M\left(\x\right)$.  Thus, 
$\mathcal{M}$ is $\mathcal{N}$-subspace $\left(\epsilon,\delta\right)$-differentially private.  
For linear equality constraint, because $\hat{M}(\mathbf{x})\in \mathcal{N}$, $C\left(\mathcal{M}(\mathbf{x})\right) = C\Pi_{\mathcal{R}} A(\mathbf{x}) = CA(\mathbf{x})$.
\end{proof}

\subsection{Proofs in Mechanisms for General Queries}
By \Cref{cor:projection} and \Cref{thm:connect}, the mechanisms in \Cref{cor:projected_gaussian,cor:extended_gaussian,cor:projected_laplace,cor:extended_laplace} are \goodprivacy.  Thus, it remains to show the error bounds in \Cref{cor:projected_gaussian,cor:extended_gaussian}.

\begin{proof}[Proof of \Cref{cor:projected_gaussian}] Because $\mathbf{e}_G$ is a unbiased Gaussian with covariance $(c_{\epsilon, \delta}\Delta_2(A))^2\mathbf{I}_n$ where $\mathbb{I}_n$ is the identity matrix of dimension $n$, after projection, $\Pi_\mathcal{N}\mathbf{e}_G$ is a unbiased Gaussian with covariance $(c_{\epsilon, \delta}\Delta_2(A))^2\Pi_\mathcal{N}^\top \mathbf{I}_n\Pi_\mathcal{N}$.  Thus, the mean squared error is 
\begin{align*}
    &\E[\|A(\x)-\mathcal{M}_{PG}(\x)\|^2_2]\\
    =& \E[\|\Pi_\mathcal{N}\mathbf{e}_G\|^2_2]\\
    =& tr((c_{\epsilon, \delta}\Delta_2(A))^2\Pi_\mathcal{N}^\top \mathbf{I}_n\Pi_\mathcal{N})\\
    =& (c_{\epsilon, \delta}\Delta_2(A))^2 (n-n_c)
\end{align*}
where $tr(B)$ is the trace of a symmetric matrix $B$.
\end{proof} 

\begin{proof}[Proof of \Cref{cor:extended_gaussian}] Because $\mathbf{e}_{\mathcal{N}}$ is a unbiased Gaussian with covariance $(c_{\epsilon, \delta}\Delta_2(Q^\top_\mathcal{N} A))^2\mathbf{I}_{n-n_c}$, the means squared error is \begin{align*}
    &\E[\|A(\x)-\mathcal{M}_{EG}(\x)\|^2_2]\\
    =& (c_{\epsilon, \delta}\Delta_2(A))^2tr(Q_\mathcal{N}^\top \mathbf{I}_{n-n_c}Q_\mathcal{N})\\
    =& (c_{\epsilon, \delta}\Delta_2(Q^\top_\mathcal{N} A))^2 (n-n_c).
\end{align*}
\end{proof}

Finally, note that 
\begin{align*}
    \Delta_2(Q^\top_\mathcal{N}A) =& \sup_{\x\sim \x'}\|Q^\top_\mathcal{N}A(\x)-Q^\top_\mathcal{N}A(\x')\|_2\\
    =&\sup_{\x\sim \x'}\|Q^\top_\mathcal{N}\left(A(\x)-A(\x')\right)\|_2\\
    \le& \sup_{\x\sim \x'}\|A(\x)-A(\x')\|_2 = \Delta_2(A),
\end{align*} so the mean squared error of the extended Gaussian mechanism in \Cref{cor:extended_gaussian} is always less than or equal to the error of the projected Gaussian mechanism in \Cref{cor:projected_gaussian}.  Intuitively, the noise of the extended Gaussian mechanism only depends on the sensitivity of $A$ in the null space, $Q_\mathcal{N}^\top A$.  However, the projected Gaussian mechanisms modify an already differentially private mechanism to \goodprivacy, and it may introduce additional noise if $A$ is very sensitive in invariant space $\mathcal{R}$.

\subsection{Proofs in Mechanisms for Linear Queries}\label{app:optimal}

\begin{proof}[Proof of \Cref{cor:err_constrain2dp}]
For any $\mathcal{N}$-subspace differentially private mechanism $\mathcal{M}$, let $\hat{M} := \Pi_{\mathcal{N}}\mathcal{M}$.  Then the squared error can be decomposed as
\begin{equation}\label{eq:err1}
    \|\mathcal{M}(\mathbf{x})-A(\mathbf{x})\|^2_2 =    \|\hat{M}(\mathbf{x})-\Pi_{\mathcal{N}}A(\mathbf{x})\|^2_2+\|\Pi_{\mathcal{R}}\mathcal{M}(\mathbf{x})-\Pi_{\mathcal{R}}A(\mathbf{x})\|^2_2.
\end{equation}
By \Cref{thm:connect}, $\hat{M}$ is differentially private, so
$\mathbb{E}\left[\|\hat{M}(\mathbf{x})-\Pi_{\mathcal{N}}A(\mathbf{x})\|^2_2\right]\ge \opt_{\epsilon, \delta}(\Pi_{\mathcal{N}}A)$.  Therefore
$\opt_{\epsilon, \delta}^C(A) \ge \opt_{\epsilon, \delta}(\Pi_{\mathcal{N}}A)$ by \Cref{eq:err1}.

Conversely, for all differentially private mechanism $\hat{M}$, we define 
$\mathcal{M}(\mathbf{x}) := \Pi_{\mathcal{N}}\hat{M}(\mathbf{x})+\Pi_{\mathcal{R}}A(\mathbf{x})$. By post processing property $\Pi_{\mathcal{N}}\hat{M}$ is differentially private, so $\mathcal{M}$ is $\mathcal{N}$-subspace differentially private by Theorem~\ref{thm:connect}.  By \Cref{eq:err1}, we have
\begin{align*}
    &\mathbb{E}\left[\|\mathcal{M}(\mathbf{x})-A(\mathbf{x})\|^2_2\right]\\
    =& \mathbb{E}\left[ \|\Pi_\mathcal{N}\hat{M}(\mathbf{x})-\Pi_{\mathcal{N}}A(\mathbf{x})\|^2_2+\|\Pi_{\mathcal{R}}\mathcal{M}(\mathbf{x})-\Pi_{\mathcal{R}}A(\mathbf{x})\|^2_2\right]\\
    =&\mathbb{E}\left[\|\Pi_\mathcal{N}\hat{M}(\mathbf{x})-\Pi_{\mathcal{N}}A(\mathbf{x})\|^2_2\right]\\
    \le& \mathbb{E}\left[\|\hat{M}(\mathbf{x})-\Pi_{\mathcal{N}}A(\mathbf{x})\|^2_2\right]
\end{align*}
Therefore, $\opt_{\epsilon, \delta}^C(A) \le \opt_{\epsilon, \delta}(\Pi_{\mathcal{N}}A)$.
\end{proof}

\begin{algorithm}[htb]
\caption{Subspace Gaussian Noise mechanism}
\label{alg:subspace_gaussian}
\textbf{Input}: $(A,C, \mathbf{h},\epsilon, \delta)$ where  $A:\mathbb{R}^d\to \mathbb{R}^n$ is a linear query function with full rank, in a linear equality invariant $C:\mathbb{R}^n\to \mathbb{R}^{n_c}$ with row space $\mathcal{R}$ and null space $\mathcal{N}$, the histogram of a database $\mathbf{h} = \hist(\x)\in \mathbb{R}^d$, and privacy constrains $\epsilon, \delta$.
\begin{algorithmic}[1] %[1] enables line numbers
\STATE Compute $Q_\mathcal{N}\in \mathbb{R}^{n\times (n-n_c)}$ an collection of an orthonormal basis of $\mathcal{N}$,
\STATE Let $c_{\epsilon, \delta}:=\epsilon^{-1}(1+\sqrt{1+\ln (1/\delta)})$ and $A_\mathcal{N} := Q_\mathcal{N}^\top A$ \COMMENT{full rank and $Q_\mathcal{N}A_\mathcal{N} = \Pi_\mathcal{N} A$}
\STATE Compute $E = FB^n_2$, the minimum volume enclosing ellipsoid of $K = A_\mathcal{N}B_1$ where $B_2^n = \{\mathbf{x}\in \mathbb{R}^n:\|\mathbf{x}\|_2 = 1\}$ and $B_1 = \{\mathbf{x}\in \mathbb{R}^d:\|\mathbf{x}\|_1 = 1\}$;
\STATE Let $u_i$ $i = 1, \dots, n$ be the left singular vectors of $F$ corresponding to singular values $\sigma_1\ge\dots\ge \sigma_n$;
\IF[Decompose $A_\mathcal{N}$ in to spaces $U_i$ $i = 1,\dots,k$]{$d = 1$}
\STATE $U_1 = u_1$.
\ELSE
\STATE Let $U_1 = (u_i)_{i> n/2}$ and $V = (u_i)_{i\le n/2}$;
\STATE Recursively compute a base decomposition $V_2, \dots, V_k$ of $V^\top A_\mathcal{N}$;
\STATE Let $U_i := VV_i$ with dimension $n_i$ for each $1<i\le k$ where $k = \lceil 1+\log n\rceil$;
\ENDIF
\FOR[Compute the noise in each space $U_i$]{$i = 1, \ldots, k$}
\STATE let $r_i = \max_{j = 1}^d\|U_i^\top a_j\|_2$. 
\STATE Sample $\mathbf{w}_i\sim N(0; c_{\epsilon, \delta})^{n_i}$
\ENDFOR
\STATE $\mathbf{z} = A_\mathcal{N}\mathbf{h}+\sqrt{k}\sum^k_{i = 1}r_i U_i \mathbf{w}_i$ which is in $\mathbb{R}^{n-n_c}$.
\STATE \textbf{return} $Q_\mathcal{N}\mathbf{z}+\Pi_{\mathcal{R}}A\mathbf{h}$.
\end{algorithmic}
\end{algorithm}

\begin{proof}[Proof of \Cref{thm:sdp_corr_gaussian}]

We first show the privacy guarantees of our mechanism, $\mathcal{M}$.  By \Cref{thm:dp_correlatedgaussian}, the output of \Cref{alg:dense_gaussian} is $(\epsilon, \delta)$-differentially private, so in  \Cref{alg:subspace_gaussian} $Q_\mathcal{N}\mathbf{z}$ is $(\epsilon, \delta)$ differentially private, and  $Q_\mathcal{N}\mathbf{z}\in \mathcal{N}$.  Thus, by \Cref{thm:connect}, outputting $Q_\mathcal{N}\mathbf{z}+\Pi_{\mathcal{R}}A\mathbf{h}$ is $\mathcal{N}$-subspace $(\epsilon, \delta)$ differentially private.  Finally, the output satisfies linear equality constraints $C$, because $ C(Q_\mathcal{N}\mathbf{z}+\Pi_{\mathcal{R}}A\mathbf{h})= C\Pi_{\mathcal{R}}A\mathbf{h} =  CA\mathbf{h}$.

Now we study the accuracy of our mechanism.  Because
\begin{align*}
    &\|Q_\mathcal{N}\mathbf{z}+\Pi_{\mathcal{R}}A\mathbf{h}-A\mathbf{h}\|^2\\ =&\|Q_\mathcal{N}\mathbf{z}-\Pi_\mathcal{N} A \mathbf{h}\|^2\\
    =& \|Q_\mathcal{N}\left(\mathbf{z}-Q_\mathcal{N}^\top A \mathbf{h}\right)\|^2\\
    =& \|\mathbf{z}-Q_\mathcal{N}^\top A \mathbf{h}\|^2
\end{align*}
the error of our mechanism is equal to $\err_{\mathcal{M}_{CG}}(Q_\mathcal{N}^\top A)$ which the error between the output of \Cref{alg:dense_gaussian} and linear query $Q_\mathcal{N}^\top A$.
  By \Cref{thm:dp_correlatedgaussian}, 
\begin{equation}\label{eq:opt1}
    \err_{\mathcal{M}_g}(Q_\mathcal{N}^\top A ) = O(\log^2(n-n_c)\log 1/\delta)\opt_{\epsilon, \delta}(Q_\mathcal{N}^\top A).
\end{equation}

Now we want to show the optimal error for query $Q_\mathcal{N}^\top A$ is no more than the optimal one for query $\Pi_\mathcal{N}A$.  Formally,
\begin{equation}\label{eq:opt2}
    \opt_{\epsilon, \delta}(Q_\mathcal{N}^\top A)\le \opt_{\epsilon, \delta}(\Pi_\mathcal{N}A).
\end{equation}

For any $(\epsilon, \delta)$-differentially private mechanism $M:\mathbb{R}^d\to \mathbb{R}^{n}$, we have 
\begin{align*}
    \err_{M}(\Pi_\mathcal{N} A) =& \sup_\mathbf{h}\mathbb{E}\left[\|M(\mathbf{h})-\Pi_\mathcal{N} A\mathbf{h}\|^2\right]\\
    \le& \sup_\mathbf{h}\mathbb{E}\left[\|Q_\mathcal{N}^\top (M(\mathbf{h})-\Pi_\mathcal{N} A\mathbf{h})\|^2\right]\tag{$Q_\mathcal{N}$ consists of orthonormal columns}\\
    =& \sup_\mathbf{h}\mathbb{E}\left[\|Q_\mathcal{N}^\top M(\mathbf{h})-Q_\mathcal{N}^\top A\mathbf{h}\|^2\right]\\
    =& \err_{Q_\mathcal{N}^\top M}(Q_\mathcal{N}^\top A).
\end{align*}
Therefore we have a new $(\epsilon, \delta)$-differentially private mechanism $Q_\mathcal{N}^\top M$ so that the error for query $Q_\mathcal{N}^\top A$ is less than or equal to the error of $M$ for query $\Pi_\mathcal{N} A$ which completes the proof of \Cref{eq:opt2}.

Finally by \Cref{cor:err_constrain2dp}, 
\begin{equation}\label{eq:opt3}
     \opt_{\epsilon, \delta}(\Pi_\mathcal{N}A) = \opt_{\epsilon, \delta}^\mathcal{N}(A).
\end{equation}
Combining \Cref{eq:opt1,eq:opt2,eq:opt3} completes the proof.
\end{proof}

\section{Proof of Theorem~\ref{thm:conditioning}}\label{app:spherical-proof}

\Cref{thm:conditioning} is established by recognizing that the multivariate Gaussian is a location-scale family completely characterized by its mean vector and covariance matrix. In $\mathcal{M}$, the projected additive noise $\Pi_\mathcal{N} \mathbf{e}$ has zero mean and covariance matrix $(\Delta_2(A)c_{\epsilon, \delta})^2\Pi_\mathcal{N}$. Hence, $\mathcal{M}$ as a mechanism is unbiased, and has mean $CA(\x)$ and the same covariance matrix. On the other hand, the conditional distribution of the unrestricted mechanism $M$ has mean $CA(\x)$ and covariance matrix $(\Delta_2(A)c_{\epsilon, \delta})^2(\mathbf{I} - C^{\top}(CC^{\top})^{-1}C)$, which is the same as $(\Delta_2(A)c_{\epsilon, \delta})^2\Pi_\mathcal{N}$ with $\Pi_\mathcal{N}$ being the unique orthogonal projection matrix. Since the Gaussian family is closed with respect to linear marginalization and linear conditionalization, we have that the distribution of $\mathcal{M}$  is indeed identical to the conditional distribution of $M$ given $C{M}(\x)=CA(\x)$. 

\section{Distributed mechanism algorithm}\label{app:distributed}
Here we specify our model of distributed computation.   Let $\mathcal{K} = (K_1,\dots,K_m)$ be a partition of $\left\{ 1,\ldots,n\right\}$, with size $m$. That is, each element $K\in\mathcal{K}$ is a subset of observations that one particular agent is responsible for collecting, and there are $m$ agents in total. Without lose of generality, through relabelling we can have $K_\ell = \{|K_{\le \ell}|+1,\dots,|K_{\le \ell}|+|K_\ell|\}\subseteq \left\{ 1,\ldots,n\right\}$ that contains entries with index from $|K_{\le \ell}|+1$ to $|K_{\le \ell}|+|K_\ell|$ where $K_{\le \ell}:=\cup_{\iota<\ell}K_\iota$.  Now we can project the dataset $\x$ into those $m$ partitions.  Let $\Gamma_{\ell}$ denote a $|K_\ell|\times n$ matrix where $(\Gamma_k)_{i,j} = \mathbf{1}\left[j = i+|K_{\le \ell}|\right]$, and ${\bf x}_{\ell}=\Gamma_\ell \x$, be the subvector of ${\bf x}$ whose indices are atoms of $k$.

The following meta-algorithm shows how to adapt any additive privatization mechanisms defined in \Cref{eq:additive} to a distributed one.  Every agent carries out the privatization locally, while ensuring that the aggregated privatized data ${\bf y}$ satisfies the global invariant constraint in \Cref{def:invariant}.
\begin{algorithm}[htb]
\caption{Distributed privatization framework}
\label{alg:distributed}
\textbf{Input}: a database $\x\in \mathbb{R}^N$,  the identity query $A:\mathbb{R}^N\to \mathbb{R}^N$, linear equality invariant $C\in \mathbb{R}^{n_c\times N}$ with rank $n_c$, $\epsilon\in (0,1)$, and $\delta\in (0,1)$;

\textbf{Parameter}: common~seed~$a$,~partition~$\mathcal{K}$, and an additive privatization mechanisms $\mathcal{M}(A,C,\x,\epsilon,\delta)$;

\begin{algorithmic}[1] %[1] enables line numbers
\FOR[in parallel]{agent $K_\ell\in \mathcal{K}$}
\STATE Observe ${\bf x}_{\ell}$;
\STATE Simulate noise $\mathbf{e}$ with common seed $a$ subject to privacy budget $\epsilon, \delta$ and invariant constraint $C$.\footnotemark
\STATE Compute $\mathbf{y}_\ell = \x_\ell+\Gamma_\ell \mathbf{e}$.
\ENDFOR
\STATE \textbf{return} concatenated $(\mathbf{y}_1, \mathbf{y}_2,\dots,\mathbf{y}_m)$.
\end{algorithmic}
\end{algorithm}
\footnotetext{Formally, agent $\ell$ first augments the observation to $\bar{\x}_\ell\in \mathbb{R}^N$ by filling zero to the unknown entries of $\x_\ell$.  Then agent $\ell$ runs $\mathcal{M}$ on input $(A,C,\bar{\x}_\ell,\epsilon, \delta)$, and set $\mathbf{e} = \mathcal{M}(A,C,\bar{\x}_\ell,\epsilon, \delta)-\bar{\x}_\ell$.  Since $\mathcal{M}$ is additive, the noise term is independent of $\bar{\x}_\ell$, and every agent will get the same $\mathbf{e}$.}

When invariants are in place, the feasibility of distributed privatization may be counter-intuitive, because invariants typically induce dependence among entries of the private query. The key here is that for additive mechanisms that impart linear equality invariants, the noise that each local agent infuses does not depend on the confidential $\x$ at all. With a shared seed, every agent can generate the same noise vector ($\mathbf{e}$ in Algorithm~\ref{alg:distributed}), and carry out their privatization task in isolation from others. 

\section{Supplement numerical analysis}\label{app:example}

This appendix displays figures that accompany the numerical analyses of Section~\ref{sec:example}. Details are given therein.

% \paragraph{2020 Census demonstration data}

% We consider the publication of induced subspace differentially private county-level Census population counts, subject to the invariant of state population size. IPUMS NHGIS \cite{ipums2020das} curated privacy-protected demonstration files that link together the original tables from the 2010 Census Summary Files (CSF), here treated as the confidential values, and the trial runs of the Census Bureau's 2020 Disclosure Avoidance System (DAS) applied to the CSF. All these datasets are publicly available at the cited source. 

% For our demonstration, the privacy loss budget is set to accord exactly to the Census Bureau's specification, with $\epsilon = 0.192 = 4 \text{ (total)} \times 0.16 \text{ (county level)} \times 0.3 \text{ (population query)}$. Figure~\ref{fig:census_sdp_county2} shows the application of the projected Laplace $(\epsilon,0)$-induced subspace differentially private mechanism to an additional ten states, for which the TopDown algorithm incurred decidedly negatively biased errors. Details of how these states were identified are given in Section~\ref{sec:intro} of the paper. From Figure~\ref{fig:census_sdp_county2} we can make similar observations about the errors from the proposed mechanism as we did from Figure~\ref{fig:census_trend}, including their unbiasedness yet a similar error scale compared to the DAS errors.

\begin{figure*}
\begin{center}
\includegraphics[width = .47\textwidth]{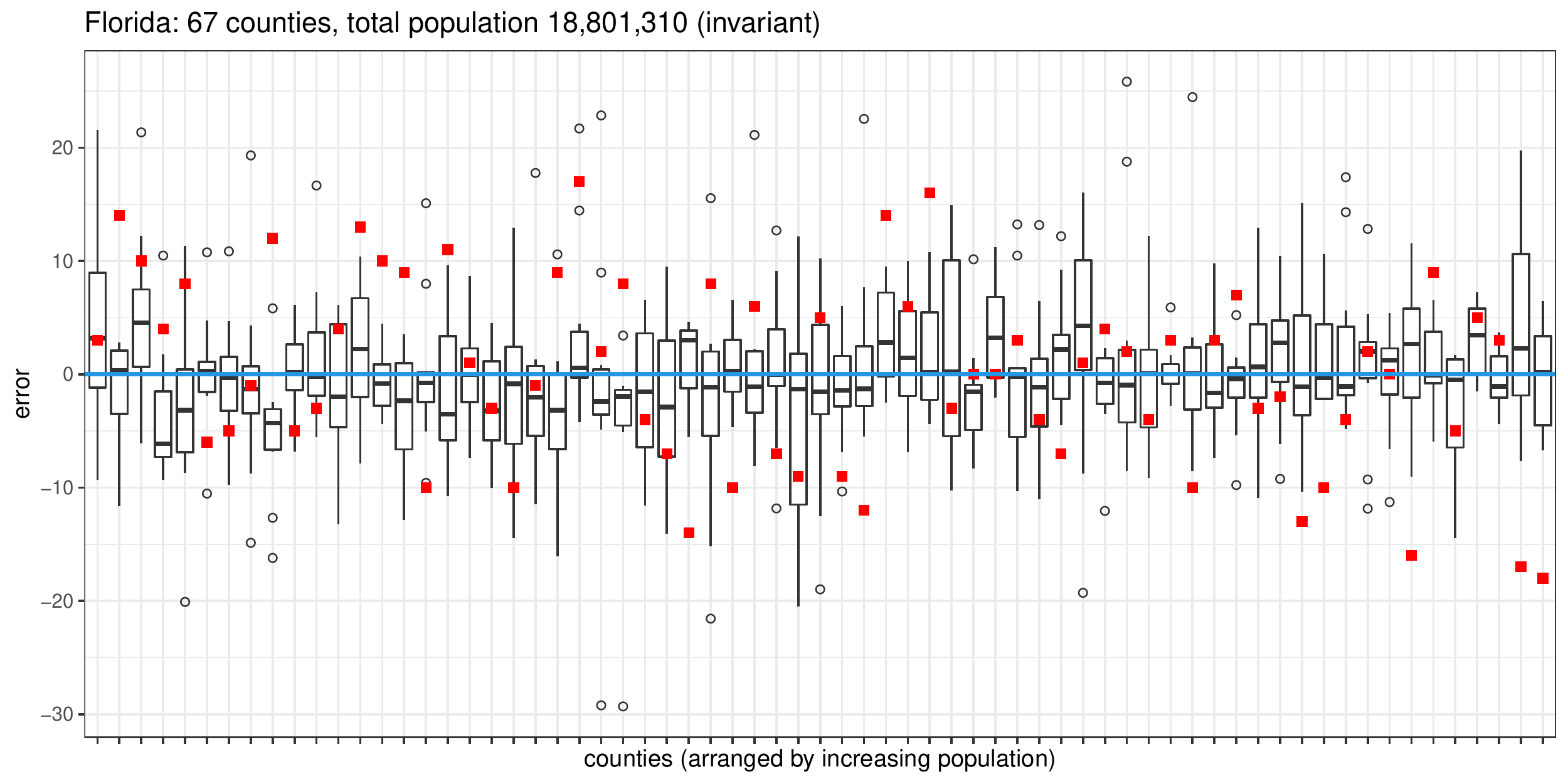}
\includegraphics[width = .47\textwidth]{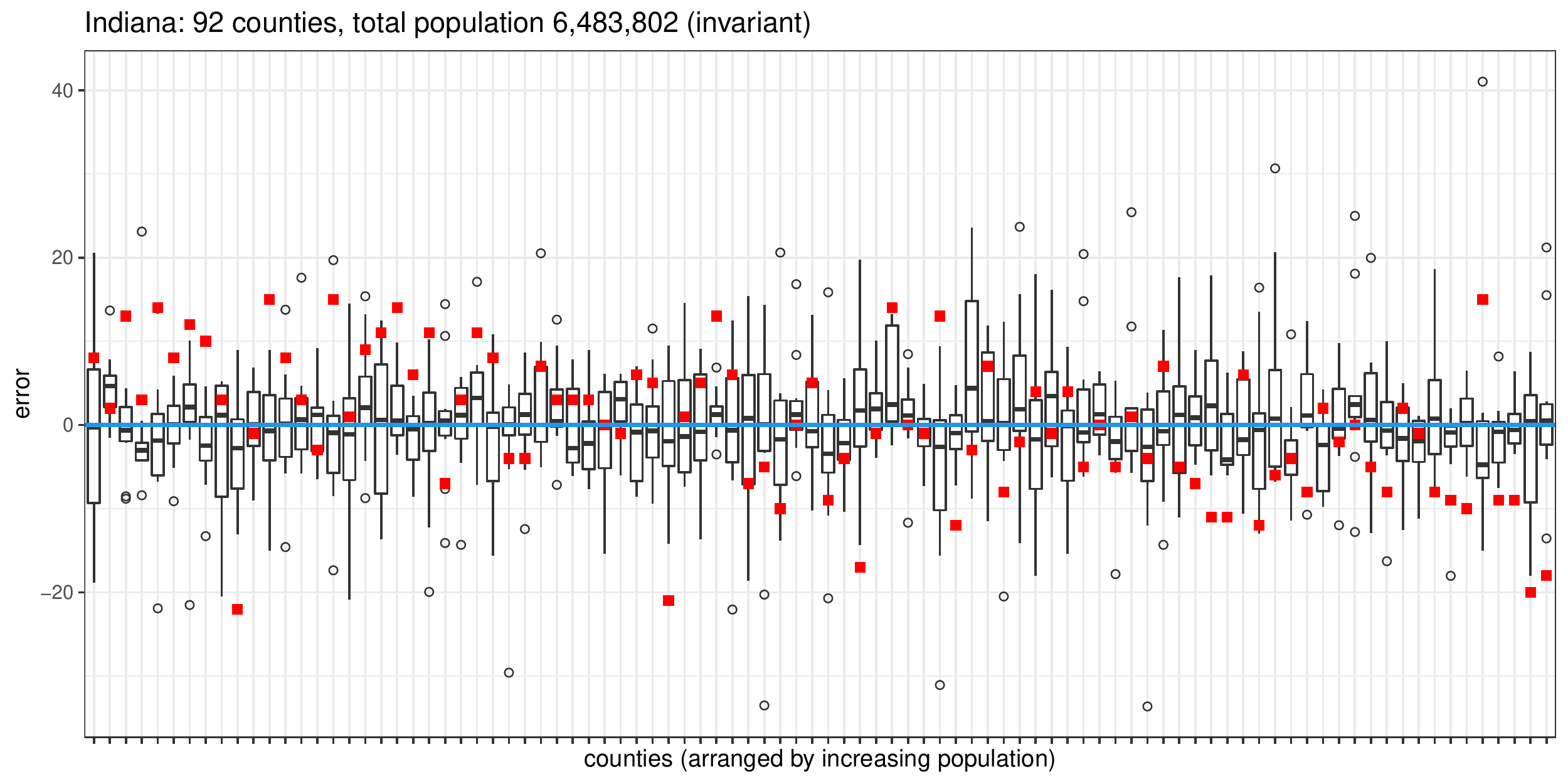}

\includegraphics[width = .47\textwidth]{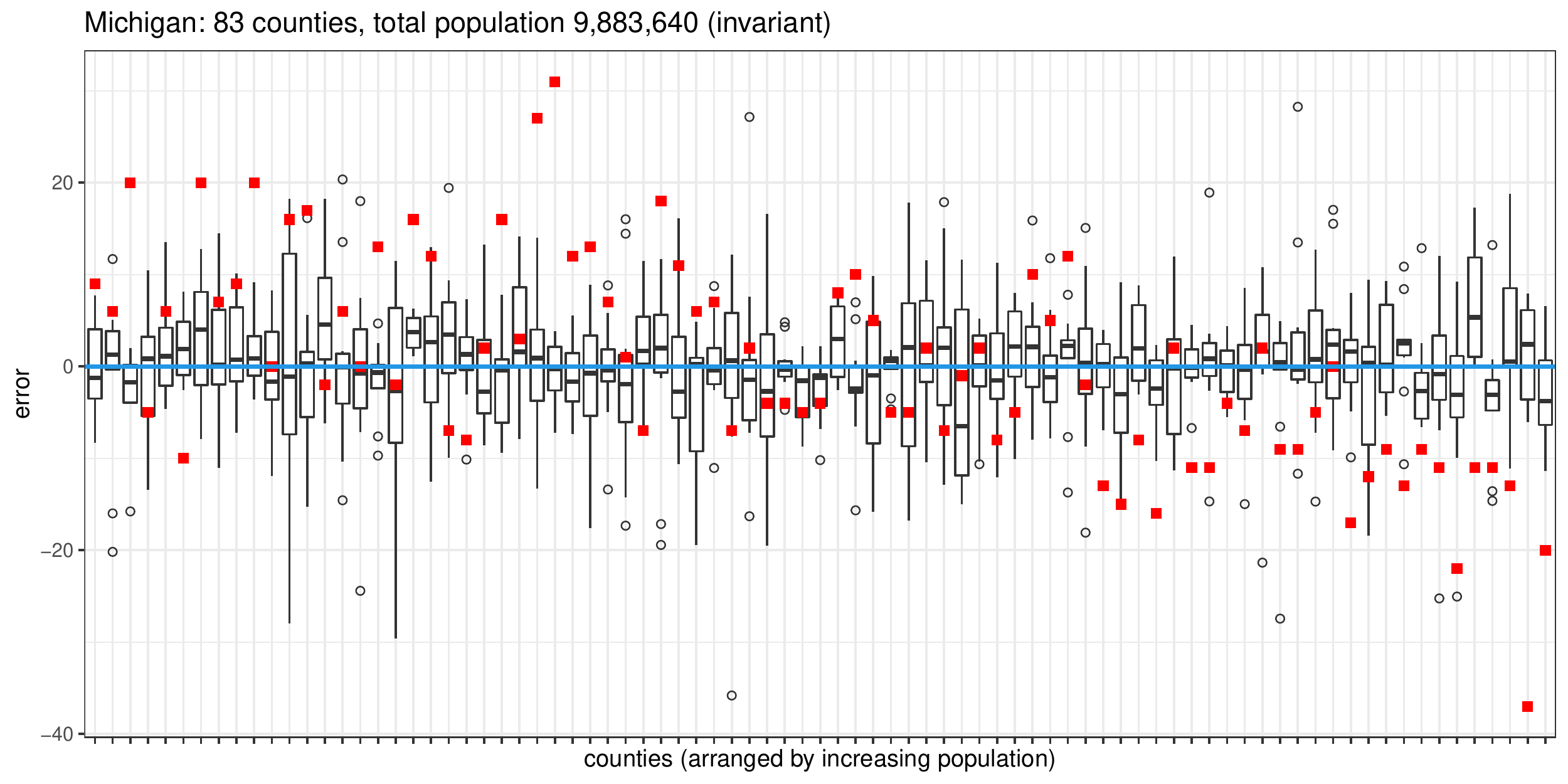}
\includegraphics[width = .47\textwidth]{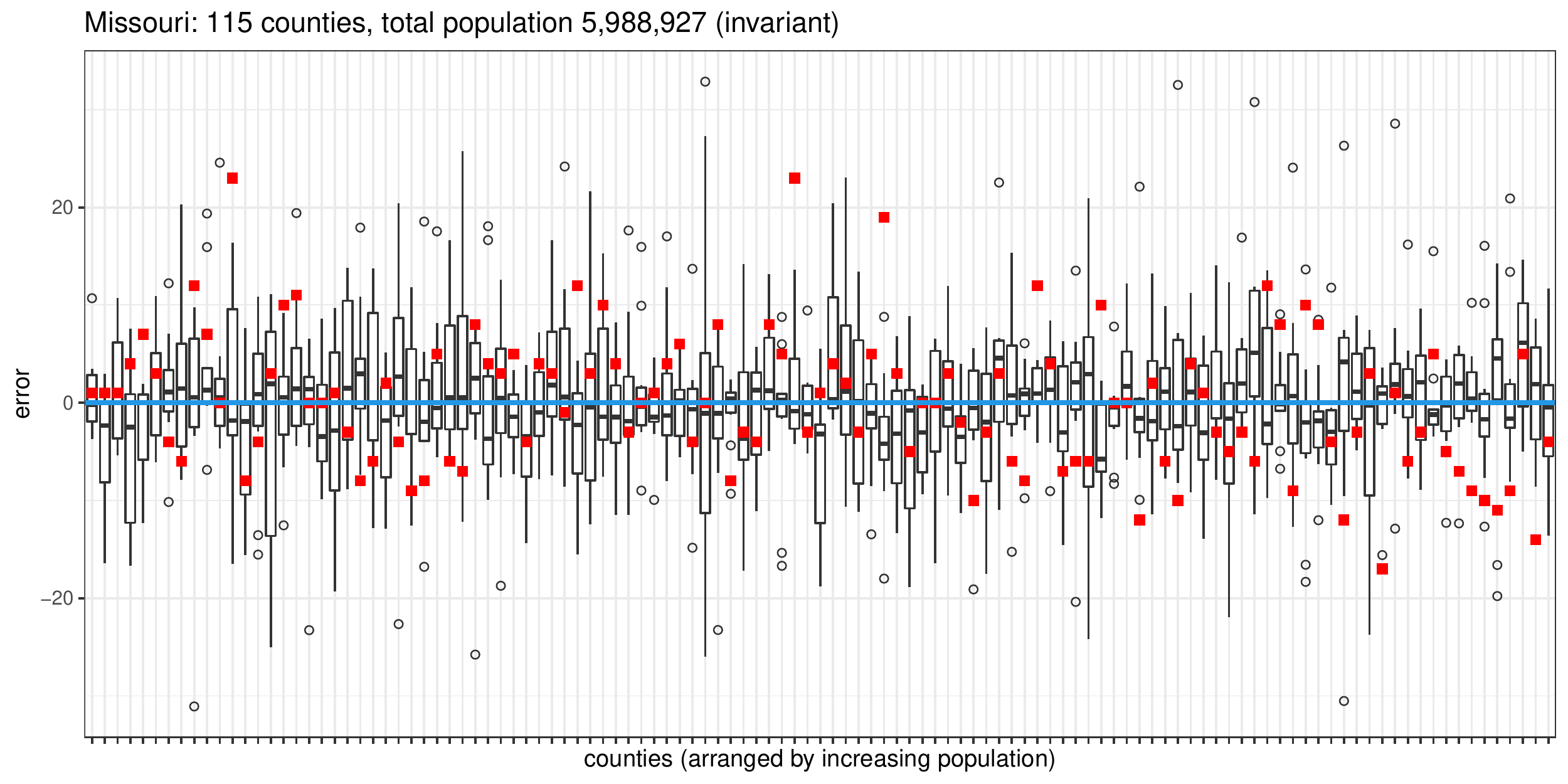}

\includegraphics[width = .47\textwidth]{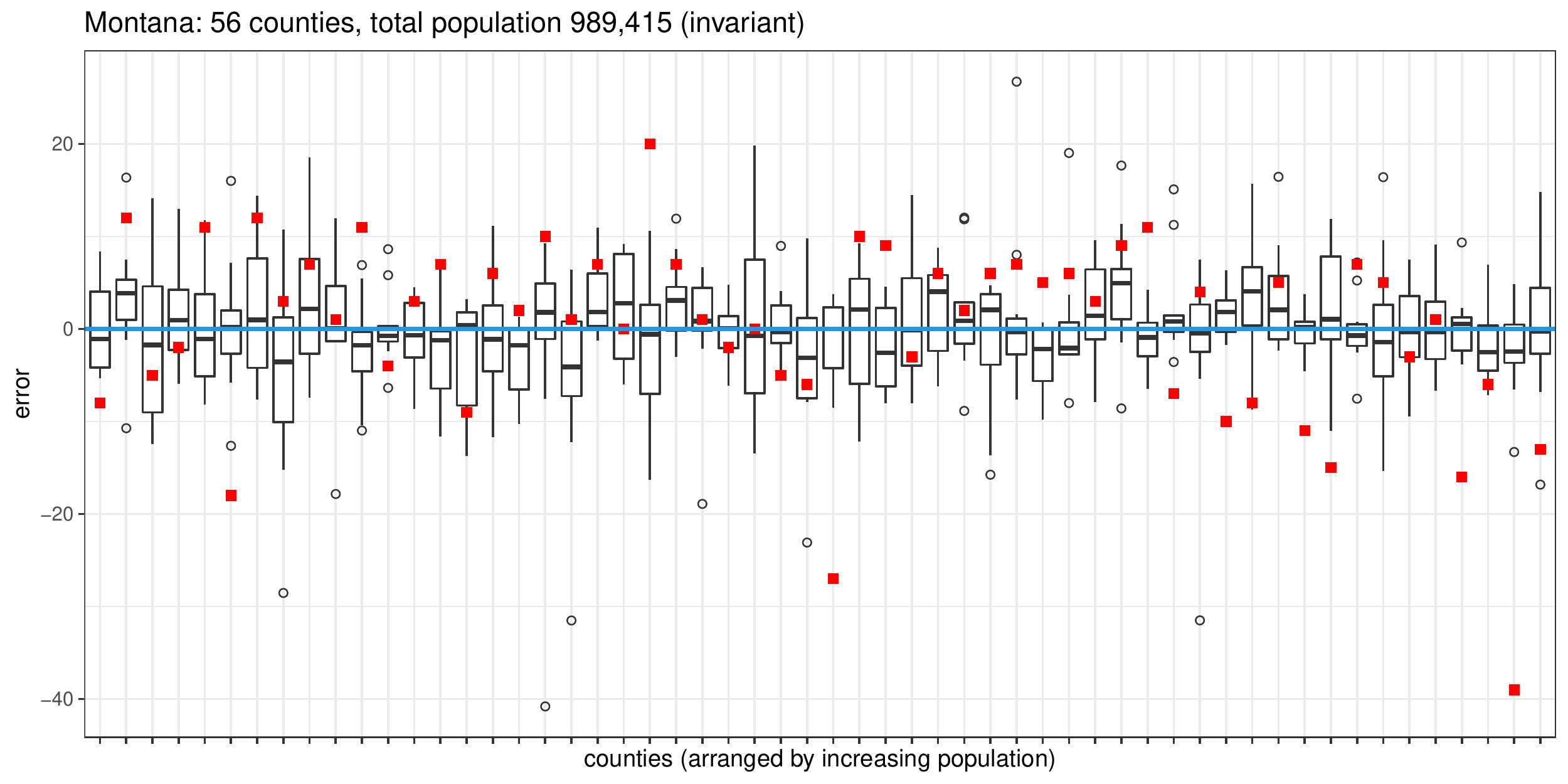}
\includegraphics[width = .47\textwidth]{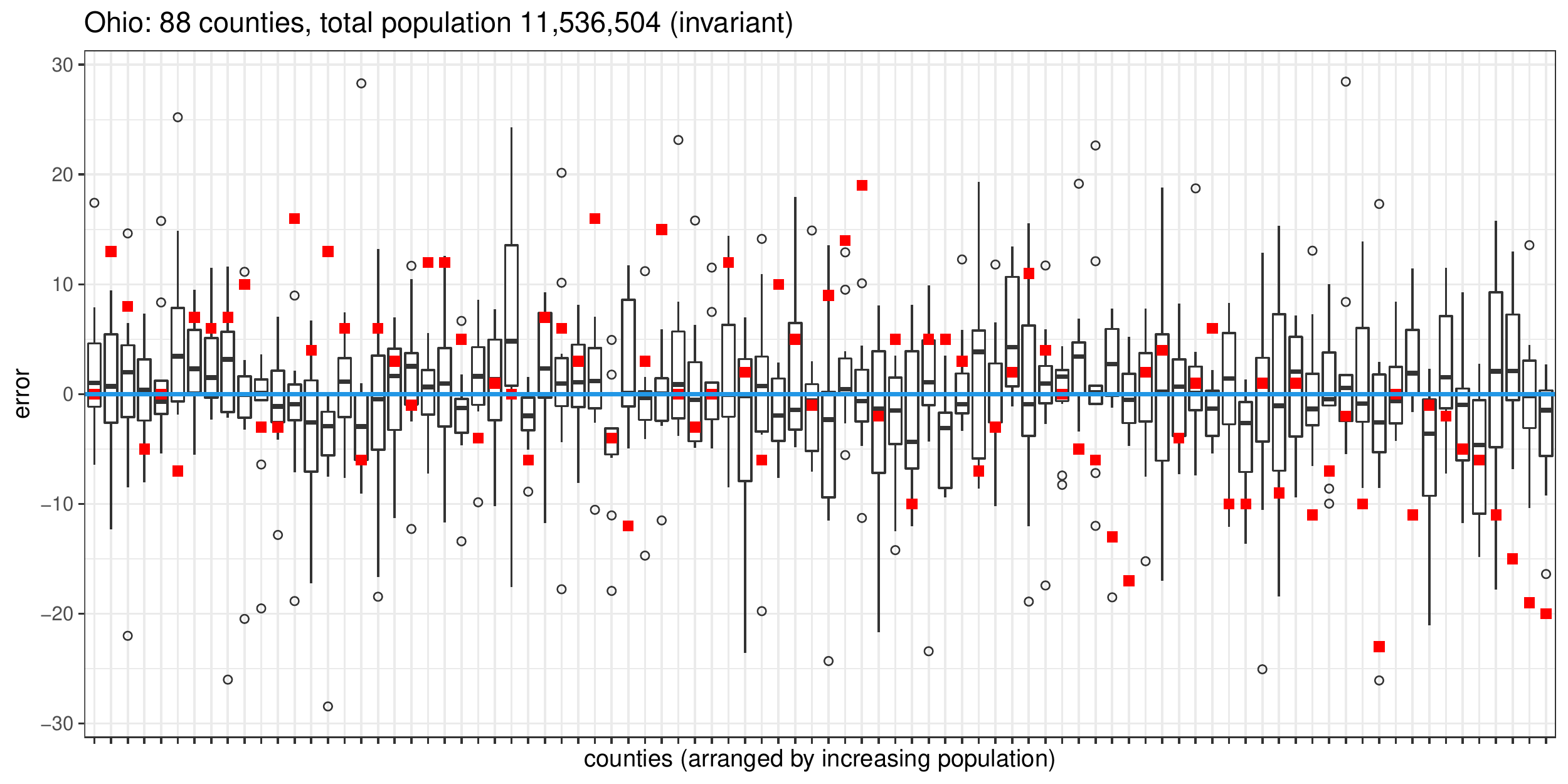}

\includegraphics[width = .47\textwidth]{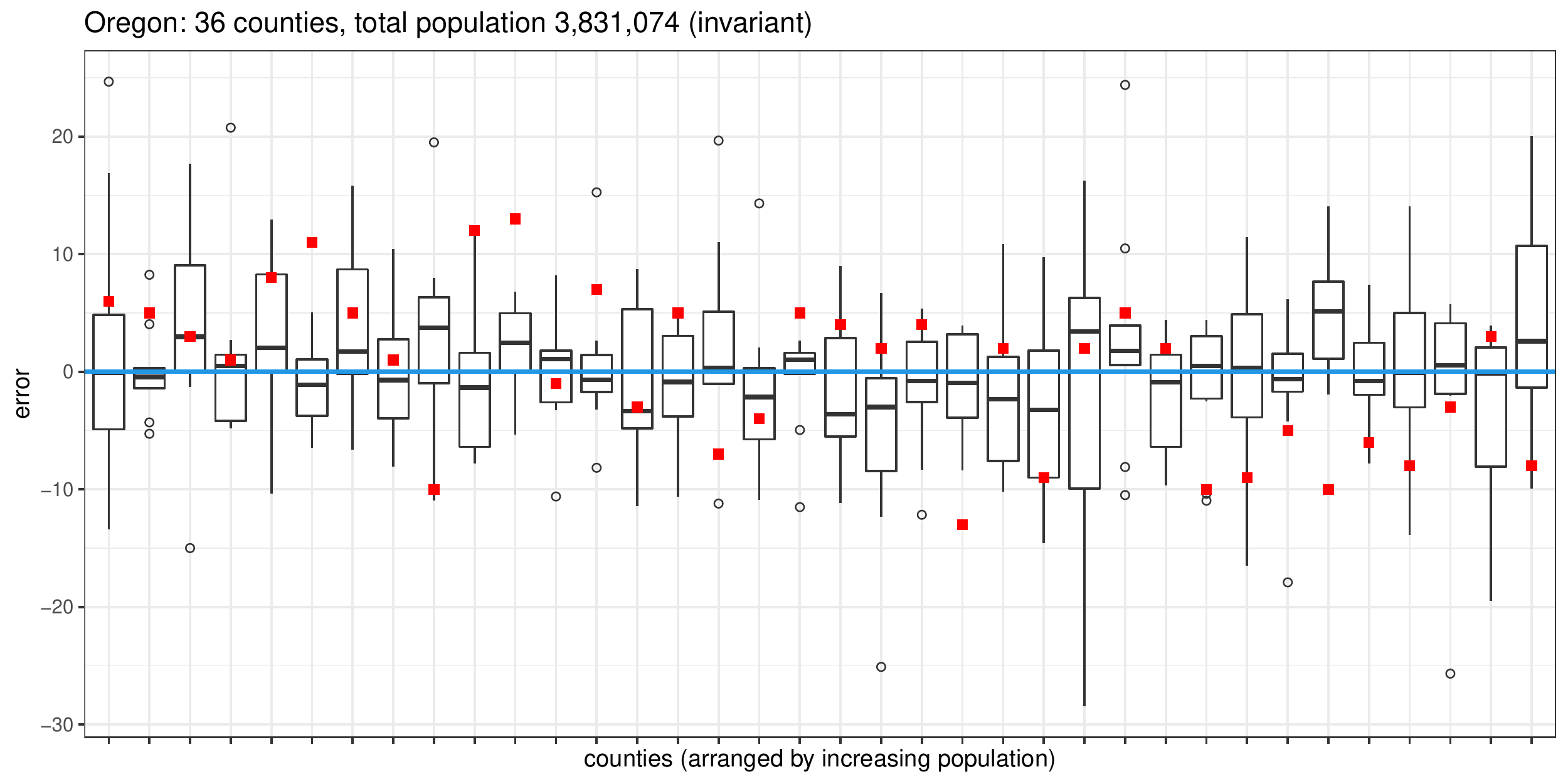}
\includegraphics[width = .47\textwidth]{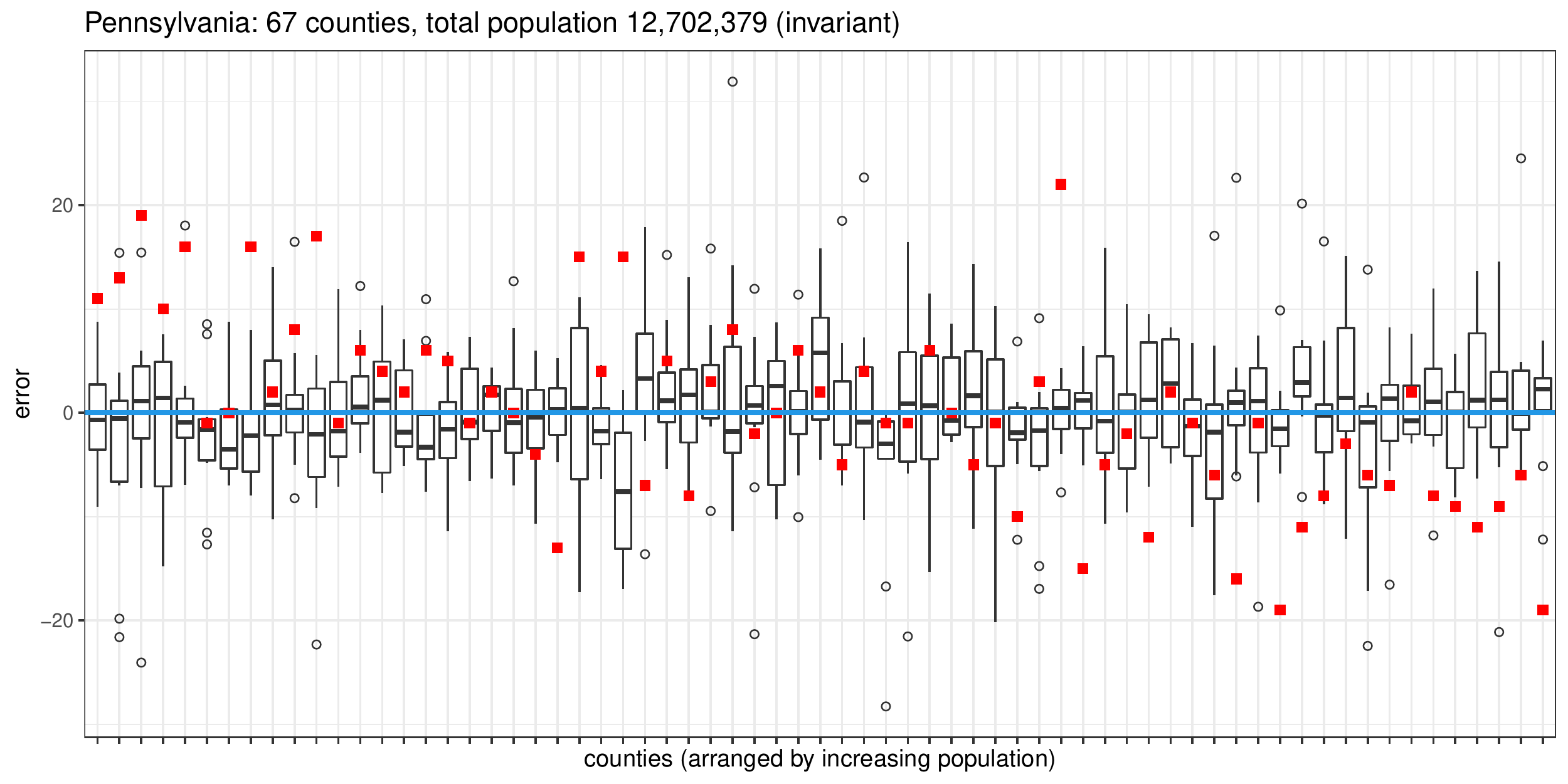}

\includegraphics[width = .47\textwidth]{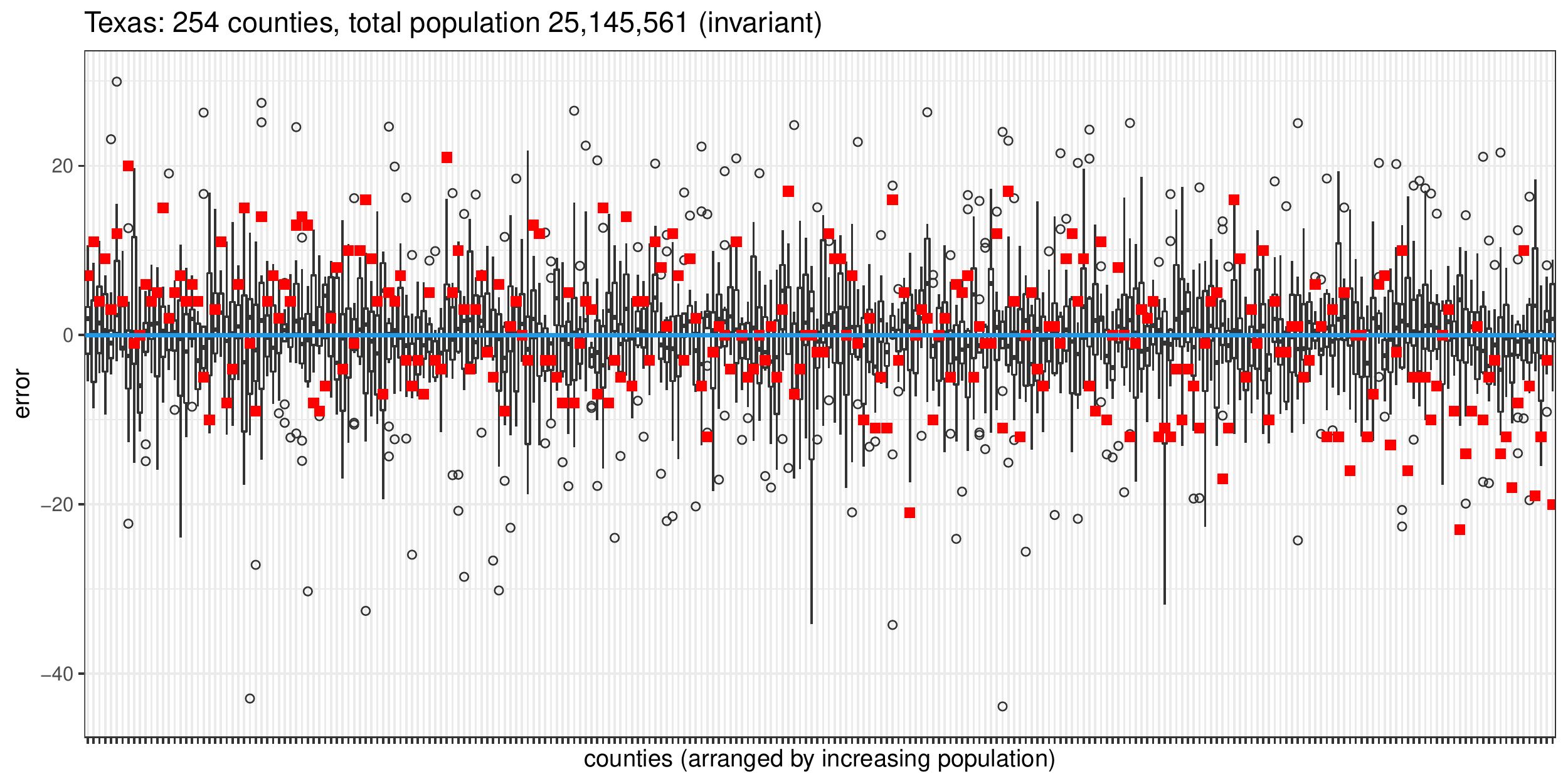}
\includegraphics[width = .47\textwidth]{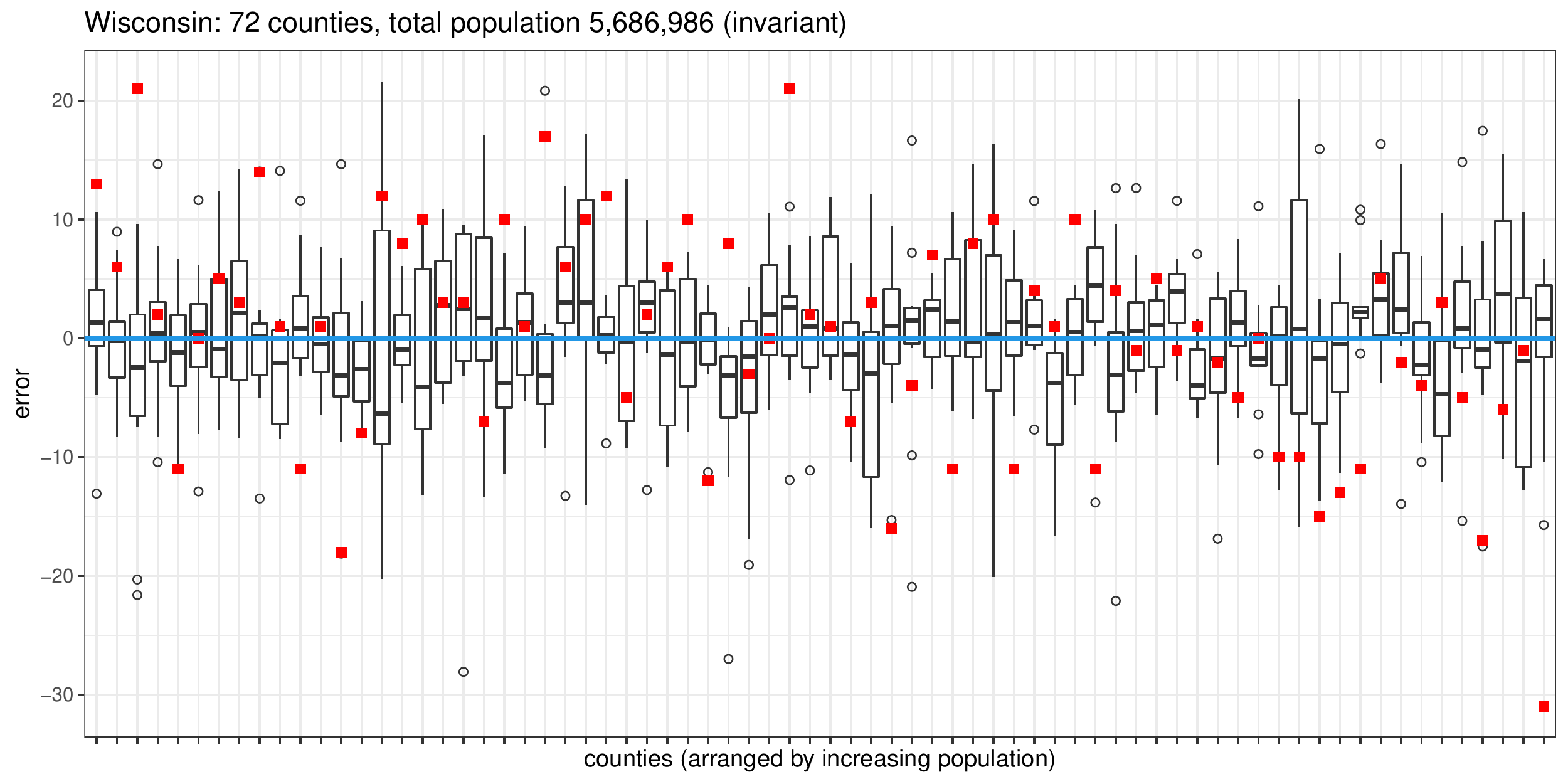}
\end{center}
\caption{\label{fig:census_sdp_county2}  Projected Laplace $(\epsilon,0)$-induced subspace differentially private mechanism of~\Cref{cor:projected_laplace} (boxplots; 10 runs) versus DAS errors (red squares), for states for which the TopDown algorithm incurred decidedly negatively biased errors in county-level counts. See Section~\ref{sec:intro}, Figure~\ref{fig:census_trend} and Section~\ref{sec:example} for more details).}	
\end{figure*}

\begin{sidewaysfigure}
    \centering
    % {\color{white} \rule{0.75\textheight}{0.4\textheight}}
    \includegraphics[width = \textheight]{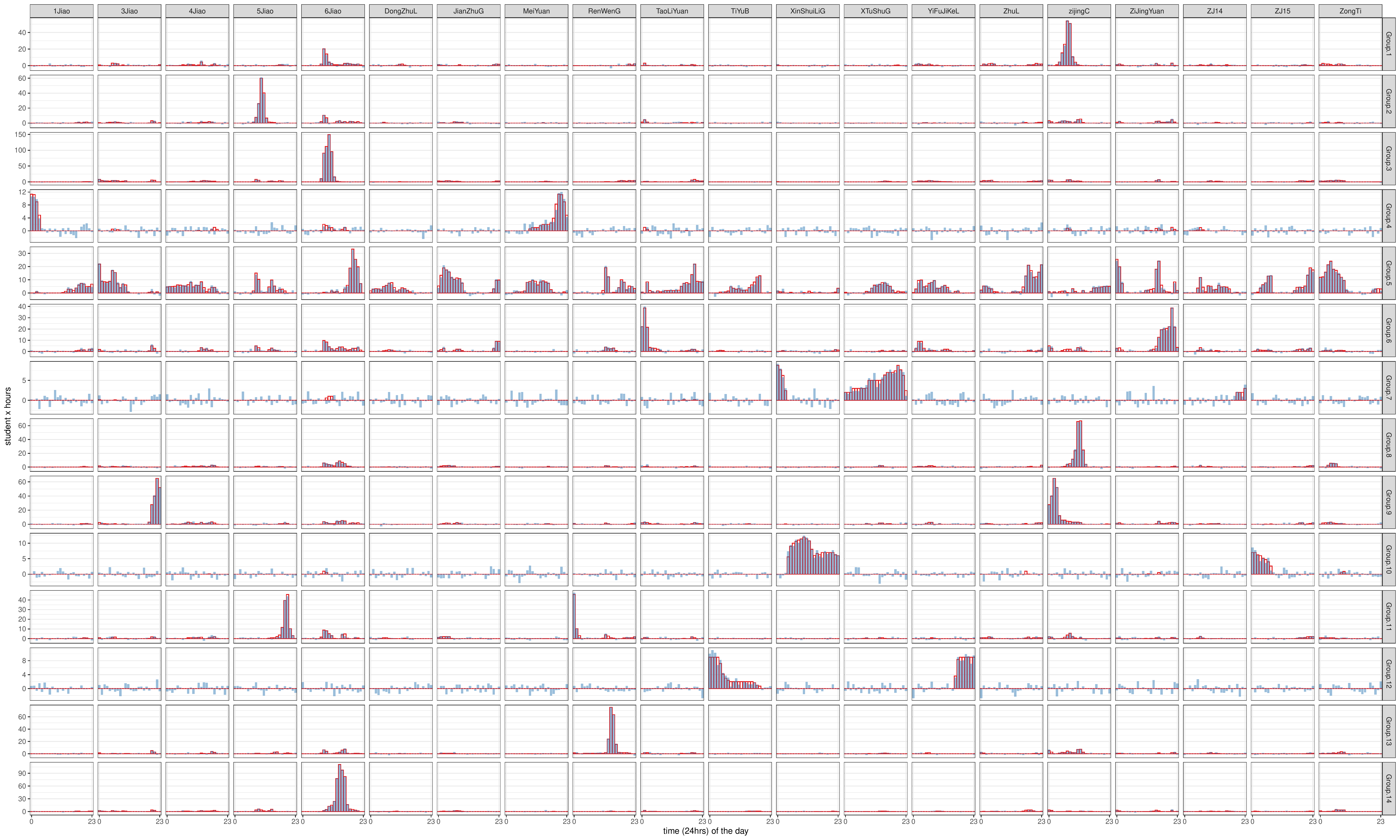}
    \caption{\label{fig:tsinghua} Hourly person-hour counts per group and building, subject to invariants 1) the total number of person-hours spent at each building every hour from all groups, and 2) the total number of person-hours spent at each building by every group over 24 hours. The spatio-temporal query is  $6720$ dimensional, subject to a $740$-dimensional linear constraint. Red histograms are the confidential query values, and blue histograms represent one run of the projected Gaussian induced subspace differential privacy mechanism of~\Cref{cor:projected_gaussian} with standard scale.}
\end{sidewaysfigure}

\end{document}